\documentclass[acmsmall,balance=false]{acmart}
\usepackage{pbalance} % balances the last page automatically
\shrinkLastPage{355pt} % manual balance amount for the last page

\usepackage[utf8]{inputenc}

\usepackage{caption}
\usepackage{amsthm}
\usepackage{booktabs}
\usepackage{caption}

\usepackage{tikz}
\usepackage{tabularx}
\usepackage{booktabs}
\usepackage{tabulary}
\usepackage{multirow}
\usepackage{makecell}
\usepackage{subcaption}
\usetikzlibrary{shapes.geometric}

\usepackage{xcolor}
\usepackage{tabularx}
\usepackage{graphicx}
\usepackage{textcomp}
\usepackage{listings}
\usepackage{array}
\usepackage{flushend}
\usepackage{framed}
\usepackage{amsmath,amsfonts,multirow,stmaryrd,xspace,enumitem}
\usepackage{mathtools} % % amsmath 的增强版, bsmallmatrix依赖
\usepackage{amsthm}
\usepackage{caption}
\usepackage{pgfplots}
\pgfplotsset{compat=newest}
\usepackage{ellipsis, ragged2e, marginnote}
\usepackage[]{hyperref}
\usepackage{threeparttable}
\usepackage{marvosym}
\usepackage{balance}
\usepackage[linesnumbered,ruled,vlined]{algorithm2e}
\usepackage{bbm}
\usepackage{lineno}
\SetCommentSty{footnotesize}

\usepackage{booktabs}
\usepackage{multirow}

\newtheorem{theorem}{Theorem}
\newtheorem{lemma}{Lemma}

\newtheorem{definition}{Definition}

\def\boxit#1{\vbox{\hrule\hbox{\vrule\kern4pt
  \vbox{\kern1pt#1\kern1pt}
\kern2pt\vrule}\hrule}}

\usepackage{array}
\newcolumntype{?}{!{\vrule width 1pt}}

\usepackage{amsfonts}
\usepackage{textcomp}
\usepackage{xcolor}
\def\BibTeX{{\rm B\kern-.05em{\sc i\kern-.025em b}\kern-.08em
    T\kern-.1667em\lower.7ex\hbox{E}\kern-.125emX}}

\usepackage{amsopn}
 \usepackage{courier}
 \usepackage{multicol}
 \usepackage{verbatim}
 \usetikzlibrary {arrows.meta}
 \usepackage[dvipsnames]{xcolor}

\newcommand{\partitle}[1]{\medskip\noindent\textbf{#1.}}%

\def\eg{\textit{e.g.}, }
\def\ie{\textit{i.e.}, }

\newcommand{\func}[1]{\textsf{#1}}
\definecolor{darkBlue}{RGB}{0, 0, 115}

\newcommand{\darkBlue}[1]{\textcolor{darkBlue}{#1}}

\colorlet{PURPLE}{purple}
\colorlet{BROWN}{brown}
\colorlet{ORANGE}{orange}
\colorlet{BLUE}{blue}

\usepackage[noend]{algpseudocode}

\renewcommand\footnotetextcopyrightpermission[1]{} % removes footnote with conference information in first column
\AtBeginDocument{%
  \providecommand\BibTeX{{%
    Bib\TeX}}}

\setcopyright{cc}
\setcctype{by --or -- by-nc-nd}
\acmJournal{PACMMOD}
\acmYear{2026} \acmVolume{4}
\acmNumber{6} \acmArticle{367}
\acmMonth{12} \acmPrice{}
\acmDOI{10.1145/3856362}
\begin{document}

%% Include the 3 required article and appropriate history dates in your .tex file after after \begin{document} command and before the \maketitle
\received{17 April 2026}
\received[revised]{19 August 2026}
\received[accepted]{12 September 2026}

%%
%% The "title" command has an optional parameter,
%% allowing the author to define a "short title" to be used in page headers.
\title{Product Costing in Data Pipelines}

%%
%% The "author" command and its associated commands are used to define
%% the authors and their affiliations.
%% Of note is the shared affiliation of the first two authors, and the
%% "authornote" and "authornotemark" commands
%% used to denote shared contribution to the research.

% \author{Ben Trovato}
% \authornote{Both authors contributed equally to this research.}
% \email{trovato@corporation.com}
% \orcid{1234-5678-9012}
% \author{G.K.M. Tobin}
% \authornotemark[1]
% \email{webmaster@marysville-ohio.com}
% \affiliation{%
%   \institution{Institute for Clarity in Documentation}
%   \city{Dublin}
%   \state{Ohio}
%   \country{USA}
% }

% \author{Anonymous author(s)}

\author{Li Peng}
\affiliation{
  \institution{State Key Laboratory of Blockchain and Data Security, Zhejiang University; Alibaba Group}
  \country{China}
}
\email{jerry.pl@alibaba-inc.com}

\author{Jie You}
\affiliation{
  \institution{Alibaba Group}
  \country{China}
}
\email{youjie.yj@alibaba-inc.com}

\author{Feng Shi}
\affiliation{
  \institution{Central South University}
  \country{China}
}
\email{fengshi@csu.edu.cn}

\author{Yihang Wu}
\affiliation{
  \institution{State Key Laboratory of Blockchain and Data Security, Zhejiang University}
  \country{China}
}
\email{yhwu\_is@zju.edu.cn}

\author{Jianxin Wang}
\affiliation{
  \institution{Central South University}
  \country{China}
}
\email{jxwang@mail.csu.edu.cn}

\author{Weiran Liu}
\affiliation{
  \institution{Alibaba Group}
  \country{China}
}
\email{weiran.lwr@alibaba-inc.com}

\author{Xinyu Peng}
\authornote{Corresponding author.}
\affiliation{
  \institution{Zhejiang University; Alibaba Group}
  \country{China}
}
\email{steven.pengxy@alibaba-inc.com}

\author{Junhua Wang}
\affiliation{
  \institution{Alibaba Group}
  \country{China}
}
\email{junhua.wjh@alibaba-inc.com}

\author{Lei Zhang}
\affiliation{
  \institution{Alibaba Group}
  \country{China}
}
\email{zongchao.zl@taobao.com}

\author{Lin Qu}
\affiliation{
  \institution{Alibaba Group}
  \country{China}
}
\email{xide.ql@taobao.com}

\author{Xiaohui Yu}
\affiliation{
  \institution{York University}
  \country{Canada}
}
\email{xhyu@yorku.ca}

\author{Kui Ren}
\affiliation{
  \institution{State Key Laboratory of Blockchain and Data Security, Zhejiang University}
  \country{China}
}
\email{kuiren@zju.edu.cn}

\author{Jinfei Liu}
\affiliation{
  \institution{State Key Laboratory of Blockchain and Data Security, Zhejiang University; Hangzhou High-Tech Zone (Binjiang) Blockchain and Data Security Research Institute}
  \country{China}
}
\email{jinfeiliu@zju.edu.cn}

%%
%% By default, the full list of authors will be used in the page
%% headers. Often, this list is too long, and will overlap
%% other information printed in the page headers. This command allows
%% the author to define a more concise list
%% of authors' names for this purpose.
\renewcommand{\shortauthors}{Li Peng et al.}

%%
%% The abstract is a short summary of the work to be presented in the
%% article.
\begin{abstract}
% In data-intensive enterprises, data pipelines are essential for transforming raw data into valuable data products. To provide a sound basis for data pricing, enterprises usually perform \textit{product costing}, which apportions the total cost of all resources (\eg CPU, Memory) consumed within pipelines to data products. However, product costing in data pipelines faces significant challenges not encountered in traditional physical pipelines, including the non-rivalrous nature of data that complicates the costing formulation, the prevalence of cyclic dependencies that distort the accuracy of results, and the massive scale of pipelines that leads to computational intractability.

% Large data-intensive enterprises spend millions of dollars monthly running data pipelines. Apportioning this cost back to individual products, known as \textit{product costing}, is a prerequisite for data pricing and pipeline optimization, yet it is fundamentally more difficult than costing in physical manufacturing. These changelles arise from the unique characteristics of data pipelines, including the non-rivalrous nature of data that complicates the costing formulation, the prevalence of cyclic dependencies that distort the accuracy of results, and the massive scale of pipelines that leads to computational intractability.

Large data-intensive enterprises spend millions of dollars monthly running data pipelines. Apportioning this cost back to individual products, known as \textit{product costing}, is a prerequisite for data pricing and cost recovery. However, this process is fundamentally more challenging than physical manufacturing costing due to the unique characteristics of data pipelines, including the non-rivalrous nature of data that complicates the costing formulation, the prevalence of cyclic dependencies that distort the accuracy of results, and the massive scale of pipelines that leads to computational intractability.

In this paper, for the first time, we formalize the problem of product costing in data pipelines as a \textit{manufacturing cost of data products (MCDP) problem} to overcome the non-rivalrous nature of data. To solve MCDP, we propose \func{CostApp}, a hybrid algorithm that achieves both exactness and scalability, even for cyclic and massive-scale pipelines. \func{CostApp} performs a dimensionality reduction by isolating cyclic dependencies via a feedback arc set (FAS). It aggregates costs onto a compact set of key vertices using a scalable iterative approach, transforming the intractable computational problem into a reduced linear system that can be solved exactly and efficiently by a basic matrix-based method. Experiments on both real-world data and synthetic data show that \func{CostApp} scales to process a production pipeline with over 28 million entities within an hour while preserving exactness, supporting pipelines up to $10^2\times$ larger than the direct matrix-based baseline.
\end{abstract}

%%
%% The code below is generated by the tool at http://dl.acm.org/ccs.cfm.
%% Please copy and paste the code instead of the example below.
%%
\begin{CCSXML}
<ccs2012>
   <concept>
       <concept_id>10002951.10002952.10003219.10003215</concept_id>
       <concept_desc>Information systems~Extraction, transformation and loading</concept_desc>
       <concept_significance>500</concept_significance>
       </concept>
 </ccs2012>
\end{CCSXML}

\ccsdesc[100]{Information systems~Extraction, transformation and loading}

% \ccsdesc[500]{Do Not Use This Code~Generate the Correct Terms for Your Paper}
% \ccsdesc[300]{Do Not Use This Code~Generate the Correct Terms for Your Paper}
% \ccsdesc{Do Not Use This Code~Generate the Correct Terms for Your Paper}
% \ccsdesc[100]{Do Not Use This Code~Generate the Correct Terms for Your Paper}

%%
%% Keywords. The author(s) should pick words that accurately describe
%% the work being presented. Separate the keywords with commas.
\keywords{Data Pipelines, Data Costing, Cost Apportionment, Data Pricing}

%%
%% This command processes the author and affiliation and title
%% information and builds the first part of the formatted document.
\maketitle
% \linenumbers

\section{Introduction}
\label{sec:intro}

Over the past decades, the pervasive adoption of data-driven decision-making has established data as a critical asset \cite{jones2024ai,dailydata2024}. With both data volume and demand rising sharply, deriving value from data increasingly depends on the ability to process and refine it at scale. To transform large amounts of raw data into valuable data products, large data-intensive enterprises (\eg Google, Meta, and Alibaba) have built extensive data pipelines \cite{sun2023presto,airflow,alipipeline,jacques2025unified,szlang2025workload}. These pipelines ingest diverse data sources and coordinate complex production dependencies to create a wide array of valuable data products that support business-critical scenarios, from operational decision-making to machine learning applications \cite{sun2023presto,szlang2025workload,zhao2022understanding,zhao2024cedar,liu2023optimizing}.

\begin{figure}
    \centering
    \includegraphics[width=0.6\linewidth]{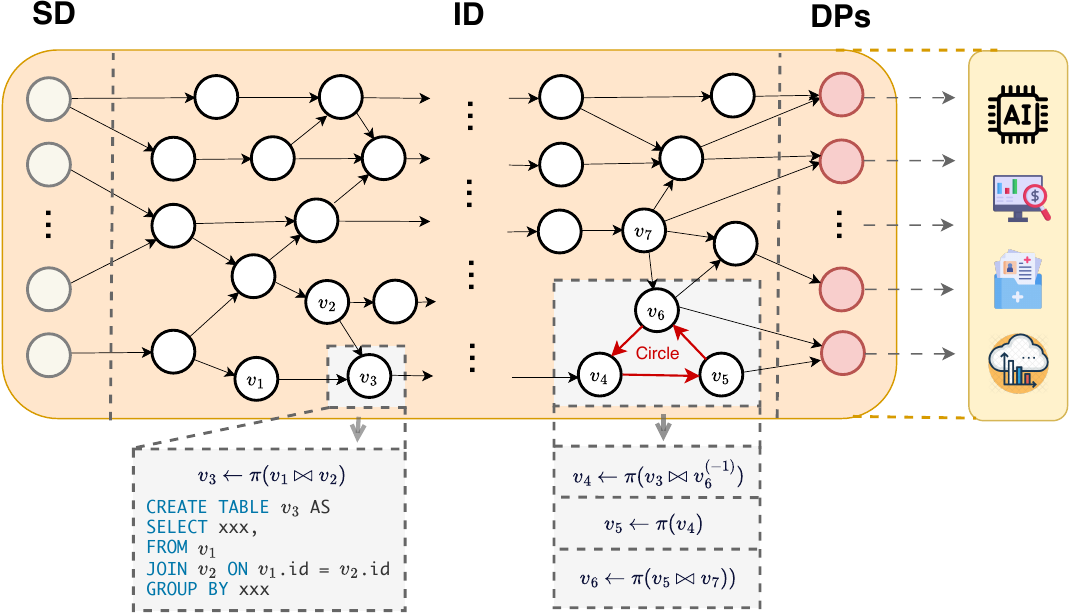}
    \vspace{-0.1in}
    \caption{An illustration of data pipelines. $\Join, \pi$ denotes a join  and project operation in SQL, respectively.
    % that integrates multiple input data entities into a target entity. 
    $v_i^{(-1)}$ represents the version of data entity $v_i$ from the previous day, while entities without superscripts denote the current day's version.}
    
    \label{fig:data_pipeline}
    \vspace{-0.2in}
\end{figure}

Data pipelines in data-intensive enterprises are characterized by \textit{consolidation} and \textit{periodicity}, incurring high resource cost and presenting requirements for costing. Data is \textit{non-rivalrous in consumption} \cite{jones2020nonrivalry,pei2020survey,zhang2024survey}, meaning it can be consumed repeatedly without depletion. To leverage this, enterprises integrate processing from numerous business domains into a \textit{consolidated} digital supply chain, allowing frequently accessed data to be shared across multiple downstream products to maximize reuse value. Consolidated pipelines weave thousands of lineages together, creating a complex, many-to-many dependency graph.
As shown in Figure \ref{fig:data_pipeline}, source data (SD) is progressively refined through transformations into intermediate data (ID) and, finally, into valuable data products (DPs) ready for use across various businesses. These transformations include SQL operations (\eg $v_3 \leftarrow \pi (v_1 \Join v_2)$ in Figure \ref{fig:data_pipeline}), Map-reduce jobs or machine learning tasks (\eg feature extraction). These consolidated pipelines consume massive resources (\eg, CPU, Memory), presenting requirements for cost accounting.
Furthermore, these pipelines operate with workload \textit{periodicity}, typically executing recurrently (\eg daily extract-transform-load (ETL) jobs) to keep data up-to-date. This continuous and cumulative resource consumption means that even modest unit costs aggregate into substantial financial expenditures over time. For example, at Alibaba, data pipelines incur resource costs equivalent to millions of dollars monthly, a financial burden that continues to escalate as illustrated in Figure~\ref{fig:graph_growth}. Consequently, the interplay of complex consolidated dependencies and long-term periodic consumption makes cost visibility a necessity for operating data pipelines.

\begin{figure}[t!]
    \centering
    \begin{minipage}{0.4\textwidth}
        \centering
        \includegraphics[width=1.0\linewidth]{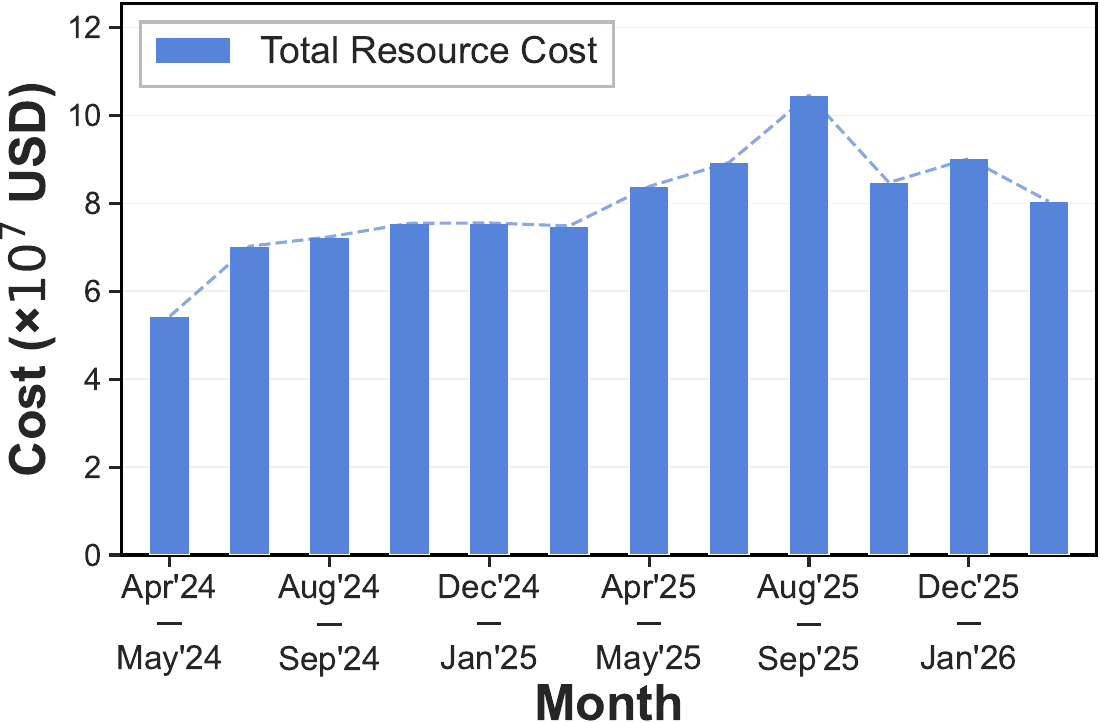}
        
        \vspace{-0.1in}
        \caption{Resource cost of data pipelines in Alibaba.} 
        \label{fig:graph_growth}
        \vspace{-0.2in}
    \end{minipage}
    \hspace{0.2in}
    \begin{minipage}{0.42\textwidth}
        \centering
        \includegraphics[width=1.0\linewidth]{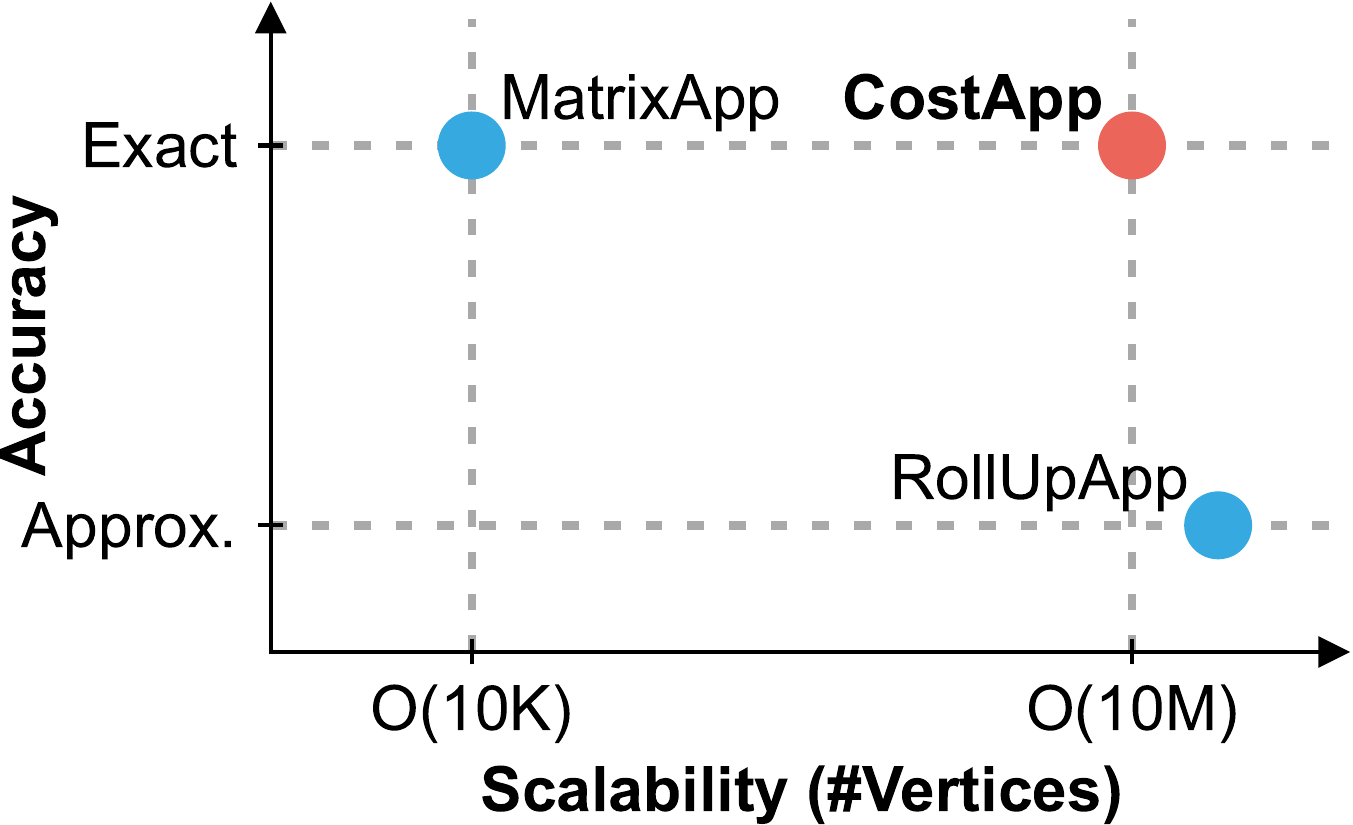}
        \vspace{-0.25in}
        \caption{Comparison of approaches.}
        \label{fig:general_comparison}
        
        \vspace{-0.05in}
    \end{minipage}
\end{figure}

% \begin{figure}[t!]
%     \centering
%     \begin{minipage}{0.23\textwidth}
%         \centering
%         \includegraphics[width=1.0\linewidth]{figs/graph_growth.pdf}
        
%         \vspace{-0.15in}
%         \caption{Growth of resource cost of data pipelines in Alibaba.} 
%         \label{fig:graph_growth}
%         \vspace{-0.2in}
%     \end{minipage}
%     \hfill
%     \begin{minipage}{0.23\textwidth}
%         \centering
%         \includegraphics[width=1.0\linewidth]{figs/general_comparison.pdf}
%         \vspace{-0.25in}
%         \caption{Approaches for product costing in data pipelines.}
%         \label{fig:general_comparison}
%         \vspace{-0.2in}
%     \end{minipage}
% \end{figure}

While resource consumption is distributed across all data entities (SD, ID, DPs), business value is delivered \textit{exclusively} through the final DPs (via monetization or usage). Thus, DPs constitute the natural unit for financial accountability. We term this \textit{product costing in data pipelines}, defined as apportioning the total costs of all resources consumed within pipelines to final DPs. This costing method provides the necessary basis for data pricing \cite{cong2022data,pei2020survey} and cost recovery, where the exactness of the results is critical. Particularly under cost-based pricing models \cite{pei2020survey}, inaccuracies in cost attribution at the DP level can propagate to pricing decisions, potentially leading to systematic mispricing. When aggregated across millions of entities and millions of dollars in costs, such deviations may translate into non-trivial financial losses. Despite its importance to modern data-intensive enterprises, this problem remains largely unexplored in the existing literature.

\partitle{Challenges} Product costing in data pipelines presents unique challenges not encountered in physical pipelines, as summarized in Table \ref{tab:challenges}.

First, the \textit{non-rivalrous nature of data consumption} \cite{jones2020nonrivalry,pei2020survey,zhang2024survey} fundamentally challenges cost modeling. In physical manufacturing pipelines, intermediate outputs are typically rivalrous physical goods. The costs embedded in a unit of a physical good are fixed (typically as its marginal cost) and fully transferred downstream upon its consumption. For example, when a car part is used in assembling a vehicle, its cost is fully attributed to that vehicle. Thus, consumption-based costing in physical pipelines naturally upholds the principle of \textit{completeness} \cite{completeness1} (or \textit{break-even} \cite{young1985cost}), meaning \textit{all} consumed resources are fully accounted for within the costing \textit{without} duplication. However, in data pipelines, data is not depleted when used, resulting in (nearly) zero marginal cost for each additional consumption \cite{pei2020survey}. Applying traditional costing creates a dilemma: allocating cost based on marginal cost \cite{shapiro1998versioning} leads to severe under-costing, while charging the full cost to every consumer results in massive over-costing, both of which violate completeness. This difficulty challenges the formulation of a cost model that guarantees the completeness of costing.

Second, the workload periodicity of data pipelines leads to a time-versioned nature of data entities, inevitably introducing \textit{cyclic dependencies} in the lineage graph. For instance, in Figure \ref{fig:data_pipeline},
the generation of $v_4$ relies on joining $v_3$ with $v_6^{(-1)}$, the version of $v_6$ from the previous day ($v_4 \gets \pi(v_3 \Join v_6^{(-1)})$). This structural dependency among  $v_4,v_5,v_6$ forms a feedback loop or cycle in the lineage graph. Such cycles are pervasive in production pipelines. They cause basic costing methods based on cost roll-up strategy \cite{costrollup2024} (see \S\ref{sec:simple_app}) to fail, as costs circulate indefinitely within loops rather than being fully attributed to final data products. Without proper handling, this leads to significant calculation errors that may mislead pricing strategies and incur revenue losses.

% Second, the workload periodicity of data pipelines leads to a time-versioned nature of data entities, inevitably introducing \textit{cyclic dependencies} in the lineage graph. For instance, a daily user profile update often requires joining new behavioral data with its own previous snapshot, structurally forming a self-dependency or cycle in the lineage graph. Beyond self-loops, complex multi-hop cycles also arise, such as the loop formed by vertices $v_4$, $v_5$, and $v_6$ in Figure \ref{fig:data_pipeline}, where $v_4$ relies on $v_6^{(-1)}$, the version of $v_6$ from the previous day ($v_4 \gets \pi(v_3 \Join v_6^{(-1)})$). Such cycles are pervasive in production pipelines. They cause basic costing methods based on cost roll-up strategy \cite{costrollup2024} (see \S\ref{sec:simple_app}) to fail, as costs circulate indefinitely within loops rather than being fully attributed to final data products. Without proper handling, this leads to significant calculation errors that may mislead pricing strategies and incurse revenue losses.

Third, the consolidation of data pipelines results in a \textit{massive scale} of data entities. While traditional physical supply pipelines typically manage a limited number of entities (\eg thousands of intermediate units), a consolidated data pipeline at large enterprises often comprises tens of millions of data entities with complex dependencies \eg 28 million vertices in Alibaba. At this scale, exact costing solutions may become computationally intractable. Specifically, direct algebraic solutions (see \S\ref{sec:matrix_app}) typically require matrix inversion with $O(|V|^3)$ time complexity. For graphs with millions of vertices, such cubic complexity is prohibitive. Therefore, a practical cost calculation method must be inherently scalable and designed to operate efficiently in a distributed environment.

\vspace{-0in}
\begin{table}[t!]
    \centering
    \vspace{-0.1in}
    \caption{Comparison of physical and data pipelines.}
    \vspace{-0.1in} 
    \resizebox{0.8\columnwidth}{!}{
        \begin{tabular}{p{3cm}|p{3cm}|p{3cm}|p{5cm}}
            \toprule
             & \textbf{Physical pipelines}  & \textbf{Data pipelines} &  \textbf{Unique challenges}\\
            \toprule

           % , and the massive scale of pipelines that leads to computational intractability.

            \textbf{Consumption feature of outputs} & rivalrous & non-rivalrous & consumption-based costing not applicable, complicates cost modeling \\
            \midrule
            \textbf{Type of dependencies} & acyclic & cyclic & result accuracy of cost roll-up strategy \cite{costrollup2024} is distorted \\
            \midrule
            \textbf{Scale} & small & large & exact costing solutions may be computationally intractable. \\ 

            \bottomrule
        \end{tabular}
    }
    \vspace{-0.2in}
    \label{tab:challenges} 
\end{table}

% \begin{figure}
%     \centering
%     \includegraphics[width=0.6\linewidth]{figs/general_comparison.pdf}
%     \caption{Approaches for product costing in data pipelines.}
%     \label{fig:general_comparison}
%     \vspace{-0.2in}
% \end{figure}

% \begin{figure}
%     \centering
%     \includegraphics[width=0.6\linewidth]{figs/graph_growth.pdf}
%     \caption{Growth of number of entites and system costs.}
%     \label{fig:graph_growth}
%     \vspace{-0in}
% \end{figure}

\partitle{Contributions} To address these challenges, this paper makes the following contributions:

First, to resolve the dilemma caused by non-rivalrous data consumption, we establish a cost model for data pipelines and formalize product costing as the MCDP problem. Unlike the consumption-based approach, we adopt a ratio-based cost model governed by the cost conservation property (Definition \ref{def:cost_conservation} in \S\ref{sec:problem_formulation}). This formulation guarantees completeness of product costing, ensuring precise cost apportionment to final data products without the under-costing or duplication pitfalls of traditional approaches. To initially solve the MCDP problem, we propose two basic solutions: \func{RollUpApp}, a distributed, cost roll-up-based algorithm that scales to large pipelines but only yields approximate results, and \func{MatrixApp}, which provides an exact solution via matrix inversion but is limited to small pipelines. These solutions highlight the fundamental trade-off between exactness and scalability, serving as building blocks for our hybrid approach.

Second, we propose \func{CostApp}, a hybrid algorithm that achieves both exactness and scalability by strategically integrating our basic methods. The core idea is to decouple the computational complexity by isolating cyclic dependencies. Specifically, \func{CostApp} first identifies a feedback arc set (FAS) and strategically inserts auxiliary \textit{sink} and \textit{source} vertices on these arcs to isolate these cycles. It then employs the scalable \func{RollUpApp} on the resulting acyclic graph to efficiently aggregate costs onto these auxiliary sink vertices and the final data products. This aggregation transforms the problem into a reduced linear system while preserving the final manufacturing cost of data products, which is then solved exactly using \func{MatrixApp}. \func{CostApp} effectively performs a \textit{dimensionality reduction}, shifting the bottleneck from inverting a massive $|V| \times |V|$ matrix (where $|V|$ is millions) to inverting a tiny matrix sized proportional to the feedback arcs ($<0.05\%$ of $|V|$, see \S\ref{sec:exp_realworld}). It allows us to solve the MCDP problem with exact results and high scalability. A comparison that highlights its strengths is outlined in Figure \ref{fig:general_comparison}.

Third, we implement \func{CostApp} and conduct comprehensive experiments on both real-world data from Alibaba and synthetic data. Results demonstrate that \func{CostApp} successfully reduces a computationally prohibitive problem to a solvable one, achieving both result exactness and high scalability that outperform baselines. It successfully conducts product costing for production pipeline graphs with over 28 million entities in an hour, and has been deployed in the production environments of Alibaba for costing and pricing data for years.

To the best of our knowledge, we are the first to formalize the problem of product costing in data pipelines and to propose a scalable, exact solution. Our contributions are summarized as follows:
\begin{itemize}
    \item We establish a data cost model for data pipelines and formulate product costing as the MCDP problem. Then, we propose two basic solutions for this problem: one exact for small pipelines and one approximate for large ones.
    \item We propose \func{CostApp}, a novel hybrid algorithm that achieves both exactness and scalability for the MCDP problem.
    \item We validate the superiority of \func{CostApp} through extensive experiments on both large-scale industrial and synthetic data. Our code is released at \url{https://github.com/jerryp1/CostApp}.
    % We release both the industrial dataset and our implementation as open-source resources (\url{https://anonymous.4open.science/r/CostApp-E50C/}) to facilitate further research.

\end{itemize}

% \partitle{Recovering data cost is important}
% % Why we need to recover data cost? Recovering data cost is crucial for organizations to understand the value of their data assets, optimize resource allocation, and ensure the sustainability of their data pipelines. By accurately assessing and recovering data costs, organizations can make informed decisions about data investments, improve operational efficiency, and enhance the overall value derived from their data products.

\section{Preliminaries}
\label{sec:preliminary}

In this section, we introduce the background knowledge of data pipelines and Pregel-based distributed computation.

% todo 说明我们的方法独立于pipeline中的具体计算框架
\subsection{Data Pipelines}

In this paper, we focus on \textit{consolidated} data pipelines that consolidate numerous data transformation tasks (\eg SQL queries, MapReduce jobs) across diverse business domains for centralized management. Such pipelines have been extensively deployed by major data-intensive enterprises \cite{sun2023presto,jacques2025unified,alipipeline,szlang2025workload}. Their primary objective is to generate high-quality DPs through a sequence of data transformation operations. Typically, these jobs execute in batch mode: reading source data from storage, performing transformations (\eg cleaning, joining, aggregation, and feature extraction), and persisting the results back to storage. Formally, we denote a processing operation as $o$, which consumes a set of input data entities to produce a single output data entity. For instance, regarding entity $v_3$ in Figure \ref{fig:data_pipeline}, $o$ corresponds to a SQL join-projection operation $v_3 \gets \pi(v_1\bowtie v_2)$.
Within these pipelines, three types of data entities are involved:

\begin{itemize}
    \item Source data (SD). The starting point of data flow. Typically, they represent raw data collected from users or migrated from other data stores.
    \item Intermediate data (ID). The entities that are the output of the computation process but are not yet ready to be used as a product. They are materialized to enable data reuse and reduce redundant computation.
    \item Data products (DPs). The final data entities that are ready to be used as products and will not be processed further. 
\end{itemize}

% 原版
% Each data entity $v_i$ is output by an operation $o$ that performs on a set of input data entities $V_i^{in}$, \ie $v_i \gets o(V_i^{in})$. $o$ denotes operations for generating $v_i$, $V^{in}_i \gets U_G^1(v_i)$. For example, for ID entity $v_3$ in Figure \ref{fig:data_pipeline}, $o$ represents a join-projection SQL operation $v_3 \gets \pi(v_1\bowtie v_2)$, where input set $V_3^{in} \gets \{v_1, v_2\}$. For DS entity $v_7$, $o$ represents a data collection operation $v_7 \gets \textsf{Collect()}$ where $\textsf{Collect()}$ collects raw data from outside sources, \eg a transactional database that records user daily behaviors \cite{}. In this case, $V_{7}^{in} \gets \emptyset $ since $v_7$ is a starting point of the pipeline.

\partitle{Model data pipelines as a weighted directed cyclic graph (DCG)}
To model the evolution of data in a structured and traceable manner, we represent the data pipelines using a data lineage \cite{ikeda2009data,backes2015data,cui2003lineage,jacques2025unified}, a graph-based model that captures how data progresses from its origin to its final form.
Specifically, the pipelines are modeled as an edge-weighted directed graph $G = (V, A, w)$, where

\begin{itemize}
    \item $V$ is a vertex set that denotes the set of data entities, such as SD, ID and DPs. The vertices with out-degree 0 (called \textit{leaves}) represent DPs. The ones with in-degree 0 (called \textit{roots}) represent SD, and the ones with both in-degree and out-degree at least 1 (called \textit{internal vertices}) represent ID. 
    \item $A$ is a directed arc set that represents the transformation flows that produce new data entities from existing ones.
    % \item $C: V \to \mathbb{R}^+$ is a function that assigns a cost to each vertex. The cost of vertex $v$ is initialized as the monetary resource usage during the execution of the operation that produces $v$, which we call the operational cost (see \S\ref{sec:problem_formulation}). $C(v)$ may change during cost propagation processes (see \S\ref{sec:simple_app}).
    \item $w: A \to [0, 1]$ is a function that assigns a static cost allocation ratio to each arc, as detailed in \S\ref{sec:problem_formulation}.
\end{itemize}

Costing is performed on a billing-period snapshot whose vertices represent \emph{logical data entities}, not individual runs. All versions of an entity produced within the period are represented by one vertex, and their measured resource costs are summed into that vertex's direct cost (see $\mathcal{C}_{dc}(v)$ in \S\ref{sec:problem_formulation}). For example, under monthly billing, the 30 daily versions of table $v$ contribute $\mathcal{C}_{dc}(v)=\sum_{d=1}^{30}\mathcal{C}_{dc}(v^{(d)})$ to the same vertex. Dependencies observed during the period form the snapshot's arcs. This snapshot therefore preserves the billed cost and dependency structure needed for allocation. As discussed in \S\ref{sec:intro}, a dependency on an earlier version appears as a feedback arc, so $G$ is modeled as a directed \textit{cyclic} graph (DCG). Other commonly used notations are summarized in Table \ref{tab:notations}.

% Crucially, $G$ is a directed \textit{cyclic} graph (DCG).  This necessitates the use of a DCG abstraction rather than a directed acyclic graph (DAG).  ; a dependency on a prior version therefore appears as a feedback arc.

\vspace{-0in}
\begin{table}[ht]

    \centering
    \vspace{-0in}
    \caption{Notation table.}
    \vspace{-1em}
    \resizebox{0.8\columnwidth}{!} {
    \begin{tabular}{c|c}
        \toprule
        \textbf{Notations}  & \textbf{Description}  \\
        \midrule
        $L(G),R(G)$ and $I(G)$ & leaves, roots, and internal vertices \\
        % $U_G(v)$ (resp., $D_G(v)$) & set of all upstream (resp., downstream) vertices of $v$ \\
        $U_G^1(v)$ (resp., $D_G^1(v)$) & set of all 1-hop upstream (resp., 1-hop downstream) vertices of $v$ \\
        $d^+_G(v)$ (resp., $d^-_G(v)$) & in-degree (resp., out-degree) of $v$  \\
        \bottomrule
    \end{tabular}
    }

    \vspace{-0.2in}
    \label{tab:notations}
\end{table}

% 这里要体现，pregel的实现也能很容易转换成其他实现方式，我们的平台选择不影响方案的通用性。
% 以下引用自Relationship Queries on Large graphs using Pregel
\subsection{Pregel-based Distributed Graph Computation}
% We process large-scale data graphs with millions of vertices and edges and therefore 

Our solutions adopt the vertex-centric, distributed Pregel model~\cite{malewicz2010pregel}. The graph is partitioned across workers, while each vertex stores only local state, such as an accumulated cost or a label, and exchanges messages with adjacent vertices. Execution proceeds in synchronized iterations called \emph{supersteps}. In each superstep, a user-defined \func{compute()} runs in parallel on every active vertex. It consumes messages sent in the preceding superstep, updates the vertex state, and sends messages for the next superstep. A vertex may call \func{voteToHalt}() when it has no work and is automatically reactivated by a later message. The job terminates when all vertices are inactive and no messages are in transit.

This model matches our computation directly. In \func{RollUpApp} (see \S\ref{sec:simple_app}), a message from $u$ to $v$ carries $w(u,v)C(u)$. Vertex $v$ sums all incoming messages and forwards the resulting cost. For a chain $u\!\to\!v\!\to\!p$ stored on different workers, $u$ sends its share in one superstep, $v$ aggregates it in the next, and the DP $p$ receives the final share in the following superstep. \func{FindFAS} follows the same message-driven process with labels instead of costs (see \S\ref{sec:find_fas}), while \func{GetMatrix} propagates keyed cost maps (see \S\ref{sec:get_matrix}). Thus, graph partitioning determines where each local computation runs but does not change the algorithm or its result.

\section{Problem Formulation and Basic Solutions}
\label{sec:problem_formulation}
In this section, we establish the data cost model in data pipelines (\S\ref{sec:data_cost}) and formalize product costing as the MCDP problem (\S\ref{sec:mcp_problem}). Following this, we introduce two basic solutions for the MCDP problem with different limitations (\S\ref{sec:basic_solutions}).
\subsection{Data Cost in Data Pipelines}
\label{sec:data_cost}
To account for the non-rivalrous nature of data, we establish a structured cost model for any data entity $v$, defining its manufacturing cost as the sum of its own direct cost and allocated upstream costs.

% 需要解释一下，实际中我们的方案本身就是一个周期性计算任务
\partitle{Direct cost (DC)}
As stated in \S\ref{sec:intro}, the manufacturing of data entities in pipelines relies on continuous processes involving periodic computations and ongoing storage. Consequently, the production and maintenance of any data entity $v$ incur direct resource expenditures, including CPU usage, memory usage, and storage usage \cite{lyu2022fine,zhang2024survey,srivastava2024saving,wu2019cloud}. We define the \textit{direct cost}, denoted as $\mathcal{C}_{dc}(v)$, as the quantified resource cost incurred \textit{solely} by producing and maintaining $v$:

% it incurs direct resource fees of CPU usage, memory usage, disk I/O, and storage usage for any data entity $v$ \cite{lyu2022fine,zhang2024survey,srivastava2024saving,wu2019cloud}.

% producing and maintaining each data entity $v$ in data pipelines involves periodic re-computation and storage, it incurs direct resource fees of CPU usage, memory usage, I/O, and storage usage \cite{lyu2022fine,zhang2024survey,srivastava2024saving,wu2019cloud}. 

% We define the direct cost $\mathcal{C}_{dc}(v)$ as the quantified resource consumption incurred \textit{solely} by the operation $o$ that produces $v$. This represents the fundamental cost injected into the system.
\begin{equation}
\label{equ:dc}
    \mathcal{C}_{dc}(v) = \mathcal{C}_{CPU}(v) + \mathcal{C}_{Mem}(v) + \mathcal{C}_{Sto}(v)
\end{equation}
where $\mathcal{C}_{CPU}$, $\mathcal{C}_{Mem}$, $\mathcal{C}_{Sto}(v)$ denote the respective monetary costs of utilized resources. This value represents the fundamental cost injected into the system. Accordingly, the total cost injected into the pipelines is thus $\mathcal{C}_{total} = \sum_{v \in V(G)} \mathcal{C}_{dc}(v)$. 

Importantly, the specific composition of $\mathcal{C}_{dc}(v)$ is \textit{orthogonal} to our target product costing problem and solutions. Our model is agnostic to the underlying billing mechanisms and can be flexibly adapted to diverse real-world billing policies \cite{wu2019cloud}.

% (\eg \$/core-hour, \$/GB transferred).

\partitle{Manufacturing cost (MC)} 
The \textit{manufacturing cost} of an entity $v$, denoted as $\mathcal{C}_{mc}(v)$, is defined as the total cost of \textit{all} resources consumed in manufacturing entity $v$ across its entire production lineage. It comprises the entity's own direct cost and the apportioned costs inherited from its upstream inputs. Formally, for an entity $v$ with upstream inputs $U_G^1(v)$, the manufacturing cost is defined recursively as:
\begin{equation}
\label{equ:ma}
    \mathcal{C}_{mc}(v) = \mathcal{C}_{dc}(v) + \sum_{u \in U_G^1(v)} w(u,v) \cdot \mathcal{C}_{mc}(u)
\end{equation}

Here, $w(u,v)$ is the \textit{cost allocation ratio}, representing the fraction of $u$'s manufacturing cost allocated to $v$. The second term, $ \sum_{u \in U_G^1(v)} w(u,v) \cdot \mathcal{C}_{mc}(u)$, represents the \textit{input cost} $\mathcal{C}_{ic}(v)$, capturing the cumulative costs inherited from direct upstream dependencies. The following example demonstrates this recursive computation.

\begin{example}
Consider $v_1$, $v_2$, and $v_3$ in Figure \ref{fig:data_pipeline}. Let $w(v_1,v_3) = 1$, $w(v_2,v_3) = 0.6$, $\mathcal{C}_{mc}(v_1)=3, \mathcal{C}_{mc}(v_2)=5, \mathcal{C}_{dc}(v_3)=2$. Then, we have $\mathcal{C}_{ic}(v_3) = 1\times3+0.6\times5=6$, $\mathcal{C}_{mc}(v_3)= \mathcal{C}_{ic}(v_3) + \mathcal{C}_{dc}(v_3) =8$.
\end{example}

\partitle{Cost allocation ratio $w(u,v)$}
As discussed in \S\ref{sec:intro}, consumption-based costing fails to guarantee \textit{completeness} of costing for non-rivalrous data.
% , which means the total cost of $u$ may not be fully accounted for without duplication. 
To overcome this limitation, a cost allocation ratio $w(u,v)$ is introduced to apportion the manufacturing cost of $u$ to its downstream consumer $v$. Rather than relying on physical consumption, this ratio-based strategy distributes costs according to relative contribution. To ensure completeness (see Theorem \ref{thm:cost_conservation}), we enforce the following conservation property.

% This ratio-based strategy allows for distributing costs in a way that reflects the relative contribution of each upstream entity to the downstream entities, rather than relying on the assumption of rivalrous consumption. 

% Unlike traditional consumption-based costing, our ratio-based strategy is consumption-independent, accommodating the non-rivalrous nature of data. Crucially, to ensure that the total cost of $u$ is fully and precisely distributed across its lineage without duplication (\textit{completeness}, see Theorem \ref{thm:cost_conservation}), we enforce the following conservation property.

\begin{definition}[Cost Conservation]
\label{def:cost_conservation}
A data pipeline graph $G=(V,A,w)$ satisfies the \textit{cost conservation} property if the sum of cost allocation ratios for any non-leaf node $u \in V(G) \setminus L(G)$ equals unity:
$$
\sum_{v \in D_G^1(u)} w(u,v) = 1
$$
\end{definition}

The ratio definition that satisfies the above property is flexible and can be adapted to various allocation strategies, including equal-split, usage-based \cite{wu2019cloud,srivastava2024saving}, or value-based strategies \cite{aadithya2010efficient,jia2019towards,wei2020efficient}. For instance, under a usage-based strategy, the weight can be derived from the normalized data volume read by downstream vertices \cite{srivastava2024saving}:
$
w(u,v) = \frac{\text{read\_volume}(v \leftarrow u)}{\sum_{k \in D_G^1(u)} \text{read\_volume}(k \leftarrow u)}
$
where $\text{read\_volume}(v \leftarrow u)$ denotes the volume of data read from $u$ to produce $v$.

The choice of $w$ determines the resulting allocation, but it does not affect the correctness or asymptotic complexity of our algorithms. This holds for any fixed, nonnegative weights satisfying Definition~\ref{def:cost_conservation}. We justify this claim in \S\ref{sec:analysis}.

\subsection{Product Costing in Data Pipelines as an MCDP Problem}
\label{sec:mcp_problem}
% By the definition of MC, the MC of a single DP inherently captures the cumulative expenditure from all upstream operations and dependencies required for its creation.
Based on the definition of MC, we demonstrate that product costing can be reduced to solving the MCDP problem.

\partitle{MC of DPs fully captures the total cost of the pipelines}
The goal of product costing is to apportion the total direct cost of the pipeline to its final DPs so that every cost is counted exactly once. We call this requirement \emph{completeness}.
A simple example illustrates why completeness is necessary. Suppose an intermediate table $A$ costs 100 and is reused by products $B$ and $C$. Charging only the near-zero marginal reuse cost allocates none of the original 100, whereas charging the full cost to each product allocates 200 in total. The former under-allocates and the latter duplicates cost. Ratios satisfying $w(A,B)+w(A,C)=1$ instead distribute exactly 100, regardless of how the ratio is chosen.

Because the manufacturing cost $\mathcal{C}_{mc}(p)$ of each DP includes both its direct cost and all inherited upstream costs, computing these costs for all DPs achieves the product-costing goal above. The following theorem formalizes this connection.
% The following theorem formally guarantees that the sum of the DPs' MC equals the total direct cost injected into the system.

% where the cost allocation ratios for any non-leaf node $u \in V(G) \setminus L(G)$ satisfy the conservation property $\sum_{v \in D_G^1(u)} w(u,v) = 1$,

\begin{theorem}[Completeness of Product Costing]
\label{thm:cost_conservation}
Given a data pipeline graph $G=(V,A,w)$ that satisfy the cost conservation property, the total manufacturing cost of data products (leaf nodes) is equal to the total direct cost injected into the system. Formally:
$$
\sum_{p \in L(G)} \mathcal{C}_{mc}(p) = \sum_{v \in V(G)} \mathcal{C}_{dc}(v)
$$
\end{theorem}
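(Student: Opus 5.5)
The argument is a double-counting (summation-swap) over Equation~(\ref{equ:ma}), treating the manufacturing costs as the solution of a linear system. Sum Equation~(\ref{equ:ma}) over all vertices $v \in V(G)$:
\[
\sum_{v \in V} \mathcal{C}_{mc}(v) = \sum_{v \in V} \mathcal{C}_{dc}(v) + \sum_{v\in V}\sum_{u \in U_G^1(v)} w(u,v)\,\mathcal{C}_{mc}(u).
\]
In the double sum, reindex by arcs $(u,v)\in A$ and group by the tail $u$. This gives $\sum_{u \in V} \mathcal{C}_{mc}(u) \sum_{v \in D_G^1(u)} w(u,v)$.

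By Definition~\ref{def:cost_conservation}, the inner sum equals $1$ for every non-leaf $u$. For a leaf it is an empty sum, equal to $0$. Hence the double sum equals $\sum_{u \in V\setminus L(G)} \mathcal{C}_{mc}(u)$. Subtracting this from the left-hand side leaves exactly $\sum_{p\in L(G)}\mathcal{C}_{mc}(p) = \sum_{v}\mathcal{C}_{dc}(v)$, which is the claim.

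The main obstacle is that $G$ is cyclic, so Equation~(\ref{equ:ma}) is a fixed-point equation rather than a well-founded recursion. The subtraction step requires $\sum_v \mathcal{C}_{mc}(v)$ to be a finite, well-defined quantity. I would handle this by writing the system as $c = d + W^\top c$, where $W$ is the nonnegative weight matrix. Cost conservation makes $W$ row-substochastic, with rows summing to $1$ on non-leaves and $0$ on leaves.

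To show that $I - W^\top$ is invertible, with nonnegative inverse $\sum_k (W^\top)^k$, I need $\rho(W)<1$. This holds if every vertex can reach a leaf, so that $W$ is the transient part of an absorbing Markov chain. I would state this as a standing assumption: every entity ultimately feeds some data product, and no closed cycle lacks an outlet to a leaf. Without it, $\mathcal{C}_{mc}$ is undefined or infinite on a closed cycle. Under this assumption the solution is unique and finite, and the summation argument above goes through.

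As an alternative, the same identity can be read probabilistically. Direct cost injected at $v$ is routed along arcs by the weights, and absorption at the leaves occurs with total probability $1$. This gives the same conclusion and justifies exchanging the order of the sums.
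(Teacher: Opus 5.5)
Your argument is correct and is essentially the paper's proof. The paper also sums Eq.~(\ref{equ:ma}) over all vertices, regroups the input-cost double sum by the tail $u$, uses cost conservation to reduce it to $\sum_{u \in V(G)\setminus L(G)} \mathcal{C}_{mc}(u)$, and subtracts. Your added care about whether $\mathcal{C}_{mc}$ exists and is finite on a cyclic graph is not addressed inside the paper's proof, which simply takes the manufacturing costs as given; the paper instead proves $\rho(\mathbf{W})<1$ separately, in its matrix-invertibility appendix, under the same assumption you state: every non-leaf vertex has a path to a leaf.
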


The proof is demonstrated in Appendix \ref{appendix:proof_cost_conservation}. Theorem \ref{thm:cost_conservation} shifts product costing to computing the MC of all DPs, which we term the \textit{manufacturing cost of data products (MCDP) problem}.

\begin{definition}[The MCDP Problem]
    Given an edge-weighted DCG $G=(V,A,w)$, a DC vector $\mathbf{d}$ where $\mathbf{d}_i = \mathcal{C}_{dc}(v_i)$ for all $v_i \in V$, the MCDP problem is to compute the MC vector $\mathbf{c}^{L} \in \mathbb{R}^{|L(G)|}$ such that each $\mathbf{c}^L_i$ represents the MC of a leaf node (DP) $p_i \in L(G)$.
\end{definition}

\partitle{Desired goals} We define the desired goals for an algorithm that solves the MCDP problem as follows.
\begin{itemize}
    \item \textbf{Exactness}. The algorithm must compute the MC $\mathcal{C}_{mc}(p)$ for each data product $p \in L(G)$ exactly as defined by the cost model (Eq. \ref{equ:ma}).
    %  satisfying the completeness principle of product costing (Theorem~\ref{thm:cost_conservation}).
    \item \textbf{Scalability} The algorithm must scale to large data pipelines (\eg millions of vertices) and complete the computation within a practical time window (\eg for daily billing cycles). 
\end{itemize}

% 这里的 daily billing cycles在后面实验中也应体现

% Since manufacturing costs serve as the baseline for data pricing and pipeline optimization \cite{wu2019cloud}, approximate results can lead to under-pricing (financial loss) or misallocation of budget during cost optimization. 

The necessity of these goals stems from both economic and technical imperatives. 
First, the exactness of the MCDP solution serves as the foundation for the completeness of product costing. It is fundamental to financial accountability, particularly for data pricing and pipeline optimization. Given that cost-based pricing is prevalent in practice \cite{wu2019cloud,amaral2019factors,mattos2021pricing} and resource optimization is critical, any inaccuracy risks distorting pricing strategies, eroding revenue, and misallocating budgets. Second, scalability is essential for handling enterprise-level data pipelines, which often comprise millions of entities and require timely cost reporting.

\subsection{Basic Solutions}
\label{sec:basic_solutions}
We present two basic solutions for the MCDP problem, \func{RollUpApp} and \func{MatrixApp}. While neither fully satisfies both exactness and scalability goals simultaneously, they serve as fundamental building blocks for our proposed hybrid approach (see \S\ref{sec:the_improved_algorithm}).

\subsubsection{The \func{RollUpApp} Algorithm}
\label{sec:simple_app}

A straightforward idea to solve the MCDP problem is to employ the cost roll-up strategy \cite{costrollup2024,engh18system,wentker2019bottom,cunnama2016using}, which means aggregating costs bottom-up along the dependency hierarchy. Specifically, in data pipeline lineage graphs, this involves propagating costs from upstream to downstream vertices until accumulation at the leaves. Based on this, we introduce \func{RollUpApp}, a scalable algorithm for implementing this propagation.

% A basic idea of solve the MCDP problem is to propogate costs from upstream vertices to downstream ones iteratively until all costs reach the leaves. Based on this idea, we introduce \func{RollUpApp}, an iterative, vertex-centric algorithm. The high-level idea is to allow costs to `flow' downstream until they reach the leaves, with the accumulated cost of the leaves representing the manufacturing cost once all costs have been fully transferred. \func{RollUpApp} is designed for scalability on large graphs. However, due to the existence of cycles, it sacrifices the guaranteed exactness of the solution for an approximate, convergent result.

The core of \func{RollUpApp} is an iterative cost propagation mechanism based on a stateful cost variable $C(v)$ and a cost roll-up operation \func{CR}. Here, $C(v)$ represents the cost currently held by $v$, and initialized with its direct cost \ie $C(v) \leftarrow \mathcal{C}_{dc}(v)$. In each iteration, \func{CR} is applied to non-leaf vertices to propagate their costs downstream. $\func{CR}(v)$ is a two-step state update:

% To facilitate cost propagation based on $C(v)$, a cost apportion operation \func{CA} is defined as below to allow cost ``flow'' among vertices.

% \partitle{The CA operation}
% The cost apportion operation \func{CA} transfers a vertex's accumulated cost to its immediate downstream neighbors. Formally, for any non-leaf vertex $v$, \func{CA}$(v)$ is a two-step state update:

\begin{enumerate}
    \item \textbf{Distribute Cost:} For each direct downstream neighbor $v' \in D_G^1(v)$, its cost is updated by adding a proportional share of $v$'s cost:
    $C(v') \leftarrow C(v') + w(v,v') \cdot C(v)$.
    \item \textbf{Reset Cost:} The cost at $v$ is set to zero, signifying that its contribution has been fully passed on: $C(v) \leftarrow 0$.
\end{enumerate}

An example of a \func{CR} operation is shown in Figure \ref{fig:ca_example}.

\begin{figure}[t!]
    \centering
    \begin{subfigure}[b]{0.2\textwidth} 
        \centering
        \includegraphics[width=\linewidth]{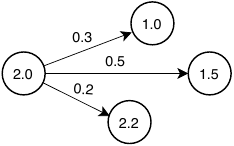} 
        \caption{Before \func{CR}.}
        \label{fig:ca_before}
    \end{subfigure}
    \hspace{0.2in}
    % \hfill % Add spacing between subfigures
    \begin{subfigure}[b]{0.2\textwidth} % Corrected: Use [b] for bottom alignment
        \centering
        \includegraphics[width=\linewidth]{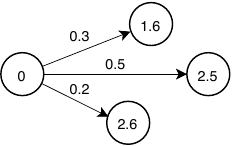} 
        \caption{After \func{CR}.}
        \label{fig:ca_after}
    \end{subfigure}
    \vspace{-0.1in}
    \caption{An example of a cost roll-up operation \func{CR}.}
    \vspace{-0.1in}
    \label{fig:ca_example}
\end{figure}

\partitle{Apportion cost via \func{CR} Operations based on Pregel}
Building upon \func{CR}, we introduce \func{RollUpApp} that repeatedly applies the \func{CR} to all non-leaf vertices to solve the MCDP problem. We adopt a Pregel-based implementation \cite{malewicz2010pregel} since \func{CR} is \textit{vertex centric} and naturally fits distributed graph processing frameworks. \func{RollUpApp} is shown in Alg. \ref{alg:simpleapp-pregel}. It takes as input the graph $G=(V,A,w)$, the stateful cost variable $C(v)$. The output is the final cost state $C^{out}(v)$ for all vertices $v \in V$. The algorithm proceeds in supersteps.

\begin{algorithm}[b]\small
    \SetKwBlock{CAOp}{ }{end}
    \SetKwBlock{compute}{\func{compute}(msgs)}{end}
    \caption{\func{RollUpApp} (Pregel-style pseudocode)}
    \label{alg:simpleapp-pregel}
    \KwIn{Graph $G=(V, A, w)$, direct costs $\mathcal{C}_{dc}(v)$ for each $v \in V$}
    \KwOut{$C^{out}(v)$}

    Each vertex $v \in V$ sets $C^{out}(v) \gets \mathcal{C}_{dc}(v)$ \tcp*{\darkBlue{Initialization}}

    \While{any non-leaf vertex $v$ is active}{
    
        \textbf{foreach} active vertex $v$ \textbf{do} in parallel \CAOp {
            \tcc{\darkBlue{\func{CR} operation in Pregel style}}
            \compute {
            % \Indp

                update $C^{out}(v) \gets C^{out}(v) + \sum_{m_i \in msgs} m_i$\;
                
                % \If {$v \in V^{op}$} {
                %     sends $w(v, v') \cdot C_1(v)$ to each $v' \in D_G^1(v)$\;
                % }
                  
                sends $w(v, v') \cdot C^{out}(v)$ to each $v' \in D_G^1(v)$\;
                
                updates $C^{out}(v) \gets 0$ \;
            
                \func{voteToHalt}()\;

                \tcp{\darkBlue{Reactivated upon receiving a message}}
            }
        
        }

    }
    \Return $C^{out}(v)$
\end{algorithm}

Initially, each vertex $v$ initiates a new state $C^{out}(v)$ (line 1). In each superstep, every active non-leaf vertex performs \func{CR} operation (line 3). Each vertex collects all incoming costs from the messages received from upstream neighbors and accumulates these into its cost state $C^{out}(v)$ (line 5). \func{RollUpApp} then pushes its current cost to its immediate downstream neighbors according to the edge weights (line 6), and resets its cost to zero (line 7), indicating that its cost is ``flowed-away''. The process iterates until all non-leaf vertices become inactive.
%  or the algorithm reaches the maximum superstep threshold $t$ (line 2)
Finally, the algorithm returns the final cost state $C^{out}(v)$ for all $v \in V$ (line 10). The final cost state of DPs $\mathbf{c}^L$ can be extracted from $C^{out}(v)$ as $\mathbf{c}^L_i \gets C^{out}(p_i)$ for $p \in L(G)$.

% 实验部分关于P的大小要进行讨论
\func{RollUpApp} is scalable to large graphs due to its Pregel-based distributed computing paradigm \cite{malewicz2010pregel}. For DAGs, \func{RollUpApp} is guaranteed to terminate and yield exact results.

\begin{theorem}
\label{thm:simple_ca_cycle}
Given a data pipeline DAG $G=(V,A,w)$ that satisfies the cost conservation property and $C(v)$ initialized as $\mathcal{C}_{dc}(v)$, \func{RollUpApp(G,C)} terminates and exactly computes the MC vector $\mathbf{c}^L$ where $\mathbf{c}^L_i = \mathcal{C}_{mc}(p_i)$ for all leaves $p_i \in L(G)$ (thereby solving the MCDP problem) in at most $P$ supersteps, where $P$ is the length of the longest path in $G$.
\end{theorem}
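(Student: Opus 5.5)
We argue by induction along a topological order of the DAG $G$. For each vertex $v$, let $\ell(v)$ denote the length of the longest path from any root to $v$, so roots have $\ell(v)=0$ and $\ell(v)\le P$ for every $v$. We will establish the following invariant. Vertex $v$ receives messages only in supersteps $1,\dots,\ell(v)$. The total amount it ever accumulates, its direct cost plus all incoming messages, equals $\mathcal{C}_{mc}(v)$ as defined by Eq.~\ref{equ:ma}. Moreover, for a non-leaf $v$, the total mass it sends to each $v'\in D^1_G(v)$ over the whole run is exactly $w(v,v')\,\mathcal{C}_{mc}(v)$.

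The key observation is linearity. In Alg.~\ref{alg:simpleapp-pregel}, a non-leaf vertex sends, in each superstep where it is active, $w(v,v')$ times whatever it holds, and then resets to zero. Hence, summed over all supersteps, the mass sent along $(v,v')$ is $w(v,v')$ times the total mass $v$ ever held. That total is $\mathcal{C}_{dc}(v)$ (held at superstep $0$) plus the sum of all incoming messages. By the induction hypothesis, applied to all $u\in U^1_G(v)$ (which precede $v$ topologically), the incoming messages sum to $\sum_{u}w(u,v)\,\mathcal{C}_{mc}(u)=\mathcal{C}_{ic}(v)$. The total held is therefore $\mathcal{C}_{dc}(v)+\mathcal{C}_{ic}(v)=\mathcal{C}_{mc}(v)$. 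Because $G$ is acyclic, Eq.~\ref{equ:ma} has a unique solution, obtained by evaluating it in topological order, so the quantity computed really is $\mathcal{C}_{mc}(v)$.

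For timing, note that a message sent by $u$ in superstep $s$ arrives at $v$ in superstep $s+1$. By induction, $u$ sends only in supersteps $0,\dots,\ell(u)$, and $\ell(u)\le \ell(v)-1$, so $v$ receives only up to superstep $\ell(v)$. After superstep $P$ no non-leaf vertex holds cost: any message would have to travel along a path longer than $P$, which is impossible. All vertices then vote to halt with no messages in transit, so the algorithm terminates within $P$ supersteps.

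For the leaves, they never forward or reset, so $C^{out}(p)$ accumulates everything and equals $\mathcal{C}_{mc}(p)$, giving $\mathbf{c}^L$. One could also cross-check this with Theorem~\ref{thm:cost_conservation}: by cost conservation the total mass is preserved, and it all ends at leaves.

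The main obstacle is carefully reconciling the Pregel bookkeeping (the initialization in superstep $0$ and the accumulate–send–reset order) with the off-by-one in counting supersteps against path length. The linearity argument itself is routine. Note that Definition~\ref{def:cost_conservation} is not needed for exactness on a DAG. It only ensures completeness of the final allocation.
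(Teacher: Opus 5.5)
Your proposal is correct and follows essentially the paper's route. The paper also tracks the total cost flowing through each vertex and observes that each upstream neighbour $u$ passes on exactly $w(u,v)$ of it, which is your linearity step. It concludes that these totals satisfy the recursion of Eq.~\ref{equ:ma}, so they equal $\mathcal{C}_{mc}$ by uniqueness on a DAG, and it bounds termination by the longest path; your induction on $\ell(v)$ makes both halves explicit, and your remark that cost conservation is not needed for exactness is also right.
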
 
\emph{Proof intuition.} Every \func{CR} operation conserves cost because its outgoing weights sum to one. In a DAG, cost can advance at least one edge per superstep and can never return to a processed vertex. Hence, after at most the longest-path length $P$, all cost has reached the leaves. Cost conservation then ensures that each leaf holds exactly its recursively defined MC. The full proof is in Appendix~\ref{sec:proof_simple_ca_cycle}.

% At the same time, when $G$ is DAG and $C(v)$ is initialized as $\mathcal{C}_{dc}(v)$, it outputs exact MC results for all DPs, as stated below.
% \begin{theorem}
% \label{thm:simple_ca_cycle}
% For any DAG $G=(V,A,w)$ and $C(v)$ initialized as $\mathcal{C}_{dc}(v)$, \func{RollUpApp(G,C,V)} terminates and correctly computes the MC vector $\mathbf{c}^L$ where $\mathbf{c}^L_i = \mathcal{C}_{mc}(p_i)$ for all leaves $p_i \in L(G)$ (solve the MCDP problem) in at most $P$ supersteps, where $P$ is the length of the longest path in $G$.
% \end{theorem} 
% The proof is demonstrated in Appendix \ref{sec:proof_simple_ca_cycle}.

\partitle{Limitations} 
In graphs with cycles (DCGs), costs circulate indefinitely, preventing the algorithm from terminating naturally. While enforcing a superstep threshold ensures termination, it inevitably leads to approximate results, failing the exactness goal.

% \func{RollUpApp} fails to satisfy exactness and termination for a DCG because costs circulate indefinitely within cycles. To ensure termination, we can set superstep threshold $t$ to limit the number of supersteps. This variant is denoted as \func{RollUpApp-T} and detailed in Appendix \ref{appendix:simpleapp_threshold}. However, the exactness goal remains unresolved.

\subsubsection{The \func{MatrixApp} Algorithm}
\label{sec:matrix_app}

\begin{algorithm}[b]\small
    \caption{\func{MatrixApp}}
    \label{alg:matrix_app}
    \KwIn{Cost allocation matrix $\mathbf{W}$, DC vector $\mathbf{d}$}
    \KwOut{MC vector of DPs $\mathbf{c}^{L}$}
    Compute $\mathbf{c} \gets (\mathbf{I} - \mathbf{W}^\mathsf{T})^{-1} \mathbf{d}$ \;
    Extract the sub-vector $\mathbf{c}^{L}$ from $\mathbf{c}$ \;
    \Return $\mathbf{c}^{L}$
\end{algorithm}

To address the cyclic dependencies where \func{RollUpApp} fails, we formulate the cost flow as a system of linear equations.

% As noted in \S\ref{sec:preliminary}, data pipelines contain cycles (\eg, $v_4-v_5-v_6$ in Figure \ref{fig:data_pipeline}) and forms a DCG. In such cases, the manufacturing costs of vertices within a cycle are mutually dependent, forming a system of linear equations. 
\partitle{Linear system formulation}
As defined in Eq. \ref{equ:ma}, the manufacturing costs across a cyclic graph are mutually dependent. For any vertex $v$, its cost consists of its direct cost and the apportioned costs from its inputs:
\begin{equation}
\label{equ:linear_system_node}
    \mathcal{C}_{mc}(v) - \sum_{u \in U_G^1(v)} w(u,v) \cdot \mathcal{C}_{mc}(u) = \mathcal{C}_{dc}(v)
\end{equation}
Considering all vertices in $V(G)$ indexed from $1$ to $n$, this forms a global system of $|V(G)|$ linear equations. Let $\mathbf{c}$ be the vector of manufacturing costs $[\mathcal{C}_{mc}(v_1), \dots, \mathcal{C}_{mc}(v_n)]^T$, $\mathbf{d}$ be the vector of direct costs $[\mathcal{C}_{dc}(v_1), \dots, \mathcal{C}_{dc}(v_n)]^T$,
$\mathbf{W}$ is a \textit{cost allocation matrix} where $\mathbf{W}[i,j] = w(v_i,v_j)$ if there is an arc from $v_i$ to $v_j$ and 0 otherwise, and $\mathbf{I}$ is an identity matrix. The system can be expressed in matrix form (see Appendix \ref{sec:matrix_form_derivation} for detailed derivation) as:
% \begin{equation}
% \label{equ:matrix_form}
%     (\mathbf{I} - \mathbf{W}^\mathsf{T})\mathbf{c} = \mathbf{d}
% \end{equation}

\begin{equation}
\label{equ:matrix_form}
    \mathbf{c} = \mathbf{d} + \mathbf{W}^\mathsf{T} \mathbf{c}
\end{equation}

% closed-form : 用有限步骤的代数运算得到的解析解。
% where $\mathbf{W}$ is a \textit{cost allocation matrix} where $W_{ij} = w(v_i,v_j)$ if there is an arc from $v_i$ to $v_j$ and 0 otherwise (detailed derivation is provided in Appendix \ref{sec:matrix_form_derivation}). 

Solving this system for $\mathbf{c}$ yields the exact and unique solution to the MCDP problem, formalized as \func{MatrixApp} in Alg.~\ref{alg:matrix_app}.
% We remark that one must first compute the complete cost vector $\mathbf{c}$ for all nodes (line 1) before extracting the MC of leaf costs (line 2), as the cost of any leaf node depends on the costs of its non-leaf ancestors.

% due to the weight structure不是很清楚，需要用更精确的一句话说明
% The invertibility of $(\mathbf{I} - \mathbf{W}^\mathsf{T})$ is ensured due to the weight structure of $G$ (see the proof in Appendix \ref{appendix:proof_matrix_inver}), ensuring that a unique solution always exists. 

\begin{theorem}[Exactness of \func{MatrixApp}]
\label{thm:matrixapp_exact}
Given a data pipeline graph $G=(V,A,w)$ that satisfies the cost conservation property, \func{MatrixApp} computes the exact and unique solution to the MCDP problem.
\end{theorem}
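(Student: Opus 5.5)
The plan is to reduce the statement to one linear-algebra fact: the matrix $\mathbf{I}-\mathbf{W}^\mathsf{T}$ is nonsingular. Once that holds, the system in Eq.~\ref{equ:matrix_form} has exactly one solution $\mathbf{c}$. By the derivation of Eq.~\ref{equ:matrix_form} from Eq.~\ref{equ:ma}, a vector satisfies Eq.~\ref{equ:matrix_form} if and only if its entries satisfy the recursive definition of $\mathcal{C}_{mc}$ at every vertex. So the vector computed in line~1 of Alg.~\ref{alg:matrix_app} is the unique assignment of manufacturing costs consistent with the cost model, and its restriction $\mathbf{c}^L$ is the exact, unique MCDP solution. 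Since $\rho(\mathbf{W}^\mathsf{T})=\rho(\mathbf{W})$, it suffices to show that the spectral radius $\rho(\mathbf{W})$ is less than $1$.

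\textbf{Step 1 (substochastic structure).} Consider $\mathbf{W}$ entrywise. Its entries are nonnegative. By Definition~\ref{def:cost_conservation}, every row indexed by a non-leaf vertex sums to $1$. Every row indexed by a leaf is identically zero, since a leaf has no out-arcs. Hence $\mathbf{W}$ is row-substochastic, $\|\mathbf{W}\|_\infty\le 1$, and therefore $\rho(\mathbf{W})\le 1$. This bound is not strict on its own, so it is not enough.

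\textbf{Step 2 (strict contraction after $n$ steps).} This is the main step. It uses the modeling assumption, implicit in the pipeline setting, that every data entity can reach some data product; that is, every vertex has a directed path to a leaf. I also assume that each arc in $A$ carries a positive weight, discarding zero-weight arcs if necessary; I would only invoke positivity along the chosen path. The assumption is genuinely needed. A closed strongly connected component with no leaf would give $\mathbf{W}$ an eigenvalue $1$, and costs entering it would never reach a product. I would state this explicitly as the condition under which the theorem holds.

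Under this assumption I interpret $\mathbf{W}$ as the transition matrix of a random walk that is absorbed, and killed, upon reaching a leaf. For $n=|V|$, the $i$-th row sum of $\mathbf{W}^n$ equals the probability that the walk started at $v_i$ is still alive after $n$ steps. Take a simple path from $v_i$ to a leaf. It has length at most $n-1$ and positive product weight $\delta_i>0$. Following it kills the walk, so the survival probability is at most $1-\delta_i$. Setting $\delta=\min_i\delta_i>0$ gives $\|\mathbf{W}^n\|_\infty\le 1-\delta<1$. By Gelfand's formula, $\rho(\mathbf{W})\le\|\mathbf{W}^n\|_\infty^{1/n}<1$.

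\textbf{Step 3 (conclusion).} Because $\rho(\mathbf{W}^\mathsf{T})<1$, the Neumann series $\sum_{k\ge0}(\mathbf{W}^\mathsf{T})^k$ converges to $(\mathbf{I}-\mathbf{W}^\mathsf{T})^{-1}$. So the inverse exists, the solution $\mathbf{c}=(\mathbf{I}-\mathbf{W}^\mathsf{T})^{-1}\mathbf{d}$ is unique, and, as argued above, it coincides with $\mathcal{C}_{mc}$. As a by-product, the series shows that $\mathbf{c}\ge 0$ entrywise whenever $\mathbf{d}\ge0$. It also shows that $\mathbf{c}$ is the limit of the \func{RollUpApp} accumulation, which connects this result to Theorem~\ref{thm:simple_ca_cycle}. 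The main obstacle is Step~2, and specifically justifying and stating the reachability-to-a-leaf hypothesis, since the claim fails without it.
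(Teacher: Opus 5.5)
Your proposal is correct and follows the paper's overall structure. Exactness is reduced to invertibility of $\mathbf{I}-\mathbf{W}^\mathsf{T}$ through the equivalence of Eq.~\ref{equ:matrix_form} with Eq.~\ref{equ:ma}. Row-substochasticity gives $\rho(\mathbf{W})\le 1$, and a positive-weight path from every vertex to a leaf makes the bound strict.

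Where you differ is the strictness argument. The paper uses an eigenvector. It assumes $|\lambda|=1$ and looks at a maximal-modulus coordinate. Equality in the triangle-inequality chain there forces every positive-weight successor to have maximal modulus too. Propagating this to a reachable leaf $p$ contradicts $\lambda x_p=0$. You instead bound a power of the matrix, $\|\mathbf{W}^n\|_\infty\le 1-\delta$, via the killed random walk, and conclude from $\rho(\mathbf{W})^n=\rho(\mathbf{W}^n)\le\|\mathbf{W}^n\|_\infty$. That elementary inequality is all you need; Gelfand's formula is more than required.

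The paper's argument is qualitative and uses only the eigen-equation. Yours is quantitative: it gives $\rho(\mathbf{W})\le(1-\delta)^{1/n}$, hence a geometric convergence rate for the Neumann series. That rate is what supports your by-products, namely nonnegativity of $\mathbf{c}$ and the connection to \func{RollUpApp}. In that connection, for a non-leaf vertex, $\mathbf{c}$ equals the cumulative cost passing through it, not the state it holds, which tends to zero.

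Your Step~3 is also slightly cleaner than the paper's. The paper claims that all eigenvalues of $\mathbf{I}-\mathbf{W}^\mathsf{T}$ are strictly positive, but for possibly complex eigenvalues only a positive real part, and hence nonsingularity, is justified.

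Finally, you are right that the reachability hypothesis is essential. The paper relies on the same assumption (``every non-leaf vertex has at least one path to some leaf data product'') but leaves it out of the theorem statement.
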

\emph{Proof intuition.} Eq.~\ref{equ:matrix_form} writes all recursive cost equations simultaneously. The weights are nonnegative, and every vertex has a positive-weight path to a leaf. These properties make the spectral radius of $\mathbf{W}$ smaller than one, so $\mathbf{I}-\mathbf{W}^{\mathsf T}$ is invertible. Its unique solution is precisely the MC vector defined by Eq.~\ref{equ:ma}. The full proof is in Appendix~\ref{appendix:proof_matrixapp_exact}.

% 关于可逆性的说明
% The invertibility of $(\mathbf{I} - \mathbf{W}^\mathsf{T})$ is guaranteed by the spectral structure of $\mathbf{W}$: all entries are non-negative, each row sum is at most~1, and at least one row sum is strictly less than~1 (the leaves). A formal proof is given in Appendix~\ref{appendix:proof_matrix_inver}.

% all entries are non-negative, each row sum is at most~1, and at least one row sum is strictly less than~1 (the leaves). These properties imply that the spectral radius satisfies $\rho(\mathbf{W}) < 1$, and hence $(\mathbf{I} - \mathbf{W}^\mathsf{T})$ is nonsingular. A formal proof is given in Appendix~\ref{appendix:proof_matrix_inver}.

\partitle{Limitations}
While \func{MatrixApp} guarantees exactness, it is computationally prohibitive for large-scale pipelines. Matrix inversion generally incurs cubic time complexity $O(|V|^3)$ and quadratic space complexity $O(|V|^2)$. Even with optimizations for sparse matrices, this approach becomes computationally and memory-prohibitive for pipelines with millions of entities, failing the scalability goal.

% 原版，同时描述了两个threshold版本的simpleApp
% However, \func{RollUpApp} fails in satisfy exactness for a DCG because costs circulates indefinitely within cycles, which also leads to non-termination of execution. To ensure termination, one can set stop conditions to achieve approximate accuracy. For example, we can set a ratio threshold $\epsilon_r$ to stop the algorithm when the total cost remaining on all non-leaf vertices is above $\epsilon_r$, or set a step threshold $\epsilon_s$ to limit the number of supersteps. We denote these two variants as \func{Simple-RT} and \func{Simple-ST}, respectively, and details them in Appendix \ref{appendix:alternative_simpleapp}. These variants form a trade-off between computational efficiency and accuracy of result. Even so, the exactness goal of these variants of \func{RollUpApp} remains unresolved.

Our proposed \func{RollUpApp} and \func{MatrixApp} for the MCDP problem highlight a fundamental trade-off between exactness and scalability, motivating the need for a hybrid solution.

% \begin{figure}[t]
%     \centering
%     \begin{tikzpicture}[x=1cm,y=1cm,>=Latex,every node/.style={font=\small}]
%         \node[draw,rounded corners,align=center,text=brown] (rollup) at (0,0) {\func{RollUpApp}\\scalable and exact on DAGs};
%         \node[draw,rounded corners,align=center,text=brown] (matrix) at (4.6,0) {\func{MatrixApp}\\exact but not scalable};
%         \node[draw,rounded corners,thick,align=center,text=brown] (costapp) at (2.3,1.65) {\func{CostApp}\\exact and scalable};
%         \draw[->,brown,thick] (rollup) -- (costapp);
%         \draw[->,brown,thick] (matrix) -- (costapp);
%     \end{tikzpicture}
%     \vspace{-0.08in}
%     \caption{Relationship among the two basic methods and \func{CostApp}.}
%     \label{fig:relation}
%     \vspace{-0.15in}
% \end{figure}

% 存在以下几个问题：
% 4.2和4.3应该放在一起，因为都是进行“成本汇聚”
% 4.2.1 首先说为了”进行成本汇聚“，需要先定位并破坏环，但是没有说清晰表达“为什么要进行成本汇聚”。同时，也没有说明为什么一定要找FAS，直接介绍了FAS的定义和算法。另外，对于说明FindFAS算法正确性的定理，应该用一句话来直观解释。
% 4.2.2 在介绍过程前，没有说明graphEdit的目的：是为了让所有的环路成本都能被新创建的source节点吸收掉，从而实现成本汇聚。
% 4.3中，没有清晰地陈述这一步的结果对于“主线”的影响：新的成本状态使得所有成本都集中在新创建的source节点和DP节点上，从而大大简化了成本分配关系。
% 4.4中，要清晰地首先表达两点1）线性系统得到降维 2）剩下的挑战是寻找矩阵W。 寻找矩阵算法，要优先描述向量化版本的逻辑。 此外，在求解时，要直接介绍优化版本，而不是只是将它作为一个可选项。
% 汇聚cost时，对于simpleApp的第三个参数，应该说明排除这些节点可以使得它们充当汇聚节点的角色，只累计流入的成本而不将其传出。

\section{The Improved Algorithm: CostApp}
\label{sec:the_improved_algorithm}

We introduce \func{CostApp}, a novel hybrid algorithm that strategically integrates the strengths of both basic methods to deliver an exact and highly scalable solution for the MCDP problem on cyclic data pipelines. We first overview our approach (\S\ref{sec:overview}), followed by a step-by-step detailing of its execution phases (\S\ref{sec:phase1}, \S\ref{sec:phase2}), and conclude with analyses of its correctness and efficiency (\S\ref{sec:analysis}).

\subsection{Overview of \func{CostApp}}
\label{sec:overview}

\begin{figure}[t]
    \centering
    \includegraphics[width=0.6\linewidth]{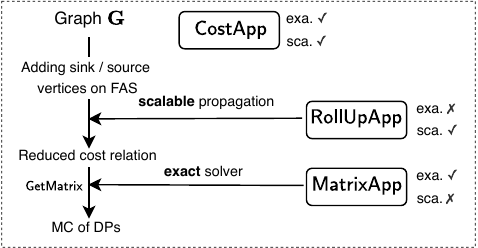}
    \vspace{-0.08in}
    \caption{The key design of \func{CostApp}.}
    \label{fig:relation}
    \vspace{-0.15in}
\end{figure}

The high-level idea of \func{CostApp} is decoupling and reduction: it transforms an intractable, large-scale problem into an equivalent but easily solvable, compact one. The algorithm \func{CostApp} is illustrated in 
Alg. \ref{alg:costapp}. \func{CostApp} operates in two main phases:

\partitle{Phase 1: Acyclic cost aggregation} 
This phase uses scalable, distributed graph algorithms to propagate all direct costs throughout the acyclic portions of the pipelines and consolidate them onto a small, well-defined set of terminal vertices. The process begins by identifying the cyclic dependencies using our scalable \func{FindFAS} algorithm (line 1) to obtain an FAS. Instead of simply removing these arcs, we isolate the cycles using \func{GraphEdit} (line 2), which replaces each feedback arc with a special path through auxiliary \textit{sink} and \textit{source} vertices. They effectively isolate the cyclic dependency and render the majority of the graph acyclic, enabling us to use the highly scalable \func{RollUpApp} algorithm to efficiently propagate all direct costs (line 5). Upon completion, all costs are precisely aggregated onto a small set of terminal vertices \ie the original data products and the newly created sink nodes. This sets the stage for an exact solution.

\partitle{Phase 2: Cyclic cost resolution} 
With all costs consolidated on the terminal vertices, this phase constructs and solves a compact linear system to resolve the remaining cyclic dependencies. First, we extract the aggregated costs from Phase 1 to form a new, reduced direct cost vector $\mathbf{d}_{\text{term}}$ (line 6). We then compute the reduced cost allocation matrix, $\mathbf{W}_{\text{term}}$, using \func{GetMatrix} (line 7), which captures the cost-flow relationships \textit{only among} these terminal vertices. This transforms the original large-scale problem into a compact and equivalent linear system, which we solve exactly and efficiently using \func{MatrixApp$^*$} (a variant of \func{MatrixApp}, see \S\ref{sec:solving_reduced_system}) (line 8). This final step yields the exact manufacturing costs results for all DPs.

\begin{algorithm}[b]\small
\caption{\func{CostApp}}
\label{alg:costapp}
\KwIn{Graph $G=(V,A,w)$, direct costs $\mathcal{C}_{dc}(v)$ for all $v \in V$}
\KwOut{The MC vector of DPs $\mathbf{c}^{L}$}

\tcc{\darkBlue{Phase 1: Acyclic Cost Aggregation}}
$A_{\text{FAS}} \gets \func{FindFAS}(G)$ \label{line:findfas}\tcp*{\darkBlue{(1.a) Identifing cycles}}
$(G_1, C_1,V_{sk}, V_{sr}) \gets \func{GraphEdit}(G, \mathcal{C}_{dc}, A_{\text{FAS}})$ \label{line:graphedit}\tcp*{\darkBlue{(1.b) Isolating cycles}}

Let $G_{DAG}$ be the subgraph of $G_1$ without sink-to-source arcs\; 
Let $V_{\text{term}}$ be the set of terminal vertices, comprising the leaves of $G_{DAG}$, \ie $L(G) \cup V_{sk}$\;
$C_{DAG} \gets \func{RollUpApp}(G_{DAG}, C_1)$ \label{line:aggregate}\tcp*{\darkBlue{(1.c) Aggregating costs}}

\tcc{\darkBlue{Phase 2: Cyclic Cost Resolution}}
$\mathbf{d}_{\text{term}} \gets \func{ExtractCosts}(C_{DAG},V_{\text{term}})$ \label{line:extract_o}\;
$\mathbf{W}_{\text{term}} \gets \func{GetMatrix}(G_{DAG}, V_{sk}, L(G)) $ \label{line:getmatrix}\tcp*{\darkBlue{(2.a) Getting the cost allocation matrix}}
$\mathbf{c}^{L} \gets \func{MatrixApp$^*$}(\mathbf{W}_{\text{term}}, \mathbf{d}_{\text{term}})$ \label{line:solve_reduced}\tcp*{\darkBlue{(2.b) Solving a compact system for final costs}}

\Return{$\mathbf{c}^{L}$}
\end{algorithm}

\partitle{Running example}
Figure \ref{fig:costapp_workflow_example} illustrates the execution of \func{CostApp} step by step. For simplicity, we initialize the cost of each vertex to $1$. In Phase 1, the ordering labels (orange) generated by \func{FindFAS} identify the feedback arc (blue) in step \darkBlue{(1.a)}. Next, auxiliary sink and source vertices (blue) are strategically inserted to isolate the cyclic dependency \darkBlue{(1.b)}. All costs are aggregated onto the terminal vertices (red, including all sink vertices and leaf vertices) by invoking \func{RollUpApp} \darkBlue{(1.c)}, zeroing out the remaining vertices (green). Moving to Phase 2, unit costs are injected at the source vertices to construct the reduced cost allocation matrix $\mathbf{W}_{\text{term}}$ \darkBlue{(2.a)}. Finally, \func{MatrixApp$^*$} is invoked to solve this compact linear system to yield the exact MCs \darkBlue{(2.b)}.

\begin{figure*}[t]
    \centering
    \includegraphics[width=1\linewidth]{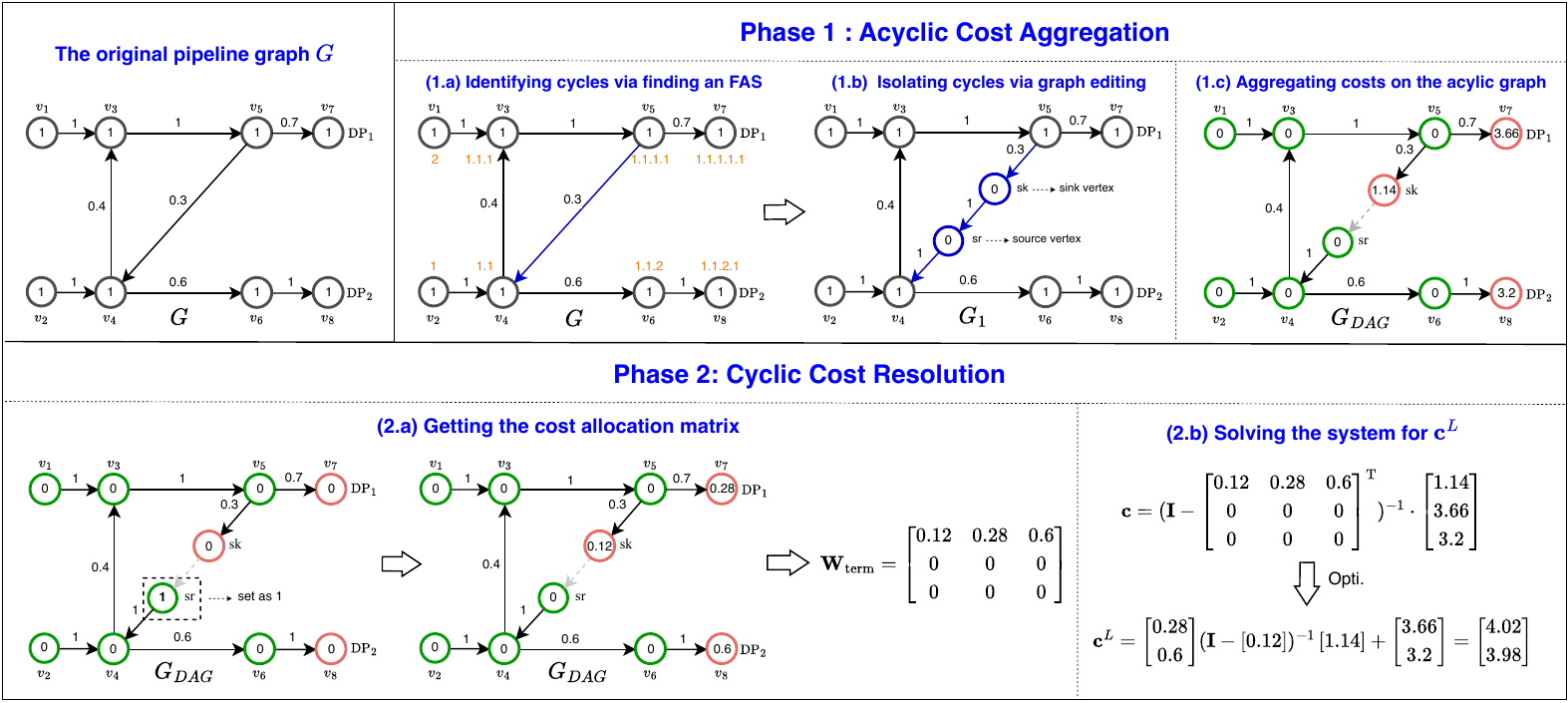}
    \vspace{-0.2in}
    \caption{An illustrative workflow of \func{CostApp}.}
    \vspace{-0.1in}
    \label{fig:costapp_workflow_example}
\end{figure*}

\partitle{Key design} The core idea of \func{CostApp} is the injection of auxiliary sink and source vertices on the feedback arcs to 
overcome both the \textit{cyclic} and \textit{massive scale} challenges (\S\ref{sec:intro}). In Phase~1, sink vertices act as \textit{collectors} to intercept all costs that would flow over feedback arcs. This simultaneously \textit{isolates cycles} and \textit{simplifies cost dependencies}, paving the way for formulating a compact equivalent linear system. To resolve this compact system in Phase 2, the source vertices serve as \textit{re-injection points}. They capture how the intercepted costs at the sink vertices re-enter the original graph, allowing us to obtain a compact cost allocation matrix and efficiently solve this system.

% . To overcome the \textit{cyclic} and \textit{massive scale} challenges (stated in \S\ref{sec:intro}), \func{CostApp} introduces sink vertices that act as \textit{collectors} to intercept all costs that would flow over feedback arcs. This enables us to simultaneously \textit{isolate cycles} and \textit{simplify the cost flow dependencies} in Phase 1. This sets up the stage for formulating a compact equivalent linear system. To resolve this compact system in Phase 2, the source vertices serve as \textit{re-injection points}. They capture how the intercepted costs at the sink vertices re-enter the original graph, based on which we can obtain a compact cost allocation matrix and efficiently solve the reduced system.

Figure~\ref{fig:relation} summarizes the key design of \func{CostApp}.
By strategically injecting auxiliary vertices and integrating \func{RollUpApp} and \func{MatrixApp}, \func{CostApp} shifts the computational bottleneck from a massive $|V| \times |V|$ matrix inversion to a tiny one (often less than 0.05\% of $|V|$ in size, see \S\ref{sec:exp_realworld}). Our design is simple yet effective, achieving a dramatic performance gain while preserving the exactness of results (see Theorem~\ref{thm:costapp_correctness}). We now detail each phase.

\subsection{Phase 1: Acyclic Cost Aggregation}
\label{sec:phase1}

The goal of phase 1 is to aggregate all direct costs $\mathcal{C}_{dc}(v)$ from millions of internal vertices onto a small set of terminal vertices, while preserving the manufacturing cost of every DP. To achieve this, directly using \func{RollUpApp} on the original graph to push cost would fail, as costs would circulate indefinitely in cycles. Our solution, therefore, involves three steps: (1) identifying cycles via finding an FAS, (2) isolating cycles via graph editing, and (3) aggregating costs on the resulting acyclic graph.

% The key innovation lies in the first two steps: first, transforming the graph to break cycles, and second, using \func{RollUpApp} to efficiently propagate costs on the modified graph.

% To prevent the infinite circulation of costs, we must first identify and isolate the cycles.

\subsubsection{\darkBlue{(1.a)} Identifying Cycles via Finding an FAS}

% \subsection{Identifing the Feedback Arc Set (FAS)}
\label{sec:find_fas}

To make cost aggregation possible, we must first break all cycles while preserving the majority of the original graph structure. A standard way in graph theory is to find a FAS, \ie a subset of arcs whose removal makes the graph acyclic. While finding a minimum FAS is an NP-hard problem~\cite{karp1972reducibility,crescenzi1995compendium,even1995joseph}, for our purpose \emph{any} valid FAS suffices, as long as it can be computed efficiently on large graphs.

We therefore design a scalable Pregel-style algorithm \func{FindFAS} with the following high-level strategy. First, we assign each vertex a hierarchical label. Given these labels, any arc $\langle u,v\rangle$ for which the destination label is a prefix of the source label is classified as a feedback arc. The set of all such arcs forms a valid FAS (Lemma~\ref{lem:FAS}). Arcs that do not satisfy this prefix condition need not be included in the FAS.

% Our strategy is to first establish a linear ordering of all vertices in the graph $G$. Given such an ordering, any arc $\langle u, v \rangle$ where $u$ appears after $v$ is considered a \textit{backward} or inconsistent arc. The set of all such inconsistent arcs constitutes a valid FAS. 

\partitle{Distributed labeling for vertex ordering}
To establish the ordering, we use a synchronous labeling algorithm \func{Label}. Each vertex $v$ maintains a label $Label(v)$, represented as a dot-separated sequence of positive integers (e.g., `1.2.1'). We use the conventional lexicographic order $<_{lo}$ defined below.

\begin{definition}[Lexicographical Label Ordering]
\label{def:lexicographic_order}
Given two distinct labels $Label_A = a_1.a_2.\dots.a_p$ and $Label_B = b_1.b_2.\dots.b_q$, we say $Label_A$ is lexicographically smaller than $Label_B$, denoted $Label_A <_{lo} Label_B$, if either:
% A>B： (1).A是B的前缀，如A=1.2, B=1.2.1 (2).存在一个位置k，使得在k之前的所有位置相等，而在k位置上A的值小于B的值，如A=1.2.1, B=1.2.5
% 对于1.3.1和1.2， 1.2 > 1.3.1
% 对于1.2.1和1.3，1.2.1 > 1.3 (1.2.1>1.2, 1.2>1.2.1)'
% 即：逐位比较大小，以第一个不相等的位置为比较结果
\begin{enumerate}
    \item $Label_A$ is a prefix of $Label_B$ (\ie $p < q$ and $a_i = b_i$ for all $1 \le i \le p$).
    \item There exists an index $k \le \min(p, q)$ such that $a_i = b_i$ for all $i < k$ and $a_k < b_k$.
\end{enumerate}

\end{definition}

The Pregel-based \func{Label} algorithm assigns distinct integer labels to roots and initializes every other vertex with $\top$, a value larger than any valid label. In each superstep, every active vertex $v$:

% 传播时越来越大，所以正常都是小指大。当遇到大指小时（且小的是大的前缀），就是feedback。
% 每个节点会选择收到最小的Label。

\begin{enumerate}
    \item collects candidate labels from its upstream neighbors,
    \item adopts the lexicographically \textit{smallest} candidate if it is smaller than $Label(v)$, and
    \item propagates an extended version of its new label, obtained by appending a local outgoing-edge suffix, to its downstream neighbors.
\end{enumerate}

This process repeats until no label changes. In Figure~\ref{fig:costapp_workflow_example}(1.a), a root with label `1' proposes `1.1' along its first outgoing branch. The downstream vertex keeps that label if it is smaller than all other incoming label, and then extends it for the next hop. Hence labels grow from an ancestor to its descendants, while every vertex consistently selects one smallest incoming label. The full pseudocode is in Appendix~\ref{appendix:label}.

% The \func{Label} algorithm is guaranteed to terminate. Let $L_p$ be the length of the longest simple path in any strongly connected component of $G$. The algorithm converges in a number of supersteps proportional to $L_p$.

\begin{lemma}
\label{thm:label}
\func{Label} terminates and assigns a stable label to every vertex in at most $L_p+1$ supersteps, where $L_p$ is the length of the longest simple path in $G$.
\end{lemma}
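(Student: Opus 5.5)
The plan is to prove two things separately: first that \func{Label} terminates, and then that it stabilizes within $L_p+1$ supersteps. Both rest on one invariant. Every label ever held by a vertex $v$ equals the path label $\lambda(P)$ of some \emph{simple} path $P$ from a root to $v$. Here $\lambda(P)$ is the root's integer followed by the outgoing-edge suffixes along $P$.

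\textbf{Step 1 (monotonicity and the invariant).} By rule (2), $Label(v)$ only ever decreases in $<_{lo}$, starting from $\top$. Every candidate arriving at $v$ is the current label of some upstream vertex $u$ extended by one edge suffix. So, by induction on supersteps, each adopted label is $\lambda(W)$ for some walk $W$ from a root. Suppose $W$ revisits a vertex $x$, and consider the first such revisit. The candidate presented to $x$ at that moment has the form $\ell.s_1.\cdots.s_k$, where $\ell$ is the label $x$ held when it forwarded along $W$. Labels at $x$ only decrease, so the current label of $x$ satisfies $Label(x)\le_{lo}\ell <_{lo}\ell.s_1\cdots s_k$, using case (1) of Definition~\ref{def:lexicographic_order}. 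Hence $x$ rejects the candidate, and such a walk is never propagated past $x$. Adopted labels therefore come from simple paths only.

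\textbf{Step 2 (termination).} $G$ is finite, so there are finitely many simple root paths and hence finitely many possible labels per vertex. Each vertex's label strictly decreases every time it changes, so it changes only finitely often. A vertex becomes active only when a neighbour changes, so the process stops.

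\textbf{Step 3 (the $L_p+1$ bound).} For each $v$, define $\lambda^*(v)$ as the minimum of $\lambda(P)$ over simple root paths $P$ to $v$ that are actually realizable, i.e.\ whose every prefix label was adopted at its endpoint. By Step 1 the stable label is exactly this minimum. I would prove by induction on $k$ that after superstep $k+1$, every vertex whose optimal realizable path has length at most $k$ already holds its final label. Roots are correct at superstep $1$. For the inductive step, let $u$ be the predecessor of $v$ on the optimal path. Then $u$'s label is final after superstep $k$, it is forwarded in that superstep, and $v$ adopts it in superstep $k+1$. Because every such path is simple, $k\le L_p$, which gives at most $L_p+1$ supersteps.

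\textbf{Main obstacle.} The hard point is the prefix-closure of optimal paths in Step 3. The relation $<_{lo}$ is not preserved under appending suffixes when one label is a prefix of the other. For example, if $\ell'$ is a prefix of $\ell$, then $\ell'.e$ and $\ell.e$ can compare either way. So it is not automatic that the optimal label of $v$ is an extension of the \emph{final} label of its predecessor. My first attempt would be to show this case cannot arise along a simple path. If $u$'s final label $\ell'$ is a proper prefix of an earlier label $\ell$, then $\ell'$ reaches $u$ along a shorter sub-path, and hence is already available by superstep $k$. The intention is to apply the induction to the path that realizes $\ell'$ rather than to the path through $\ell$. If that fails, the fallback is to measure supersteps by the length of the path realizing each vertex's current label. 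That length is bounded by $L_p$ and strictly increases along each propagation hop, so no vertex can change after superstep $L_p+1$.
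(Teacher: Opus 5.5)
Your proposal is correct once the fallback in your last paragraph becomes the actual Step~3; the optimal-path induction is not needed. The paper omits this proof and says only that it mirrors Theorem~\ref{thm:simple_ca_cycle}. That theorem's argument is that information advances one arc per superstep, so the longest path bounds the number of supersteps. Your fallback is that argument. Your Step~1 supplies the ingredient needed to run it on a cyclic graph: a label that extends a vertex's own earlier label is always rejected, so labels travel only along simple paths.

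To make the fallback precise, state the timing fact explicitly. A vertex sends only in the superstep in which it adopts a new label; roots send only at initialization, which you can count as superstep $0$. Messages are consumed in the next superstep. By induction, a label adopted in superstep $t$ therefore encodes a root path with exactly $t$ arcs. By Step~1 that path is simple, so $t\le L_p$. Hence no label changes after superstep $L_p$, and superstep $L_p+1$ only rejects the last messages. This gives termination and the bound together, so Step~2 becomes redundant.

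The obstacle you flag in Step~3 also cannot arise, again because of Step~1. Any two labels that a vertex $u$ ever holds both encode simple root paths ending at $u$, so neither is a proper prefix of the other. If $\ell'$ were a proper prefix of $\ell$, the path encoded by $\ell$ would visit $u$ at the end of its $\ell'$-prefix and again at its own end. Two distinct labels that are not prefix-related first differ at a position inside both, so appending the same suffix $j$ preserves their order.

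Now suppose $v$ adopts its final label $\ell.j$ from $u$ while $u$ holds $\ell$. If $u$ later adopted some $\ell'<_{lo}\ell$, it would send $\ell'.j$ to $v$. After that, $v$'s label would be at most $\ell'.j<_{lo}\ell.j$, contradicting finality. So $u$'s final label is $\ell$, which is exactly the prefix-closure your induction needs. Your first attempt, which redirects the induction to a shorter sub-path realizing $\ell'$, handles a case that never occurs.

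With this observation you could keep the induction; it additionally identifies which path realizes each final label. For the lemma itself, the hop-counting fallback is the shorter and cleaner proof.
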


The proof is similar to the proof of Theorem \ref{thm:simple_ca_cycle} and we omit it here for brevity.

% \begin{theorem}
% \label{thm:label}
% \func{Label} terminates and assigns a final, stable label to every vertex in $G$ within at most $L_p+1$ supersteps, where $L_p$ is the length of the longest simple path in $G$.
% \end{theorem}

\partitle{Finding an FAS}
Once \func{Label} terminates, every arc $\langle u,v\rangle$ falls into exactly one of the following types:

\begin{enumerate}
    \item direct forward arc: $Label(u)$ is a prefix of $Label(v)$ (\eg $Label(v_4)=1.1$, $Label(v_6)=1.1.2$ in Figure~\ref{fig:costapp_workflow_example}).
    \item cross forward arc: $Label(u) <_{lo} Label(v)$ but $Label(u)$ is not a prefix of $Label(v)$.
    \item feedback arc: $Label(v)$ is a prefix of $Label(u)$ (\eg $Label(v_5)=1.1.1.1$, $Label(v_4)=1.1$ in Figure~\ref{fig:costapp_workflow_example}).
    \item cross backward arc: $Label(v) <_{lo} Label(u)$ but $Label(v)$ is not a prefix of $Label(u)$ (\eg $Label(v_1)=2$, $Label(v_3)=1.1.1$ in Figure~\ref{fig:costapp_workflow_example}).
\end{enumerate}

% \begin{lemma}
% \label{lem:types of arcs}
% For any arc $\langle u, v \rangle \in A(G)$, the final labels $L(u)$ and $L(v)$ satisfy exactly one of the following conditions:
% \begin{enumerate}
%     \item Direct Forward Arc: $L(v)$ is a direct extension of $L(u)$ (\eg $L(u)=1.2$, $L(v)=1.2.3$).
%     \item Cross Forward Arc: $L(v) >_{lo} L(u)$, but $L(u)$ is not a prefix of $L(v)$ (\eg $L(u)=1.2$, $L(v)=1.3$).
%     \item Feedback Arc: $L(u) >_{lo} L(v)$. (\eg $L(u)=1.3$, $L(v)=1.2$).
% \end{enumerate}
% \end{lemma}

Only type~(3) arcs constitute $A_{\text{FAS}}$. A type~(4) arc may lie on a cyclic path, but its destination label is not a prefix of its source label, so it need not be selected.
\begin{lemma}
\label{lem:FAS}
The set

\[
A_{\text{FAS}}=\{\langle u,v\rangle\in A(G)\mid Label(v)\text{ is a prefix of }Label(u)\}
\]
is a valid feedback arc set for $G$. In other words, $G'=(V,A\setminus A_{\text{FAS}})$ is a DAG.
\end{lemma}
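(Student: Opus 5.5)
The plan is to build a strict total order $\prec$ on the final labels, modelled on DFS \emph{post-order}, and to show that every arc of $A\setminus A_{\text{FAS}}$ strictly decreases it. A directed cycle in $G'$ would then give $x \prec x$ for a vertex $x$ on the cycle, which is impossible, so $G'$ is a DAG.

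The order is defined as follows. For two final labels $a=a_1.\dots.a_p$ and $b=b_1.\dots.b_q$, put $a \prec b$ if either
\begin{itemize}
\item $b$ is a proper prefix of $a$ (a descendant finishes before its ancestor), or
\item at the first index $k\le\min(p,q)$ where they differ, $a_k<b_k$ (an earlier subtree finishes earlier).
\end{itemize}
This is a strict total order on distinct finite integer sequences. It agrees with $<_{lo}$ of Definition~\ref{def:lexicographic_order} except on prefix pairs, where it is reversed.

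Two facts about the stable labels are needed, and both come from Lemma~\ref{thm:label} (termination and stability) together with the update rule of \func{Label}.

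First, stability gives a minimality property. For every arc $\langle u,v\rangle$ with $v$ not a root, $Label(v)\le_{lo} Label(u).s$, where $s$ is the suffix that $u$ appends on that outgoing edge. If this failed, $v$ would have adopted the smaller candidate in a further superstep.

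Second, roots receive distinct labels and have no incoming arcs. I also assume every vertex is reachable from a root, so that no vertex keeps the value $\top$; I would state this explicitly, since the paper's labeling implicitly relies on it.

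The main step is a case analysis over the four arc types listed before the lemma. The goal is to show that for every arc $\langle u,v\rangle\notin A_{\text{FAS}}$ we have $Label(v)\prec Label(u)$.
\begin{itemize}
\item \emph{Direct forward arcs.} $Label(v)$ properly extends $Label(u)$, so $Label(v)\prec Label(u)$ by the first clause. If the labels were equal, $Label(v)$ would be a prefix of $Label(u)$ and the arc would lie in $A_{\text{FAS}}$, so equality is already excluded.
\item \emph{Cross backward arcs.} $Label(v)$ and $Label(u)$ first differ at some index $k$ with $v_k<u_k$, so $Label(v)\prec Label(u)$ by the second clause.
\item \emph{Cross forward arcs.} These are the obstacle, since they would point the wrong way. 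I would show that none exist once labels are stable. Suppose $Label(u)<_{lo}Label(v)$ with $Label(u)$ not a prefix of $Label(v)$. Then they first differ at some $k\le |Label(u)|$ with $u_k<v_k$. Since $Label(u).s$ agrees with $Label(u)$ on its first $k$ positions, it also satisfies $Label(u).s<_{lo}Label(v)$. This contradicts the minimality property.
\end{itemize}

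With the arc types settled, $\prec$ strictly decreases along every arc of $G'$, which gives the lemma.

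The part I expect to need the most care is pinning down precisely what \func{Label} guarantees at stability, in Appendix~\ref{appendix:label}. Specifically, I need that each non-root vertex's label is at most every candidate $Label(u).s$ offered by its in-neighbours. This requires that a change in $Label(u)$ always triggers re-propagation of the extended label. I also need that the unique values are well defined, which comes from distinct root labels and distinct per-edge suffixes. I would extract exactly these invariants from the pseudocode before running the case analysis above.
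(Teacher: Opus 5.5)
Your proof is correct, but it takes a different route from the paper's.

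The paper argues one cycle at a time. It picks the vertex $x$ with the $<_{lo}$-smallest label on a cycle and uses the sandwich $Label(x)\le_{lo} Label(v)\le_{lo} Label(u).s$, whose right-hand inequality is the same stability invariant you isolate. Together with the fact that any label lying between a sequence and one of its extensions must have that sequence as a prefix, this shows inductively that $Label(x)$ is a prefix of every label on the cycle. Distinctness of labels then makes the arc closing the cycle at $x$ a proper-prefix arc, so it lies in $A_{\text{FAS}}$.

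You instead give a single global certificate: the post-order comparison $\prec$ strictly decreases along every arc of $G'$. Your unproved claim that $\prec$ is a strict total order is immediate once you note that $\prec$ is plain lexicographic comparison after appending a symbol larger than every integer. Sorting vertices by decreasing $\prec$-order of their labels then yields an explicit topological order of $G'$. This is the label analogue of the classical DFS fact that back edges form a feedback arc set and reverse post-order sorts the rest. Along the way you also show that the paper's ``cross forward'' arcs cannot occur once labels are stable, so only three of its four arc types are ever nonempty.

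Each approach has an advantage. Yours does not need distinctness of labels if ``prefix'' in $A_{\text{FAS}}$ is read non-strictly, and it produces the order explicitly. The paper's is shorter because it never defines or verifies a second order.

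Both arguments rest on the same two ingredients. The first is the invariant $Label(v)\le_{lo} Label(u).s$. It holds because every label change, and the root initialization, triggers a send along all out-arcs, and labels never increase afterwards. The second is that every vertex ends with a finite label. You are right to state the reachability-from-a-root assumption explicitly, since the paper leaves it implicit.
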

\emph{Proof sketch.} Consider any directed cycle and let $x$ be a vertex on it with the smallest label. Because every vertex keeps the smallest label proposed by its predecessors, traversing the cycle from $x$ forces $Label(x)$ to remain a prefix of the labels on that traversal. If $y$ is the predecessor of $x$ on the cycle, then $Label(x)$ is a prefix of $Label(y)$, so the closing arc $\langle y,x\rangle$ belongs to $A_{\text{FAS}}$. Therefore, removing $A_{\text{FAS}}$ removes at least one arc from every directed cycle, so no directed cycle remains. Appendix~\ref{appendix:proof_fas} gives the full proof.

Therefore, \func{FindFAS} first runs \func{Label} and then, in one superstep, marks each incoming arc $\langle u,v\rangle$ for which $Label(v)$ is a prefix of $Label(u)$. Its superstep complexity is $O(L_p)$, dominated by \func{Label}.

% (Theorem~\ref{thm:Alg-FAS}).
% \begin{theorem}
% \label{thm:Alg-FAS} 
% The \func{FindFAS} algorithm correctly computes a feedback arc set of $G$ with a total complexity dominated by the \func{Label} algorithm, requiring $O(L_p)$ supersteps.
% \end{theorem} 

Figure~\ref{fig:costapp_workflow_example}(1.a) illustrates the result. Each vertex receives a hierarchical label (orange), and $\langle v_5,v_4\rangle$ is selected (blue) because $Label(v_4)=1.1$ is a prefix of $Label(v_5)=1.1.1.1$.

% Our approach to identifying a valid (though not necessarily minimum) FAS is to first establish a linear ordering over all vertices in the graph. Given such an ordering, any arc that travels \textit{backward} from a higher-ordered vertex to a lower-ordered one is inconsistent with a directed acyclic structure. The collection of all such backward arcs constitutes a valid FAS, as their removal eliminates all cycles. We have developed a scalable, distributed algorithm, \func{FindFAS}, to compute this ordering and identify the FAS. The detailed implementation is deferred to \S\ref{sec:find_fas} to maintain focus on the main workflow.

% For instance, consider the cyclic dependency graph in Figure~\ref{fig:costapp_workflow_example} involving vertices $\{v_3,v_4, v_5, \}$. Our \func{FindFAS} algorithm might establish a linear ordering where $v_4 < v_3 < v_5$. Under this ordering, the arcs $(v_4, v_3)$ and $(v_3, v_5)$ are forward arcs. However, the arc $(v_5, v_4)$ is a backward or feedback arc because it points from a higher-ordered vertex ($v_5$) to a lower-ordered one ($v_4$). This arc is therefore identified as a member of the FAS, as highlighted in blue in Figure~\ref{fig:costapp_workflow_example}.

\subsubsection{\darkBlue{(1.b)} Isolating Cycles via Graph Editing}
Simply deleting the FAS arcs would break the original cost flows. To preserve them while removing cycles from Phase~1, we replace each FAS arc with a path through a \textit{sink} vertex and a \textit{source} vertex. The sink retains the cost during Phase~1, and the source marks where that cost resumes its downstream flow in Phase~2.

To this end, we introduce a graph transformation \func{GraphEdit}. For each feedback arc $\langle u,v\rangle \in A_{\text{FAS}}$, \func{GraphEdit} performs the following local rewrite on $G$:
\begin{enumerate}
    \item Remove the original arc $\langle u,v\rangle$.
    \item Create two new auxiliary vertices: a \emph{sink} vertex $v_{sk}$ and a \emph{source} vertex $v_{sr}$, and add them to $V_{sk}$ and $V_{sr}$, respectively.
    \item Insert three arcs forming the path $u \to v_{sk} \to v_{sr} \to v$ with $w(u, v_{sk}) = w(u,v), w(v_{sk}, v_{sr}) = 1, w(v_{sr}, v) = 1$.
    % $(u, v_{sk})$ with weight $w(u,v)$, and $(v_{sk}, v_{sr})$, $(v_{sr}, v)$ both with weight $1$.
\end{enumerate}

% The key insight here is that this transformation precisely isolates the original cyclic dependency into a single, controlled link, $v_{sk} \to v_{sr}$. The new sink vertex $v_{sk}$ that can capture all costs that would have entered the cycle at that point, and the new source vertex $v_{sr}$ can be exploited to computing the reduced allocation matrix in Phase 2 (will be illustrated in \S\ref{sec:phase2}).
Figure~\ref{fig:costapp_workflow_example}(1.b) shows this transformation. The feedback arc $\langle v_5,v_4\rangle$ is replaced by $v_5\to v_{sk}\to v_{sr}\to v_4$. Phase~1 accumulates the cost at $v_{sk}$, and Phase~2 uses $v_{sr}$ to compute how that cost reaches the terminal vertices. Appendix~\ref{appendix:graphedit} gives the formal illustration of \func{GraphEdit}.

Crucially, this transformation does not change the manufacturing cost of any original vertex. The path $u \to v$ in $G$ is replaced by an equivalent path $u \to v_{sk} \to v_{sr} \to v$ in $G_1$ that carries exactly the same cost (due to $w(u,v_{sk})\cdot w(v_{sk},v_{sr})\cdot w(v_{sr},v) = w(u,v)$), and all other edges remain untouched. As a result, from the perspective of the original vertices, the total cost transmitted along every path is preserved. This is formalized below.

\begin{theorem}[Equivalence of MC of DPs after Graph Editing]
\label{thm:graph_edit}
    Let $G_1$ and $C_1$ be the transformed graph and cost state produced by $\func{GraphEdit}(G, \mathcal{C}_{dc}, A_{\text{FAS}})$. The manufacturing cost $\mathcal{C}_{mc}(p)$ for any original data product $p \in L(G)$ is identical in both the original system defined on $(G, \mathcal{C}_{dc})$ and the transformed system defined on $(G_1, C_1)$.
\end{theorem}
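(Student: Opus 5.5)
The plan is to argue directly from the recursive definition (Eq.~\ref{equ:ma}) and the uniqueness guaranteed by Theorem~\ref{thm:matrixapp_exact}. I take $C_1$ to assign $\mathcal{C}_{dc}(v)$ to every original vertex and $0$ to every auxiliary sink and source vertex, which is what \func{GraphEdit} should produce, since no resource cost is incurred by the auxiliary vertices. The idea is to take the unique MC vector $\mathbf{c}$ of $(G,\mathcal{C}_{dc})$ and extend it to a candidate vector $\mathbf{c}_1$ on $V(G_1)$. I then check that $\mathbf{c}_1$ satisfies the linear system of $(G_1,C_1)$ and invoke uniqueness there. For each feedback arc $\langle u,v\rangle\in A_{\text{FAS}}$ with auxiliary vertices $v_{sk},v_{sr}$, I set $\mathbf{c}_1(v_{sk})=\mathbf{c}_1(v_{sr})=w(u,v)\,\mathcal{C}_{mc}(u)$, and I set $\mathbf{c}_1(x)=\mathcal{C}_{mc}(x)$ for every original vertex $x$.

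The verification splits into three cases. First, for a sink $v_{sk}$: its only in-arc is from $u$, with weight $w(u,v)$, and $C_1(v_{sk})=0$, so Eq.~\ref{equ:ma} gives exactly $w(u,v)\mathbf{c}_1(u)$. Second, for a source $v_{sr}$: its only in-arc comes from $v_{sk}$ with weight $1$, so it inherits the same value. Third, for an original vertex $x$, the in-neighbourhood in $G_1$ is obtained from that in $G$ by replacing each removed feedback predecessor $u$ with its source $v_{sr}$. That arc carries weight $1$ and $\mathbf{c}_1(v_{sr})=w(u,x)\mathcal{C}_{mc}(u)$, so each summand of the input cost is reproduced term for term. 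Hence Eq.~\ref{equ:ma} on $G_1$ reduces to Eq.~\ref{equ:ma} on $G$ at $x$, which $\mathbf{c}$ satisfies. Leaves of $G$ remain leaves of $G_1$, because \func{GraphEdit} only adds out-arcs to tails of feedback arcs, and those are non-leaves. So $L(G)\subseteq L(G_1)$, and the values on $L(G)$ coincide.

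To conclude, I need the system of $G_1$ to have a unique solution, so that $\mathbf{c}_1$ is \emph{the} MC vector of $(G_1,C_1)$. I would apply Theorem~\ref{thm:matrixapp_exact} to $G_1$, which first requires checking that $G_1$ satisfies cost conservation. At $u$, the weight $w(u,v)$ is moved from arc $\langle u,v\rangle$ to $\langle u,v_{sk}\rangle$, so the out-sum is unchanged. At $v_{sk}$ and $v_{sr}$, the single out-arc has weight $1$. Every other vertex keeps its out-arcs.

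The main obstacle is the other hypothesis in the invertibility argument behind Theorem~\ref{thm:matrixapp_exact}, namely that every vertex has a positive-weight path to a leaf. This should also transfer: any such path in $G$ maps to a path in $G_1$ by replacing feedback arcs with their three-arc detours, which keeps all weights positive. Each auxiliary vertex lies on such a detour, so it reaches $v$ and from there a leaf. If that hypothesis is only implicit in the appendix proof, I would state this path-lifting explicitly. Alternatively, I could prove the spectral bound for $\mathbf{W}_1$ directly, noting that $\mathbf{W}_1^k$ entries between original vertices are sums over lifted paths of $\mathbf{W}^k$-type terms.
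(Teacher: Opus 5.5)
Your proof is correct, and it uses the same core computation as the paper's. The chain $u\to v_{sk}\to v_{sr}\to v$ carries total weight $w(u,v)$, so the two auxiliary unknowns can be substituted away and the equation at $v$ in $G_1$ collapses to Eq.~\ref{equ:ma} in $G$.

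The difference lies in which direction the uniqueness argument runs. The paper starts from the solution $\mathbf{c}_1^*$ of the $G_1$ system and eliminates $v_{sk},v_{sr}$. Through its loose ``assuming $\mathbf{c}^*(z)=\mathbf{c}_1^*(z)$ for all upstream neighbors'' step, it then concludes by uniqueness that $\mathbf{c}_1^*$ restricted to $V(G)$ equals $\mathbf{c}^*$. It simply takes for granted that the $G_1$ system is uniquely solvable.

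You go the other way. You lift $\mathbf{c}^*$ to $G_1$, check that the lift solves the $G_1$ system, and justify uniqueness on $G_1$ by re-verifying cost conservation and the reach-a-leaf hypothesis. That hypothesis is used only implicitly in Appendix~\ref{appendix:proof_matrix_inver}. Your version is therefore more rigorous than the paper's.

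Your ``main obstacle'' can be sidestepped entirely by combining the two directions. Any solution $\mathbf{c}_1'$ of the $G_1$ system must satisfy $\mathbf{c}_1'(v_{sk})=\mathbf{c}_1'(v_{sr})=w(u,v)\,\mathbf{c}_1'(u)$. The same term-for-term substitution then shows that its restriction to $V(G)$ solves the $G$ system, so it equals $\mathbf{c}^*$ by Theorem~\ref{thm:matrixapp_exact}. Uniqueness on $G_1$ thus follows from uniqueness on $G$, and existence follows from your lift. No path-lifting or spectral argument on the edited matrix is needed.

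Both arguments also show that the result holds for any set of replaced arcs, not only a valid FAS.
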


The proof is provided in Appendix \ref{appendix:proof_graph_edit}.

\subsubsection{\darkBlue{(1.c)} Aggregating Costs on the Acyclic Graph}
% \subsubsection{Propagating Costs to Terminal Vertices}
\label{sec:aggregating_costs}

After graph editing, we obtain a modified graph $G_1$ with additional sink/source vertices and arcs. Let $A_{S2S}$ denote the set of sink-to-source arcs $\{\langle v_{sk},v_{sr}\rangle\}$. By construction, removing $A_{S2S}$ yields an acyclic subgraph $G_{DAG} = G_1 \setminus A_{S2S}$. The leaves of $G_{DAG}$, which we call \emph{terminal vertices}, are
\[
V_{\text{term}} = L(G_{DAG}) = L(G) \cup V_{sk},
\]
that is, all original DPs (leaves of $G$) and all newly created sink vertices.

The acyclic structure of $G_{DAG}$ allows us to aggregate costs onto the much smaller set $V_{\text{term}}$ without disrupting the MC of DPs. Specifically, we run \func{RollUpApp} on $G_{DAG}$ (Alg.~\ref{alg:costapp}, line 3), initializing each vertex $v$ with its post-\func{GraphEdit} cost state $C_1(v)$. Because \func{RollUpApp} correctly transfers all costs to the leaves of an acyclic graph (Theorem~\ref{thm:simple_ca_cycle}), it propagates all costs through $G_{DAG}$ and consolidates them onto $V_{\text{term}}$. The following theorem states that this aggregation preserves the MC of every DP.

\begin{theorem}[Equivalence of MC of DPs after Aggregating Costs]
    \label{thm:aggregation_cost_preservation}
    Let $C_{DAG}$ be the cost state output by \func{RollUpApp}($G_{DAG}, C_1$). The manufacturing cost $\mathcal{C}_{mc}(p)$ for any original data product $p \in L(G)$ is identical in both the system defined on $(G_1, C_1)$ and the system defined on $(G_1, C_{DAG})$.
\end{theorem}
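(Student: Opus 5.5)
The plan is to view the manufacturing cost on $(G_1,C)$ as a linear function of the initial cost state $C$, and to show that one \func{CR} operation leaves this function unchanged. By Theorem~\ref{thm:matrixapp_exact}, applied to $G_1$, the system $\mathbf{c}=\mathbf{x}+\mathbf{W}_1^{\mathsf T}\mathbf{c}$ has the unique solution $\mathbf{c}=(\mathbf{I}-\mathbf{W}_1^{\mathsf T})^{-1}\mathbf{x}$ for any direct-cost vector $\mathbf{x}$. This uses the fact that $G_1$ still satisfies cost conservation. The only non-leaf vertices whose out-arcs change are the sinks, each with the single out-arc weight $1$, and the sources, with single out-arc weight $1$. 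Each $u$ keeps its total out-weight because $w(u,v)$ is moved to $w(u,v_{sk})$. For a leaf $p\in L(G)$, the map $\mathbf{x}\mapsto\mathcal{C}_{mc}(p)$ is linear: $\mathcal{C}_{mc}(p)=\sum_{v}M_{v,p}\,\mathbf{x}_v$, where $M_{v,p}=[(\mathbf{I}-\mathbf{W}_1)^{-1}]_{v,p}$ is the total path-weight from $v$ to $p$ in $G_1$. It therefore suffices to show that \func{RollUpApp} on $G_{DAG}$ transforms $C_1$ into $C_{DAG}$ without changing $\sum_v M_{v,p}C(v)$.

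The key step is a one-vertex invariance lemma. Let $v$ be any vertex that is non-leaf in $G_{DAG}$, so $v\notin V_{sk}\cup L(G)$. Since $v$ is not a sink, its out-arcs in $G_{DAG}$ coincide with its out-arcs in $G_1$, and they have the same weights. A \func{CR}$(v)$ operation changes $C$ by removing $C(v)$ from $v$ and adding $w(v,v')C(v)$ to each $v'\in D^1_{G_1}(v)$. This changes $\sum_u M_{u,p}C(u)$ by $C(v)\big(\sum_{v'}w(v,v')M_{v',p}-M_{v,p}\big)$. The bracket vanishes because $v\neq p$: the identity $M=\mathbf{I}+\mathbf{W}_1M$ gives, entrywise, $M_{v,p}=\sum_{v'}w(v,v')M_{v',p}$. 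So every individual \func{CR} on a non-terminal vertex preserves $\mathcal{C}_{mc}(p)$ for all $p\in L(G)$.

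The Pregel supersteps of Alg.~\ref{alg:simpleapp-pregel} are compositions of such \func{CR} operations. Message accumulation is just addition into $C(v')$, and the parallel operations within one superstep act on disjoint source vertices, so they can be serialized. Terminal vertices (sinks and DPs) never push cost, because they are leaves of $G_{DAG}$. By Theorem~\ref{thm:simple_ca_cycle}, applied to the DAG $G_{DAG}$, the process terminates after finitely many supersteps. Induction over the finite sequence of \func{CR} operations then gives equality of $\mathcal{C}_{mc}(p)$ under $C_1$ and under $C_{DAG}$.

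The main obstacle is the distinction between $G_{DAG}$, on which \func{RollUpApp} runs, and $G_1$, on which the MC is evaluated. The lemma requires that each pushing vertex has identical out-neighborhoods in both graphs. This holds precisely because the only removed arcs are sink-to-source arcs, and sinks never push. I would state this explicitly, together with a short verification that $G_1$ satisfies cost conservation, so that Theorem~\ref{thm:matrixapp_exact} yields well-defined MCs and the identity $M=\mathbf{I}+\mathbf{W}_1M$ holds.
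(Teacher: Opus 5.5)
Your proof is correct, but it takes a different route from the paper's. The paper argues globally. It splits $V(G_1)$ into non-terminal and terminal vertices and writes the whole synchronous run of \func{RollUpApp} as $P$ applications of $\begin{pmatrix}\mathbf{W}_{11} & \mathbf{W}_{12}\\ \mathbf{0} & \mathbf{I}\end{pmatrix}^{\mathsf T}$. Using $(\mathbf{W}_{11}^{\mathsf T})^P=\mathbf{0}$, it obtains $\mathbf{d}_{22}=\mathbf{d}_{12}+\mathbf{W}_{12}^{\mathsf T}S\mathbf{d}_{11}$ with $S=\sum_{k=0}^{P-1}(\mathbf{W}_{11}^{\mathsf T})^k$. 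It then checks that $(\mathbf{c}_{11}-S\mathbf{d}_{11},\,\mathbf{c}_{12})$ solves the system for $(G_1,C_{DAG})$, so uniqueness gives $\mathbf{c}_{22}=\mathbf{c}_{12}$.

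You work instead with the dual object. The column $M_{\cdot,p}$ of $(\mathbf{I}-\mathbf{W}_1)^{-1}$ satisfies $M_{v,p}=\sum_{v'}w(v,v')M_{v',p}$ at every $v\neq p$. Hence the functional $C\mapsto\sum_u M_{u,p}C(u)$ is invariant under each push from a non-terminal vertex. Your argument is local and schedule-free: it needs only that the run terminates, not the exact matrix-power description of the synchronous execution. It would also apply unchanged to asynchronous or partial propagation. The paper's argument, in exchange, is a single algebraic check, and it yields explicit formulas for the aggregated terminal costs and for the post-aggregation costs of non-terminal vertices.

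One step should be tightened. Serializing a superstep in arbitrary order into full \func{CR} operations does not reproduce the synchronous semantics. A vertex processed after its predecessor would forward cost that Pregel delivers only in the next superstep. You can fix this in either of two ways. First, serialize in reverse topological order of $G_{DAG}$. Alternatively, note that your computation shows that pushing any amount $a$ from $v$ changes the functional by $a\bigl(\sum_{v'}w(v,v')M_{v',p}-M_{v,p}\bigr)=0$. The invariant then holds for the state ``$C^{out}$ plus in-flight messages'' under every schedule.
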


\emph{Proof intuition.} The internal vertices of $G_{DAG}$ have no outgoing path except through a terminal vertex. Because \func{RollUpApp} transfers cost without loss or duplication, aggregating the internal costs at the terminal vertices does not change the final total cost received by any terminal vertex. Restoring the isolated sink-to-source dependencies therefore yields the same MC for every DP. The full proof is in Appendix~\ref{appendix:proof_aggregation_cost_preservation}.

% By Theorem~\ref{thm:simple_ca_cycle}, $\func{RollUpApp(G_{DAG},C)}$ terminates in at most $P$ supersteps, where $P$ is the length of the longest path in $G_{DAG}$, and correctly transfers all costs to the leaves of $G_{DAG}$, \ie to $V_{\text{term}}$. The MC of DPs is preserved because the $G_{DAG}$ is a subgraph of $G_1$.

% Let $C_1$ denote the resulting cost state. It satisfies a crucial property:
% \begin{itemize}
%     \item For every terminal vertex $t \in V_{\text{term}}$, $C_1(t) > 0$ in general;
%     \item For every non-terminal vertex $v \in V(G_1) \setminus V_{\text{term}}$, $C_1(v) = 0$.
% \end{itemize}

As illustrated in Figure~\ref{fig:costapp_workflow_example}(1.c), the entire initial direct cost of the pipeline is consolidated onto the much smaller terminal set $V_{\text{term}}= \{v_7, v_8, v_{sk}\}$ (highlighted in red). The costs of all other vertices (highlighted in green) are zero, reducing the scale of the remaining cost-flow dependencies.

Phase~1 has used a scalable graph algorithm to transform a large, distributed MCP instance into a compact representation over a small number of terminal vertices, thereby setting up Phase~2 to resolve the remaining cyclic dependencies via a tiny linear system.

% yuan ban
% Upon its completion, we obtain an intermediate cost state, which we denote $C_1$, with a crucial property: all costs are now concentrated on $V_{\text{term}}$.
% \begin{itemize}
%     \item For any terminal vertex $t \in V_{\text{term}}$, $C_1(t) > 0$.
%     \item For any other vertex $v \in V(G_1) \setminus V_{\text{term}}$, $C_1(v) = 0$.
% \end{itemize}

% As illustrated in Figure \ref{fig:costapp_workflow_example}(d), the entire initial operational cost of the pipeline is now precisely and losslessly consolidated onto the much smaller terminal set $\{v_7, v_8, v_{sk}\}$ (highlighted in red). Thus, this step effectively uses a scalable algorithm to reduce a complex, distributed MCDP problem into a compact representation, opening up possibility to exactly resolve the remaining, smaller cyclic dependencies in Phase 2.

\subsection{Phase 2: Cyclic Cost Resolution}
\label{sec:phase2}
Phase~2 uses this compact representation to determine how the aggregated costs at sink vertices $V_{sk}$ are redistributed among themselves and ultimately to the DPs $L(G)$. Specifically, $C_{DAG}(t)$ contains every direct cost that can reach terminal $t$ without crossing an isolated feedback dependency. These values are sufficient because all non-terminal states are zero, and every remaining cyclic flow must leave a sink $v_{sk}$, re-enter through its paired source $v_{sr}$, and reach another member of $V_{\text{term}}$. The terminal costs and sink-to-terminal transfer ratios therefore form a closed system. Solving it accounts for repeated traversals of the original cycles and recovers the final DP costs without explicitly including any internal vertex with \(C_{DAG}(v)=0\) in the reduced system.

% This redistribution is governed by the feedback relationships isolated in the $v_{sk} \to v_{sr}$ paths, which we now resolve as a compact linear system.

\subsubsection{Formulating the Reduced Linear System}
\label{sec:formulating_reduced_system}

Our insight is that we can treat the aggregated costs $C_{DAG}$ on $V_{\text{term}}$ as \emph{new} direct costs for this reduced system. We denote this reduced cost vector by $\mathbf{d}_{\text{term}}$, where each entry corresponds to $C_{DAG}(v)$ for a terminal vertex $v$ in $V_{\text{term}}$. Consequently, this system follows our original cost definition (Eq.~\ref{equ:ma}) for the final MC of $V_{\text{term}}$, denoted by $\mathbf{c}_{\text{term}}$:

% Specifically, after Phase~1, all costs are concentrated on $V_{\text{term}} = L(G) \cup V_{sk}$.

% We treat the aggregated costs $C_1$ on terminal vertices as \emph{new} direct costs for this reduced system. We denote this reduced cost vector by $\mathbf{d}_{\text{term}}$, where each entry corresponds to $C_1(v)$ for a terminal vertex $v$ in $V_{\text{term}}$.

% This reduced system, which involves only the terminal vertices $V_{\text{term}}$, follows our original cost definition for their final manufacturing costs, $\mathbf{c}_{\text{term}}$ (Eq.~\ref{equ:ma}):

% 是否需要一个基于矩阵/无限几何序列收敛的证明
\begin{equation}
    \mathbf{c}_{\text{term}} = \mathbf{d}_{\text{term}} + \mathbf{W}_{\text{term}}^T \mathbf{c}_{\text{term}}
\end{equation}

Here, $\mathbf{W}_{\text{term}}$ is a reduced cost allocation matrix that captures how cost flows \emph{among} these terminal vertices via the isolated cyclic dependencies, where an entry $\mathbf{W}_{\text{term}}[i,j]$ represents the fraction of cost that flows from terminal vertex $t_i$ to $t_j$.

% and satisfies cost conservation (each row-sum equals to $1$, Definition~\ref{def:cost_conservation}).

This equation has the same form as the system in \S\ref{sec:matrix_app}. Because $\mathbf{W}_{\text{term}}$ retains the required nonnegativity and row-sum properties, the reduced system has a unique solution (Appendix~\ref{appendix:proof_matrix_inver}). We can therefore solve it exactly with \func{MatrixApp}. With $\mathbf{d}_{\text{term}}$ already computed in Phase~1, the remaining task is to construct $\mathbf{W}_{\text{term}}$.
%  We describe how to do this efficiently by leveraging source vertices in the next step.

\subsubsection{\darkBlue{(2.a)} Getting the Cost Allocation Matrix} 
\label{sec:get_matrix}
% The remaining task in Phase~2 is to construct the reduced cost allocation matrix $\mathbf{W}_{\text{term}}$. 
% This matrix captures the cost-flow relationships \textit{among} terminal vertices, \ie an entry $\mathbf{W}_{\text{term}}[i,j]$ represents the fraction of cost that flows from terminal vertex $t_i$ to $t_j$.
To compute $\mathbf{W}_{\text{term}}$, our core idea is to inject unit costs $1.0$ into the source vertices and track their propagation through the acyclic graph $G_{DAG}$. By observing how these costs are eventually distributed among the terminal vertices, we can construct the cost allocation matrix.

% Recall that $V_{\text{term}} = L(G) \cup V_{sk}$, and every sink $v_{sk} \in V_{sk}$ is uniquely paired with a source vertex $v_{sr} \in V_{sr}$ via a single sink-to-source arc. A crucial observation is that for a sink vertex $t_i$, $\mathbf{W}_{\text{term}}[i,j]$ is equivalent to the cost received by terminal vertex $t_j$ when a \textit{unit cost} is injected at the corresponding source vertex $v_{sr}$ and allowed to traverse the acyclic graph $G_{DAG}$. This is because the only way for cost to flow from $t_i$ to other terminal vertices in $G_1$ is through the path $t_i \to v_{sr} \to \dots \to t_j$. This observation allows us to compute $\mathbf{W}_{\text{term}}$ efficiently using \func{RollUpApp} by injecting unit cost to each source vertex.

Recall that $V_{\text{term}} = L(G) \cup V_{sk}$, and every sink vertex $t_j \in V_{sk}$ is uniquely paired with a source vertex $v^{(j)}_{sr} \in V_{sr}$ via a single sink-to-source arc. A crucial observation is that the $(j, i)$-th entry of the matrix, $\mathbf{W}_{\text{term}}[j,i]$, precisely equals the cost fraction received by the terminal vertex $t_i$ when a \textit{unit cost} ($1.0$) is injected at the $j$-th source vertex $v^{(j)}_{sr}$ and propagated through $G_{DAG}$. This holds because any cost flowing from $t_j$ to other terminal vertices must traverse the path starting with $t_j \to v^{(j)}_{sr} \to \dots \to t_i$. Leveraging this property, we can efficiently compute $\mathbf{W}_{\text{term}}$ by injecting unit costs at all source vertices and executing \func{RollUpApp}.

Based on this idea, we design \func{GetMatrix} to obtain $\mathbf{W}_{\text{term}}$ based on Pregel, as shown in Alg.~\ref{alg:get_matrix}. To enable efficient computation within a single Pregel job, we adopt a \emph{map-based} cost propagation approach. Specifically, instead of a scalar cost, we assign a cost state map to each vertex (line 3) to record \{source identifier: cost\} pairs. This representation allows the unit costs from all source vertices to be propagated in parallel within a \textit{single} Pregel job rather than through one job per source, while the identifiers keep their contributions distinguishable when paths merge. We inject unit costs at all source vertices (lines 4-8) and propagate them through $G_{DAG}$ using \func{RollUpApp$^*$} (line 9). Due to the cost conservation property (Definition~\ref{def:cost_conservation}), the final cost map at any terminal vertex $t_i$ constitutes the $i$-th column of $\mathbf{W}_{\text{term}}$ (lines 10-16). Thus, all columns of $\mathbf{W}_{\text{term}}$ are obtained in a single Pregel job, setting the stage for solving the reduced system in the next step.

Map propagation applies each arc weight independently to every value, and values with the same key are summed when paths merge. For example, suppose two sources contribute $0.6$ and $0.2$ to an intermediate vertex $x$, giving the map $\{1{:}0.6,2{:}0.2\}$. If $x$ forwards half of its state to terminal $t$, then $t$ receives $\{1{:}0.3,2{:}0.1\}$. Thus, the keys keep the contributions from different sources separate throughout the propagation.

Note that this map-based approach relies on a variant of \func{RollUpApp} that aggregates cost maps by summing values of the same keys. We term this variant \func{RollUpApp$^*$} and detail it in Appendix \ref{appendix:simpleapp_star}. Noting that \func{GetMatrix} (line 7 in Alg.\ref{alg:costapp}) and \func{RollUpApp} (line 5 in Alg.~\ref{alg:costapp}) both operate with Pregel on $G_{DAG}$, they can be merged into a single Pregel workload to avoid duplicated overhead. We denote this merged workload as \func{S\&G} in the following parts.

% 优化前算法
% \begin{algorithm}\small
% \caption{\func{GetMatrix}}
% \label{alg:get_matrix}
% \KwIn{
%     Graph $G_{DAG} = (V, A, w)$,
%     set of sink vertices $V_{sk}$, 
%     data products $L(G)$.
% }
% \KwOut{Cost allocation matrix $\mathbf{W}_{\text{term}} \in \mathbb{R}^{k \times k}$}

% \tcc{\darkBlue{1. Setup and Initialization}}
% Let $V_{term} = L(G) \cup V_{sk}$, $k = |V_{term}|$, $m = |V_{sr}|$\;
% % \tcp*{\darkBlue{The indexed terminal vertices}}
% Initialize $\mathbf{W}_{\text{term}}$ as a $k \times k$ zero matrix\;
% % Let $V^{op} = V \setminus V_{term}$ \;
% $C(v) \gets \text{empty map } \emptyset$ for each $v$ \tcp*{\darkBlue{Initiates a cost state map}}
% \For{$i \gets 1 $ to $ m$}{
%     Let $v_{sr,i}$ be the $i$-th source vertex\;
%     $C(v_{sr,i}) \gets \{i : 1.0\}$ \tcp*{\darkBlue{Inject a unit cost}}
% }

% \tcc{\darkBlue{2. \func{RollUpApp$^*$} aggregates maps by summing values of the same keys}}
% $C^{out} \gets \func{RollUpApp$^*$}(G_{DAG}, C)$

% \tcc{\darkBlue{3. Fill the matrix with the results}}

% \For{$i \gets 1$ to $k$}{
%     Let $t_i$ be the $i$-th terminal vertex\;

%     \For{$j \gets 1$ to $m$}{
%         $\mathbf{W}_{\text{term}}[j,i] \gets C^{out}(t_i).\text{getValue}(j, 0)$ \tcp*{\darkBlue{Get value for source index $i$, default to 0 if missing}}
%     }
% }
% \Return{$\mathbf{W}_{\text{term}}$}
% \end{algorithm}

\begin{algorithm}\small
\caption{\func{GetMatrix}}
\label{alg:get_matrix}
\KwIn{
    Graph $G_{DAG} = (V, A, w)$,
    set of sink vertices $V_{sk}$, 
    data products $L(G)$,
    set of source vertices $V_{sr}$.
}
\KwOut{Cost allocation matrix $\mathbf{W}_{\text{term}} \in \mathbb{R}^{k \times k}$}

\tcc{\darkBlue{1. Setup and Initialization}}
Let $V_{term} = L(G) \cup V_{sk}$, $k = |V_{term}|$\;
Initialize $\mathbf{W}_{\text{term}}$ as a $k \times k$ zero matrix\;
$C(v) \gets \text{empty map } \emptyset$ for each $v \in V$ \tcp*{\darkBlue{Initiates a cost state map}}

\For{$j \gets 1 $ to $k$}{
    Let $t_j$ be the $j$-th terminal vertex in $V_{term}$\;
    \If{$t_j \in V_{sk}$}{
        Let $v^{(j)}_{sr}$ be the source vertex paired with sink $t_j$\;
        $C(v^{(j)}_{sr}) \gets \{j : 1.0\}$ \tcp*{\darkBlue{Inject unit cost}}
    }
}

\tcc{\darkBlue{2. \func{RollUpApp$^*$} aggregates maps by summing values of the same keys}}
$C^{out} \gets \func{RollUpApp$^*$}(G_{DAG}, C)$

\tcc{\darkBlue{3. Fill the matrix with the results}}

\For{$i \gets 1$ to $k$}{
    Let $t_i$ be the $i$-th terminal vertex\;
    \For{$j \gets 1$ to $k$}{
        % 这里的if是因为只有sink vertex才会注入cost，其他terminal vertex不会注入cost，所以如果t_j不是sink vertex，那么它就不可能把cost传递给t_i了，所以直接填0
        \If{$t_j \in V_{sk}$}{
            % 这里的getValue(j, 0)是因为不是所有节点j都会触达节点i。如果t_i没有收到来自t_j的cost，那么它在cost map里就没有j这个key了，所以默认返回0。
            $\mathbf{W}_{\text{term}}[j,i] \gets C^{out}(t_i).\text{getValue}(j, 0)$ \tcp*{\darkBlue{Get value injected from $t_j$, default to 0}}
        }
        \Else{
            $\mathbf{W}_{\text{term}}[j,i] \gets 0$ \tcp*{\darkBlue{Non-sink terminals do not inject costs}}
        }
    }
}
\Return{$\mathbf{W}_{\text{term}}$}
\end{algorithm}

\subsubsection{\darkBlue{(2.b)} Solving the Reduced Linear System}
\label{sec:solving_reduced_system}

With $\mathbf{W}_{\text{term}}$ and $\mathbf{d}_{\text{term}}$ in hand, we can solve the compact system defined in \S\ref{sec:formulating_reduced_system}. Rather than apply \func{MatrixApp} to the entire reduced matrix, we exploit its block structure to further reduce the computational cost.

Recall that $V_{\text{term}} = L(G) \cup V_{sk}$. We index the rows and columns of $\mathbf{W}_{\text{term}}$ by listing $V_{sk}$ first and the DPs second. Since data products are leaves and do not propagate costs, both blocks in the DP row group are zero. Consequently, the reduced allocation matrix $\mathbf{W}_{\text{term}}$ is \textit{block upper triangular}:

$$
\mathbf{W}_{\text{term}} = 
\begin{bmatrix}
    \mathbf{W}_{sk,sk} & \mathbf{W}_{sk,dp} \\
    \mathbf{0} & \mathbf{0}
\end{bmatrix}.
$$

Therefore, only the $|V_{sk}|\times|V_{sk}|$ sink block must be inverted, while the DP costs are obtained by matrix-vector multiplication:

\begin{equation}
\label{eq:optimized_dp_calc}
    \mathbf{c}_{dp} = \mathbf{d}_{dp} + \mathbf{W}_{sk,dp}^T (\mathbf{I}_{sk} - \mathbf{W}_{sk,sk}^T)^{-1} \mathbf{d}_{sk}
\end{equation}

where $\mathbf{d}_{dp}$ and $\mathbf{d}_{sk}$ are the subvectors of $\mathbf{d}_{\text{term}}$ corresponding to data products and sink vertices, respectively. We term this variant as \func{MatrixApp$^*$}. This optimization further reduces the dominant inversion cost to $O(|V_{sk}|^3)$, followed by a cheap matrix-vector multiplication. In production data pipelines, $|V_{sk}|$ is vastly smaller than $|V(G)|$ due to the sparsity of feedback relationships ($|V_{sk}| = |A_{\text{FAS}}| < 0.05\%|V|$, see \S\ref{sec:exp_realworld}).

\subsection{Analysis}
\label{sec:analysis}
We now focus on the correctness and complexity of \func{CostApp}.

\partitle{Correctness}
% Recall that Theorem~\ref{thm:matrixapp_exact} establishes that solving the global linear system (\func{MatrixApp}) yields the exact ground truth for the MCDP problem.
To formally prove the correctness of \func{CostApp}, we demonstrate that the dimensionality reduction achieved through Phase~1 and Phase~2 preserves this exactness. 
% The input $G$ is the billing-period snapshot defined in \S\ref{sec:preliminary}. Consolidating versions defines the direct cost of each logical entity as the sum of its version costs, so it preserves the total billed cost. It does not change the cost-conservation requirement on the outgoing weights. The following correctness result therefore applies directly to the snapshot.

% In other words, we prove that the final costs of data products computed by the reduced system in \func{CostApp} are mathematically identical to those computed by \func{MatrixApp} on the full original graph.

\begin{theorem}[Correctness of \func{CostApp}]
\label{thm:costapp_correctness}
For any data pipeline graph $G = (V, A, w)$ that satisfies the cost conservation property and $C(v)$ initialized as $\mathcal{C}_{dc}(v)$, \func{CostApp} calculates the manufacturing cost $\mathcal{C}_{mc}(p)$ for all data products $p \in L(G)$ exactly, matching the exact results of the global \func{MatrixApp} algorithm.
\end{theorem}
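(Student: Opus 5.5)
The proof chains the equivalences already established and then shows that the reduced system reproduces the MC of the terminal vertices in the system $(G_1, C_{DAG})$.

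\textbf{Step 1: the edited system has the same DP costs as $G$.} By Theorem~\ref{thm:matrixapp_exact}, the ground truth is the unique solution of $\mathbf{c}=\mathbf{d}+\mathbf{W}^{\mathsf T}\mathbf{c}$ on $G$. I first check that $G_1$ still satisfies cost conservation.
\begin{itemize}
\item Each $u$ keeps its outgoing weights, with $w(u,v)$ moved onto $\langle u,v_{sk}\rangle$.
\item Each new sink and source has a single outgoing arc of weight $1$.
\item Hence $(G_1,C_1)$ has a unique MC vector.
\end{itemize}
Theorem~\ref{thm:graph_edit} then gives equality of the DP costs between $(G,\mathcal{C}_{dc})$ and $(G_1,C_1)$. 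Theorem~\ref{thm:aggregation_cost_preservation} gives equality between $(G_1,C_1)$ and $(G_1,C_{DAG})$. It therefore suffices to show that \func{MatrixApp$^*$} returns the DP entries of the MC vector of $(G_1,C_{DAG})$.

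\textbf{Step 2: the reduced system is exact for $(G_1,C_{DAG})$.} Let $\mathbf{x}$ denote the MC vector of $(G_1,C_{DAG})$. By Theorem~\ref{thm:simple_ca_cycle}, $C_{DAG}$ vanishes on every non-terminal vertex. Only sinks have an outgoing arc in $A_{S2S}$.

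Fix a non-terminal vertex $v$. Because $G_{DAG}$ is acyclic, I unroll Eq.~\ref{equ:ma} backward along $G_{DAG}$ from $v$. Every unrolled term either carries a zero direct cost or enters through a source vertex $v^{(j)}_{sr}$. Each source's only in-arc is $\langle t_j,v^{(j)}_{sr}\rangle$ with weight $1$, so $\mathbf{x}(v^{(j)}_{sr})=\mathbf{x}(t_j)$.

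Collecting terms shows that for each terminal $t_i$,
\[
\mathbf{x}(t_i)=C_{DAG}(t_i)+\sum_{t_j\in V_{sk}}\alpha_{ji}\,\mathbf{x}(t_j),
\]
where $\alpha_{ji}$ is the sum, over all paths in $G_{DAG}$ from $v^{(j)}_{sr}$ to $t_i$, of the products of arc weights. This is the same path-sum linearity used for \func{RollUpApp} on DAGs.

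Next I identify $\alpha_{ji}$ with the matrix entries. Run \func{RollUpApp} on $G_{DAG}$ with unit cost at $v^{(j)}_{sr}$ and zero elsewhere. By Theorem~\ref{thm:simple_ca_cycle}, terminal $t_i$ ends with exactly $\alpha_{ji}$. \func{RollUpApp$^*$} propagates each key independently and sums values per key, so it is $|V_{sk}|$ superposed scalar runs. Hence $\alpha_{ji}=\mathbf{W}_{\text{term}}[j,i]$ as filled in Alg.~\ref{alg:get_matrix}; rows indexed by DPs are zero since DPs inject nothing. It follows that $\mathbf{x}|_{V_{\text{term}}}$ satisfies $\mathbf{c}_{\text{term}}=\mathbf{d}_{\text{term}}+\mathbf{W}_{\text{term}}^{\mathsf T}\mathbf{c}_{\text{term}}$.

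\textbf{Step 3: uniqueness and the block formula.} The reduced system has a unique solution (Appendix~\ref{appendix:proof_matrix_inver}): each row of $\mathbf{W}_{\text{term}}$ is nonnegative with sum at most $1$, by conservation in $G_{DAG}$, and the sinks reach DPs. So its solution equals $\mathbf{x}|_{V_{\text{term}}}$.

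The block upper triangular form then gives $\mathbf{c}_{sk}=(\mathbf{I}_{sk}-\mathbf{W}_{sk,sk}^{\mathsf T})^{-1}\mathbf{d}_{sk}$. Substituting into the DP rows yields Eq.~\ref{eq:optimized_dp_calc}, with $\mathbf{I}_{sk}-\mathbf{W}_{sk,sk}^{\mathsf T}$ invertible because it is a principal submatrix. This is by direct block elimination. The output of \func{MatrixApp$^*$} therefore equals the DP costs of $G$.

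\textbf{Expected obstacle.} The delicate step is Step 2: justifying that the terminal-to-terminal transfer coefficients computed on $G_{DAG}$ capture all interactions in $G_1$. This needs that every path in $G_1$ between terminals decomposes uniquely at sink-to-source arcs into $G_{DAG}$ segments. It also needs that the invertibility argument applies to the principal sink block, which requires that every sink has a positive-weight path to some DP. I would prove the latter from the corresponding property of $G$ (every vertex reaches a leaf), carried through \func{GraphEdit}.
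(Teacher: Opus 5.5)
Your proposal is correct and follows the same chain as the paper's proof (Theorem~\ref{thm:graph_edit}, then Theorem~\ref{thm:aggregation_cost_preservation}, then the reduced solve). Your Step~2 also supplies the argument the paper only asserts: that the terminal restriction of the $(G_1,C_{DAG})$ solution satisfies $\mathbf{c}_{\text{term}}=\mathbf{d}_{\text{term}}+\mathbf{W}_{\text{term}}^{\mathsf T}\mathbf{c}_{\text{term}}$ with exactly the matrix produced by \func{GetMatrix}. One small fix: a principal submatrix of an invertible matrix need not be invertible, so justify invertibility of $\mathbf{I}_{sk}-\mathbf{W}_{sk,sk}^{\mathsf T}$ by the block-triangular identity $\det(\mathbf{I}-\mathbf{W}_{\text{term}}^{\mathsf T})=\det(\mathbf{I}_{sk}-\mathbf{W}_{sk,sk}^{\mathsf T})$ instead.
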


% 以下为正常简化版
\begin{proof}[Proof Sketch]
The proof proceeds by tracing the state transitions across the execution of \func{CostApp}. Let $\mathbf{c}^*$ be the exact cost vector solving the global linear system (Eq.~\ref{equ:matrix_form}), and $\mathbf{c}^*$ is the ground truth of MC of DPs (Theorem~\ref{thm:matrixapp_exact}). First, the transformations in \func{GraphEdit} strictly preserve the accumulated costs along any path directed to the leaves, so the costs of all original vertices are preserved in the transformed graph (Theorem~\ref{thm:graph_edit}). Second, \func{RollUpApp} aggregates all costs along the acyclic partition of the graph onto the terminal vertices. Due to Theorem~\ref{thm:aggregation_cost_preservation}, this aggregation preserves the exactness of MC of DPs. Finally, the remaining dependencies exclusively form a reduced linear system defined on the terminal vertices. Solving this reduced system via \func{MatrixApp$^*$} recovers the exact identical solution for the data products as in the global system solved by \func{MatrixApp}. The detailed proof is given in Appendix~\ref{app:proof_costapp_correctness}.
\end{proof}

\partitle{Independence from the choice of $w$}
The guarantees above hold for any fixed, nonnegative weights satisfying Definition~\ref{def:cost_conservation}. \func{FindFAS} depends only on graph topology. The supersteps and messages of \func{RollUpApp} and \func{GetMatrix} depend on paths and arcs, and the dimension of \func{MatrixApp$^*$} depends on $|A_{\mathrm{FAS}}|$. Thus, neither exactness nor asymptotic complexity (see below) depends on the numerical values of $w$. 

\partitle{Complexity}
% todo is rooted in its hybrid design优化下
\func{CostApp} combines distributed graph processing with a compact linear solve. Table~\ref{tab:efficiency_comparison} summarizes the complexity of each component. The total complexity of \func{CostApp} is the sum of its distributed and centralized components: $O(L_p \cdot (|V| + |A|)) + O(|A_{\text{FAS}}|^3 + |A_{\text{FAS}}| \cdot |DPs|)$. The first term, $O(L_p \cdot (|V| + |A|))$, represents the near-linear complexity of the Pregel-based modules (\func{FindFAS}, \func{S\&G}). These steps handle the bulk of the computation and dominate the overall runtime.
% , but they are designed to be highly scalable on distributed clusters. 
The second term, $O(|A_{\text{FAS}}|^3 + |A_{\text{FAS}}| \cdot |DPs|)$, is the complexity of \func{MatrixApp$^*$}, where $O(|A_{\text{FAS}}| \cdot |DPs|)$ represents the complexity of matrix multiplication in {MatrixApp$^*$}.

% \begin{table*}[t]
% \centering
% \caption{Complexity comparison.}
% \label{tab:efficiency_comparison}
% \small
% \resizebox{1.9\columnwidth}{!}{
%     \begin{tabularx}{\textwidth}{l|l|c|c|>{\raggedright\arraybackslash}X}
%     \toprule
%     \textbf{Category} & \textbf{Algorithm / Module} & \textbf{Scalability} & \textbf{Exactness} & \textbf{Time Complexity}  \\
%     \midrule
%     \multirow{2}{*}{\textbf{Baselines}}  & \func{RollUpApp} \cite{costrollup2024}& High & Approximate & $O(S \cdot (|V|+|A|))$ \\
%     & \func{MatrixApp} & Low & Exact & $O(|V|^3)$ \\
%     \midrule
%     \multirow{4}{*}{\textbf{Ours}} & \textbf{\func{CostApp}} & \textbf{High} & \textbf{Exact} & $\mathbf{O(L_p \cdot (|V|+|A|) + |A_{\text{FAS}}|^3 + |A_{\text{FAS}}| \cdot |DPs|)}$ \\
%     \cmidrule(lr){2-5}
%     & \quad \func{FindFAS} & - & - & $O(L_p \cdot (|V|+|A|))$ \\
%     & \quad S\&G (\func{RollUpApp} + \func{GetMatrix}) & - & - & $O(L'_p \cdot (|V|+|A|))$  \\
%     & \quad \func{MatrixApp$^*$} & - & - & $O(|A_{\text{FAS}}|^3) + |A_{\text{FAS}}| \cdot |DPs|$ \\
%     \bottomrule
%     \end{tabularx}
% }
% \vspace{0.03in}
% \begin{minipage}{0.9\textwidth}
% \footnotesize\textit{Notation: $|V|$, $|A|$, $|A_{\text{FAS}}|$, and $|\mathrm{DPs}|$ are the numbers of vertices, arcs, feedback arcs, and DPs, respectively. $S$ is the superstep limit used for the approximate baseline (\S\ref{sec:exp_exactness}). $L_p$ and $L'_p$ are the longest simple-path lengths of $G$ and $G_{DAG}$.}
% \end{minipage}
% \end{table*}

\begin{table}[h!]
\centering
\vspace{-0.1in}
\caption{Complexity comparison.}
\label{tab:efficiency_comparison}
\resizebox{0.9\columnwidth}{!}{%
\begin{tabular}{l|c|c|l}
\toprule
\textbf{Algorithm / Module} & \textbf{Scalability} & \textbf{Exactness} & \textbf{Time Complexity}  \\
\midrule
\func{RollUpApp} \cite{costrollup2024}& High & Approximate & $O(S \cdot (|V|+|A|))$ \\
\func{MatrixApp} & Low & Exact & $O(|V|^3)$ \\
\midrule
\textbf{\func{CostApp}} & \textbf{High} & \textbf{Exact} & $\mathbf{O(L_p \cdot (|V|+|A|) + |A_{\text{FAS}}|^3 + |A_{\text{FAS}}| \cdot |DPs|)}$ \\
\cmidrule(lr){1-4}
\quad \func{FindFAS} & - & - & $O(L_p \cdot (|V|+|A|))$ \\
\quad S\&G (\func{RollUpApp} + \func{GetMatrix}) & - & - & $O(L'_p \cdot (|V|+|A|))$  \\
\quad \func{MatrixApp$^*$} & - & - & $O(|A_{\text{FAS}}|^3 + |A_{\text{FAS}}| \cdot |DPs|)$ \\
\bottomrule
\multicolumn{4}{l}{ \textit{Notation: $|V|$ is the number of vertices, $|A|$ is the number of arcs, $|A_{\text{FAS}}|$ is the size of the feedback arc set, |DPs| is}}
\\
\multicolumn{4}{l}{ \textit{ the number of DPs, $S$ is the number of supersteps, $L_p \approx L_p'$ are the longest simple path length of $G,G_{DAG}$, respectively.}}
\\
% \multicolumn{4}{l}{ \textit{ }}
\end{tabular}%
}

\end{table}

% \partitle{Comparative Advantage}
% \func{CostApp} successfully resolves the fundamental trade-off between exactness and scalability presented by the basic solutions. It achieves the \textit{same exactness} as the intractable \func{MatrixApp} but replaces its $O(|V|^3)$ bottleneck with a fast, localized computation, making an exact solution feasible for large graphs. It is significantly more powerful than the approximate \func{RollUpApp}, as it guarantees a complete and correct cost allocation without the compromises of heuristic termination.

% In summary, by combining scalable distributed graph processing with a small-scale algebraic solution for cyclic dependencies, \func{CostApp} delivers a both exact and scalable solution for industrial-scale data pipelines.

\section{Experiments}
\label{sec:experiments}
In this section, we conduct a comprehensive experimental evaluation of our proposed \func{CostApp} for the MCDP problem.

Our goal is to answer the following research questions:
\begin{itemize}
    \item \textbf{RQ1: Exactness and Scalability.} How \func{CostApp} outperform baselines in terms of exactness and scalability? 
    \item \textbf{RQ2: Modular Evaluation.} What are the performance characteristics of \func{CostApp}'s individual components, and where are the primary computational bottlenecks?
    \item \textbf{RQ3: Real-world Efficiency.} How effective and practical is \func{CostApp} when applied to production data pipelines?
\end{itemize}

\subsection{Experimental Setup}
\label{sec:exp_setup}

Our workloads, including Pregel-based graph computations and SQL queries, are conducted on the MaxCompute platform ~\cite{maxcompute} that is deployed on production clusters at Alibaba (a top-tier global internet enterprise).
% The matrix related modelus (\func{MatrixApp} and \func{MatrixApp$^*$}) are performed on a single machine because distributed matrix inversion suffers from challenging deployment \cite{meng2016mllib}, prohibitive communication overhead and complex synchronization requirements \cite{choi1992scalapack}. Any improved implementation on matrix inversion benefits \func{CostApp} as well. Optimizations for matrix inversion are orthogonal to our contribution and thus beyond the scope of this paper. 
The concrete platform settings are shown in Table \ref{tab:settings}. Unless otherwise specified, all tasks are executed based on the setting in Table \ref{tab:settings}. We denote \func{S\&G} as the combination of \func{RollUpApp} and \func{GetMatrix} modules in \func{CostApp} (lines 5-6 in Alg. \ref{alg:costapp}). \func{MatrixApp$^*$} denotes the linear system solver constituted in \func{CostApp} (line 8 in Alg. \ref{alg:costapp}).

\begin{table}[ht]
    \centering
    \vspace{-0.2in}
    \caption{Execution form of modules of \func{CostApp} and the setting of platforms.}
    \label{tab:settings}

    \begin{subtable}[b]{0.48\columnwidth}
        \centering
        \resizebox{0.7\linewidth}{!}{
            \begin{tabular}{c|c}
                \toprule
                Modules  & Execution form \\
                \midrule
                \textbf{\func{FindFAS} } & Pregel \\
                \textbf{\func{GraphEdit}} & SQL \\
                \textbf{\func{S\&G} } & Pregel \\
                \textbf{\func{MatrixApp$^*$}} & Single Machine \\
                \bottomrule
            \end{tabular}
        }
        \vspace{0.1in}
        \caption{Execution form of modules.} 
        \vspace{-0.4in}
        \label{tab:modules_execution} 
    \end{subtable}
    \hfill 
    \begin{subtable}[b]{0.5\columnwidth}
        \centering
        \resizebox{\linewidth}{!}{
            \begin{tabular}{lr|r}
                \toprule
                 
                \textbf{Pregel/SQL} & |workers| & 100 \\
                 & memory per worker & 30GB  \\
                 & |threads| per worker & 1  \\
                 & |cpu cores| per worker & 8  \\
                %  \midrule
                % \textbf{SQL} &  &    \\
                \midrule
                \textbf{Single Machine} & memory & 100GB \\
                 & |cpu cores| & 8  \\
                \bottomrule
            \end{tabular}
        }
        \caption{Platform settings.} 
        \vspace{-0.4in}
        \label{tab:platform_settings} 
    \end{subtable}
\end{table}

% \noindent\textbf{Baselines.} We compare \func{CostApp} against three baselines operating directly on $G$: \func{RollUpApp}, \func{MatrixApp}-D and \func{MatrixApp}-S. \func{RollUpApp} employs cost roll-up strategy \cite{costrollup2024,engh18system,wentker2019bottom,cunnama2016using} (see Alg. \ref{alg:simpleapp-pregel}) and delivers approximate results for cyclic graphs. \func{MatrixApp}-D employs Alg. \ref{alg:matrix_app} and utilizes dense matrix representation with LU decomposition \cite{lu-decomposition} to solve the global linear system. \func{MatrixApp}-S exploits the sparsity of cost allocation matrix $\mathbf{W}$ by leveraging SuperLU \cite{li2005overview}, a state-of-the-art sparse solver. The submodule \func{MatrixApp$^*$} of \func{CostApp} applies LU decomposition since the inverted matrix in Eq. \ref{eq:optimized_dp_calc} is dense.

\noindent\textbf{Baselines.} We compare \func{CostApp} against three baselines operating directly on $G$: (1) \func{RollUpApp} (Alg.~\ref{alg:simpleapp-pregel}), a cost roll-up approach~\cite{costrollup2024,engh18system,wentker2019bottom,cunnama2016using} that yields approximate results on cyclic graphs, (2) \func{MatrixApp}-D, Alg.~\ref{alg:matrix_app} with an exact global solver using dense LU decomposition~\cite{lu-decomposition}, and (3) \func{MatrixApp}-S, Alg.~\ref{alg:matrix_app} with the state-of-the-art sparse solver SuperLU~\cite{li2005overview} to exploit the sparsity of the global allocation matrix $\mathbf{W}$. \func{MatrixApp}-D and \func{MatrixApp}-S differ only in the matrix representation and solver backend. The submodule \func{MatrixApp$^*$} in \func{CostApp} applies dense LU decomposition since the reduced inverted matrix in Eq.~\ref{eq:optimized_dp_calc} is dense.

% \func{RollUpApp} apply CA operation on the orginal pipeline graph $G$ based on Pregel Job \cite{}, and deliver a approximate solution for cyclic graphs. \func{MatrixApp}-dense and \func{MatrixApp}-sparse directly solve the global linear system on the original graph $G$. Both are exact but are expected to fail on large graphs due to their prohibitive time and space complexity.
% These baselines represent the two extremes of the exactness-scalability trade-off, providing a clear contrast to our proposed \func{CostApp}.

% basic solutions in \S\ref{sec:basic_solutions}, which are directly applying \func{RollUpApp} and \func{MatrixApp} on the original graph $G$.

% \begin{itemize}
%     \item \func{MatrixApp}$(G.\mathbf{W},G.\mathbf{d})$(\S\ref{sec:matrix_app}): This algorithm solves the global system of linear equations directly. It is theoretically exact but is expected to fail on large graphs due to its $\mathcal{O}(|V|^3)$ time and $\mathcal{O}(|V|^2)$ space complexity.
%     \item \func{RollUpApp}$(G,C,V)$(\S\ref{sec:simple_app}): This is a purely iterative, Pregel-based algorithm. It is highly scalable but provides only an approximate solution for cyclic graphs. In the scalability evaluation, we use the variant \func{RollUpApp-RT} with threshold $\epsilon_t$ to ensure termination.

%     % we configured \func{RollUpApp} to terminate after a fixed number of supersteps ($\epsilon_s=100$), representing a practical trade-off between runtime and accuracy.
% \end{itemize}

\noindent
    \begin{minipage}[b!]{0.45\linewidth}
        \centering
        \vspace{-0.2in}
        \captionof{table}{Statistics of G1---G4.}
        \label{tab:graph_stats}
        
        \vspace{0.1em} 
        
        \resizebox{\linewidth}{!}{
            \begin{tabular}{l|rrr}
                \toprule
                & \textbf{\#Vertices} & \textbf{\#Arcs} & \textbf{\#DPs}  \\
                \midrule
                G1 & 28,181,925 & 66,758,378 & 10,003,135   \\
                G2 & 11,061,395  & 36,345,530 & 5,805,784   \\
                G3 &  712,772 & 3,808,320  & 309,852  \\
                G4 & 581,853 & 3,341,002  &  252,812   \\
                \bottomrule
            \end{tabular}
        }
    \end{minipage}%
    \hspace{0.6in}% 
    \begin{minipage}[b!]{0.35\linewidth}
        \centering
        \includegraphics[width=\linewidth]{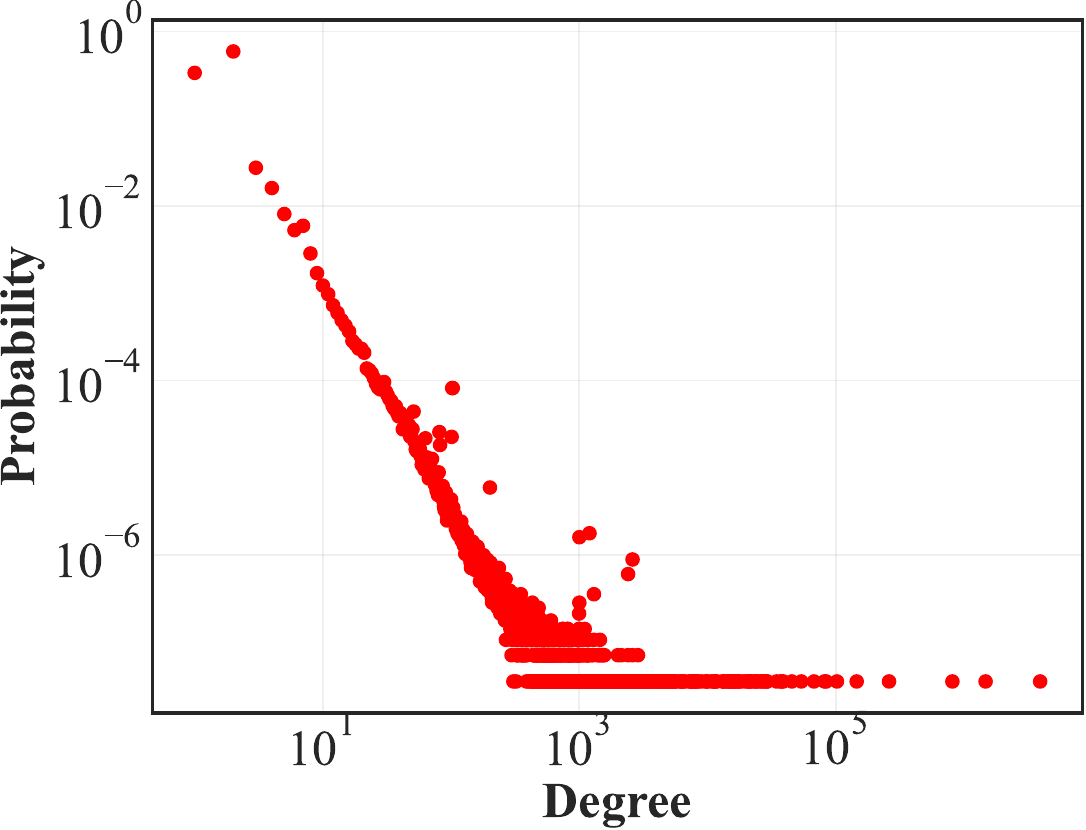}
        \vspace{-0.3in}
        \captionof{figure}{Degree distribution of G1.}
        \vspace{0.1in}
        \label{fig:degree_distribution}
    \end{minipage}

\noindent\textbf{Datasets.} We use both real-world and synthetic datasets.

\noindent\textit{Real-world Datasets}: We use four large-scale production data pipeline graphs from Alibaba, denoted as G1---G4, where G1 is the full view of the whole consolidated data pipelines. Their detailed statistics are presented in Table~\ref{tab:graph_stats}. The datasets exhibit scale-free characteristics and the degree of each node in these graphs follows a power-law distribution, with an average degree of each entity 2.5. A graphic view of the degree distribution is shown in Figure \ref{fig:degree_distribution}. 

% We open-source these datasets, and they represent the first publicly available real-world data pipeline graphs in data-intensive enterprises, to the best of our knowledge.

\noindent\textit{Synthetic Datasets}: To systematically evaluate performance under controlled conditions, we use synthetic graphs with varying parameters, including number of vertices, edge density (average degree), and the number of injected feedback arcs. To simulate the degree distribution of the scale-free real-world datasets (as in Figure \ref{fig:degree_distribution}), the graph was generated using a Barab\'asi--Albert model \cite{albert2002statistical}. To obtain a pipeline topology, we orient each BA edge from the existing vertex to the newly added vertex, producing a DAG. We then add a controlled number of later-to-earlier arcs between randomly selected vertex pairs to introduce cyclic dependencies. Weights of graphs are set based on equal-split strategy. The BA model captures degree skew but not every semantic constraint of an industrial pipeline. We therefore use the synthetic graphs to study sensitivity to graph size, density, and feedback sparsity, while the four production graphs G1--G4 evaluate performance on real pipelines.

\begin{figure}[htbp]
    \centering
    \begin{minipage}{0.4\textwidth}
        \centering
        \includegraphics[width=0.85\linewidth]{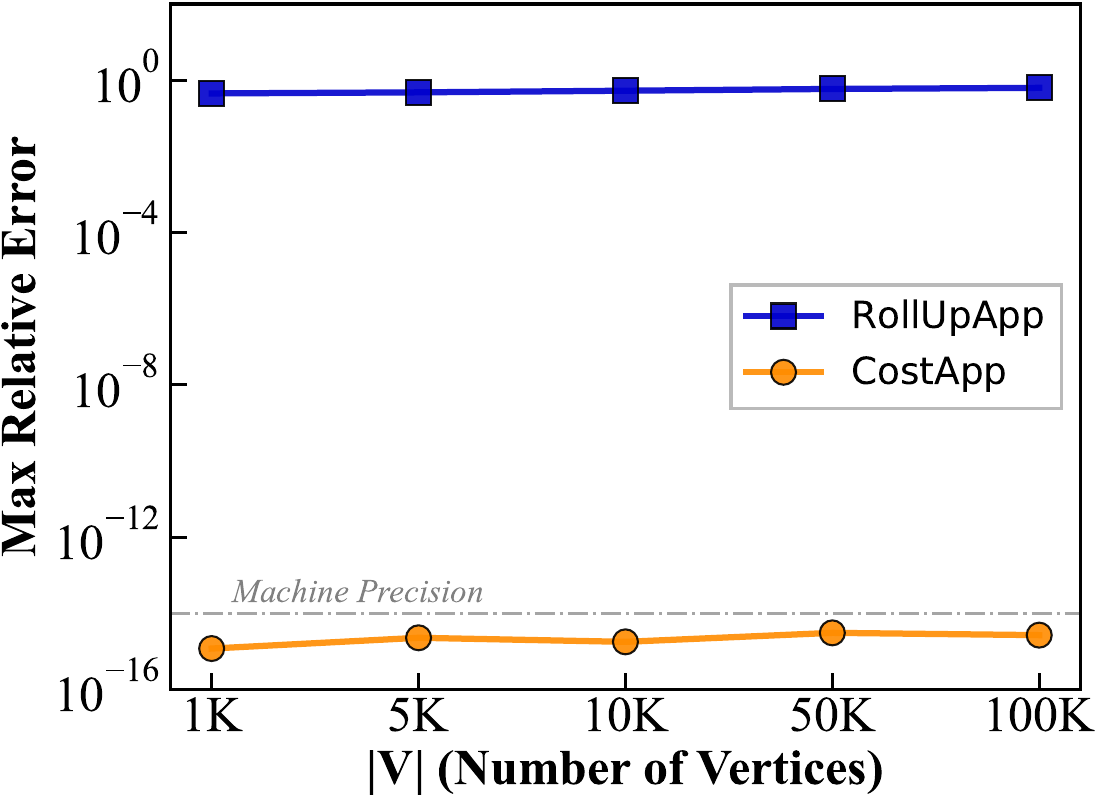}
        \vspace{-0.1in}
        \caption{Max relative error across scales.}
        \vspace{-0.1in}
        \label{fig:max_error_across_scales}
    \end{minipage}
    \hspace{0.3in}
    \begin{minipage}{0.4\textwidth}
        \centering
        \includegraphics[width=0.85\linewidth]{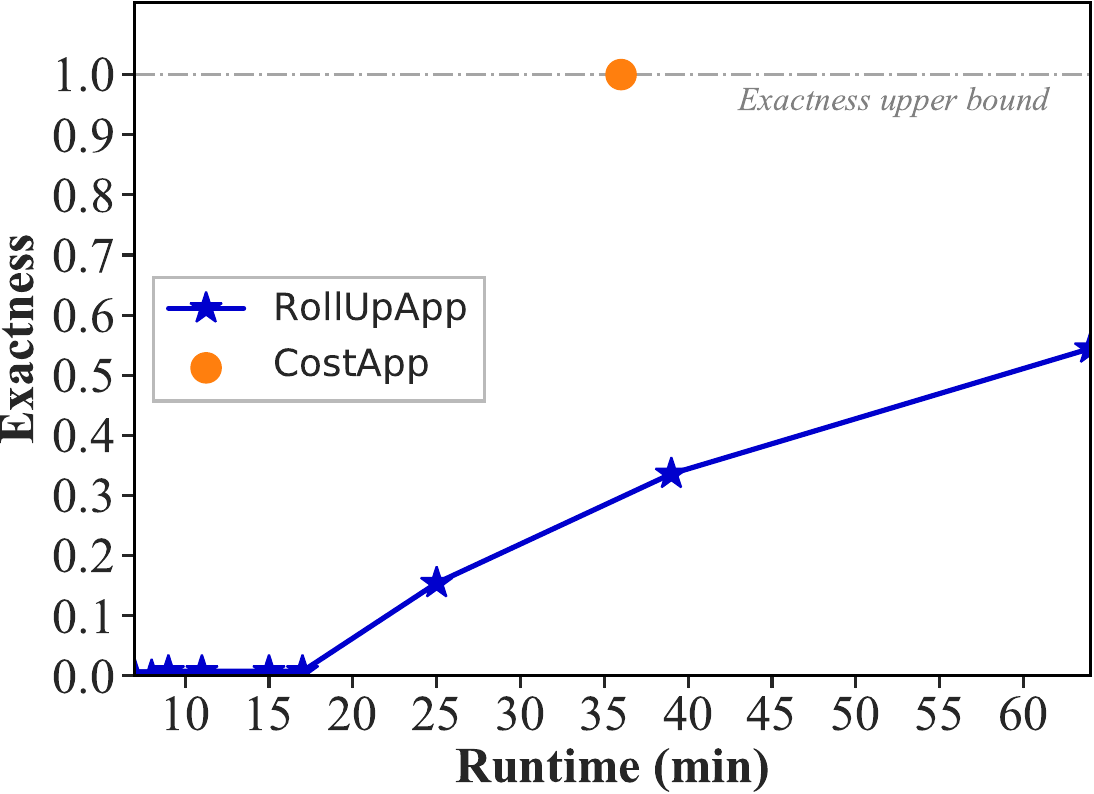} 
        \vspace{-0.1in}
        \caption{Exactness evaluation on G1.}
        \vspace{-0.1in}
        \label{fig:exactness}
    \end{minipage}
\end{figure}

\subsection{Exactness and Scalability}
\label{sec:exp_exactness_scalability}
To answer \textbf{RQ1}, we evaluate the exactness and scalability of \func{CostApp} compared to our baselines. Unless varied explicitly, the experiments use the default synthetic graph with $|V|=10^6$ and $|A|=3|V|$ to approximate the scale and density of real-world pipelines.

\subsubsection{Exactness}
\label{sec:exp_exactness} Given \func{MatrixApp} as the ground truth for exactness (Theorem \ref{thm:matrixapp_exact}), we evaluate the accuracy of \func{CostApp} and \func{RollUpApp} by measuring the Maximum Relative Error (MRE) of the estimated MC of DPs compared to the exact MC obtained from \func{MatrixApp}. \func{RollUpApp} is set to run for a fixed threshold of supersteps (=10) in Figure~\ref{fig:max_error_across_scales} to enforce termination, while in Figure~\ref{fig:exactness}, we track it with various thresholds.
% We track MRE across varying graph scales and over time.

% We evaluate the exactness of \func{CostApp} against the baselines \func{RollUpApp} (approximate) and \func{MatrixApp} (exact).

\partitle{Exactness across data scales} Figure~\ref{fig:max_error_across_scales} reports the MRE across varying pipeline sizes. \func{CostApp} consistently maintains an MRE around $10^{-15}$, consistent with standard floating-point roundoff, and thus matches the exact solution to machine precision regardless of graph scale. Conversely, \func{RollUpApp} suffers from a significant MRE ($>40\%$) due to unresolved cyclic dependencies. These results are consistent with the exactness guarantee in Theorem~\ref{thm:costapp_correctness}.

\partitle{Convergence efficiency} Figure~\ref{fig:exactness} further illustrates the exactness-runtime trade-off on a large cyclic pipeline graph G1. Exactness is measured by 1-MRE. Because unhandled costs circulate indefinitely within loops, \func{RollUpApp} requires extensive execution time merely to approach the exact solution asymptotically and does not reach machine precision within the measured time. In contrast, by effectively isolating cycles, \func{CostApp} inherently eliminates this convergence bottleneck, matching the exact solution to machine precision within a highly practical runtime (36 minutes).

% 在较小规模下，costapp和simpleapp的准确度 随着规模增长的影响。fix阈值增长规模。
% 在较大规模下，costapp和simpleapp的准确度 随着simpleApp的t数量的影响。fix规模增长阈值。 同时第二张图也展示了最差的100个DP的平均准确度和最差的DP的准确度。
% With theoretical guaranteed exactness, \func{CostApp} is expected to outperform the approximate \func{RollUpApp} in terms of accuracy. We evaluate this by measuring the accuracy of the computed MC of DPs on the real-world dataset G1. The accuracy metric is defined as the ratio of the estimated MC to the exact MC obtained from \func{CostApp}. We track accuracy over time.
% Figure~\ref{fig:exactness} presents the results. \func{CostApp} completes in approximately 30 minutes, delivering an exact cost result for all DPs. In contrast, the iterative baseline \func{RollUpApp} exhibits shortfalls. While the average accuracy of the 100 worst-performing data products (\func{RollUpApp}\_avg100) appears to converge, the minimum accuracy (\func{RollUpApp}\_min) remains critically low, reaching only 0.55 even after 60 minutes. This discrepancy highlights the unreliability of approximate methods for financial applications such as product costing.

\subsubsection{Scalability}
\label{sec:exp_scalability}

We evaluate the scalability of \func{CostApp} by analyzing the runtime performance with respect to various graph characteristics, including the number of vertices ($|V|$), the number of arcs ($|A|$), and the number of injected feedback arcs ($|A_{FAS}|$). 

\begin{figure}[t]
    \centering
    \begin{subfigure}[b]{0.3\textwidth} 
        \centering
        \includegraphics[width=\linewidth]{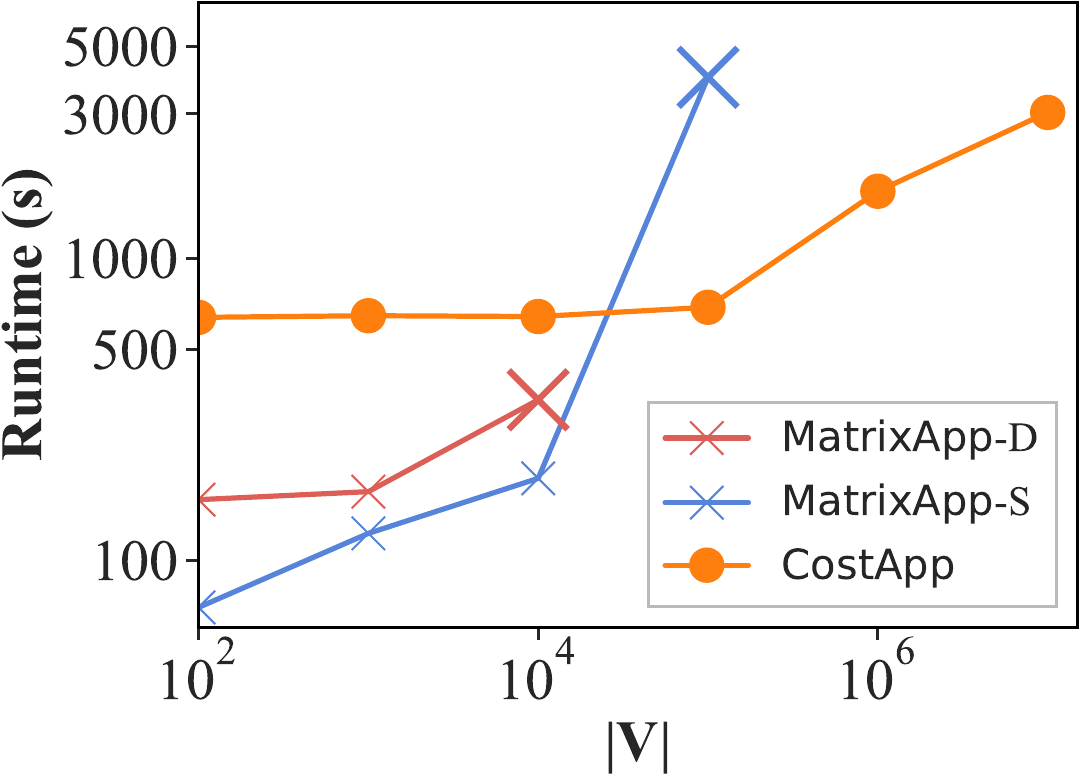} 
        \caption{Time vs $|V|$}
        \label{fig:scalability_v}
    \end{subfigure}
    % \hspace{0.1in}
    \begin{subfigure}[b]{0.3\textwidth} 
        \centering
        \includegraphics[width=\linewidth]{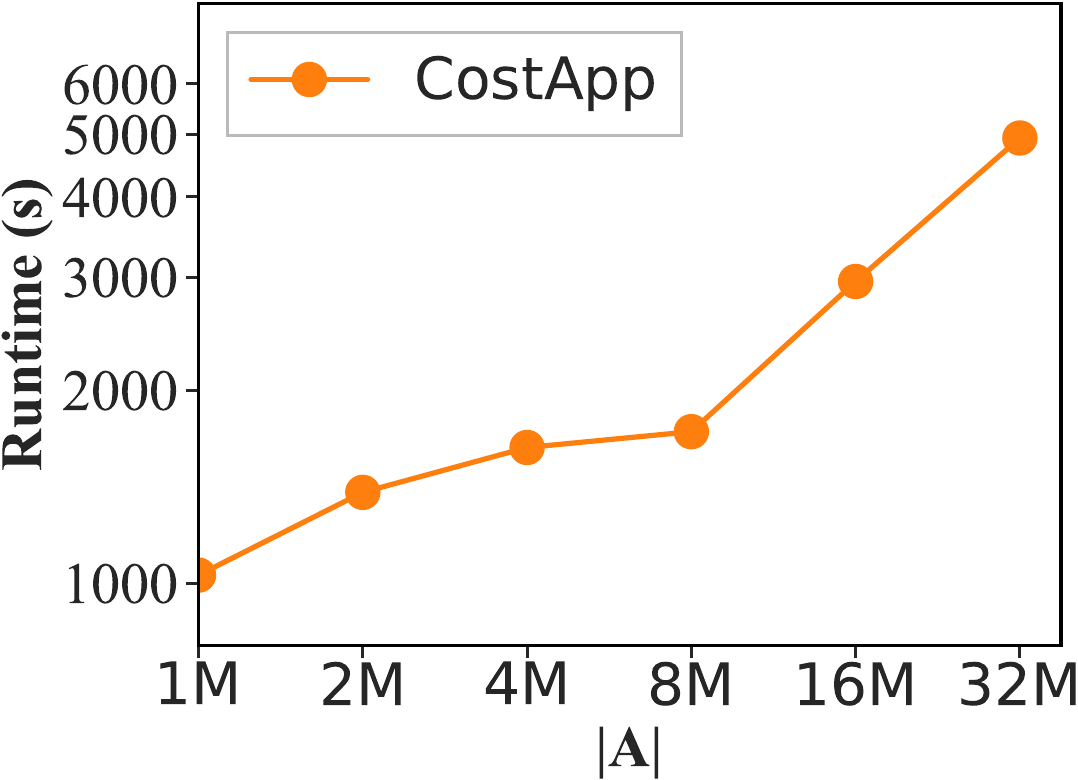} 
        \caption{Time vs $|A|$}
        \label{fig:scalability_a}
    \end{subfigure}
    % \hspace{0.1in}
    \begin{subfigure}[b]{0.3\textwidth} 
    \centering
    \includegraphics[width=\linewidth]{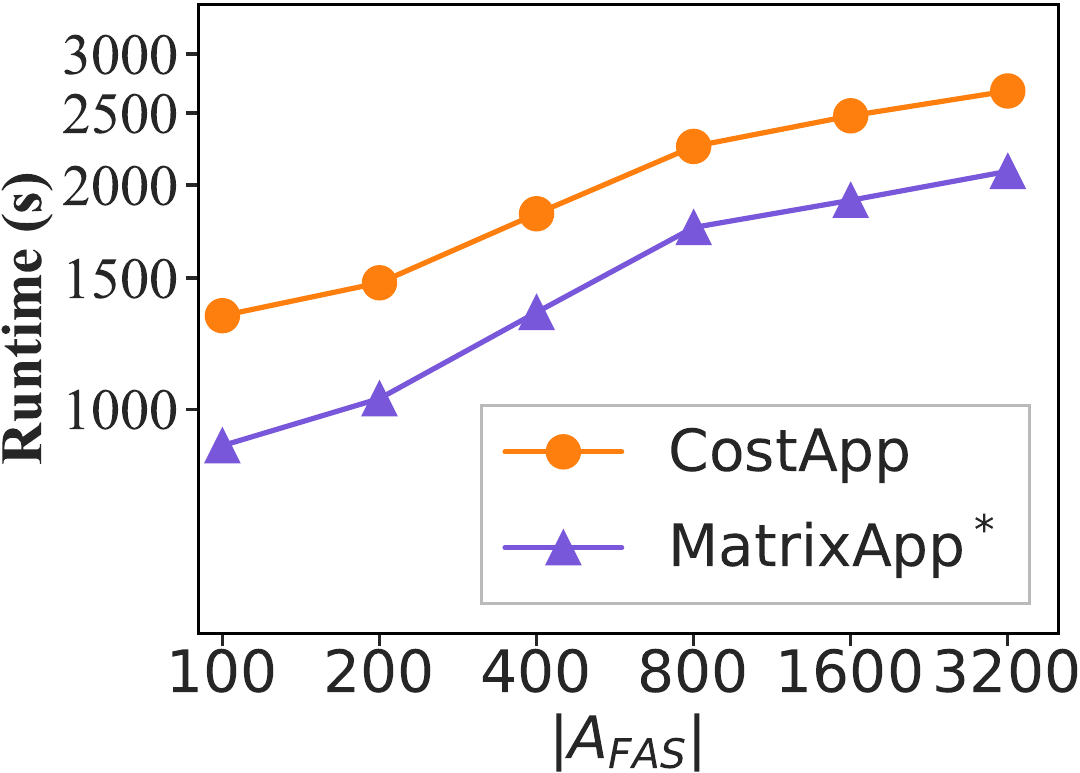} 
    \caption{Time vs $|A_{FAS}|$}
    \label{fig:scalability_f}
    \end{subfigure}

    \vspace{-0.1in}
    \caption{Comparison of scalability of algorithms.}
    \vspace{-0.1in}
    \label{fig:overall_scalability}
\end{figure}

\begin{figure}[t]
    \centering
    \begin{subfigure}[t]{0.4\textwidth} 
        \centering
         \raisebox{0pt}{\includegraphics[width=0.9\linewidth]{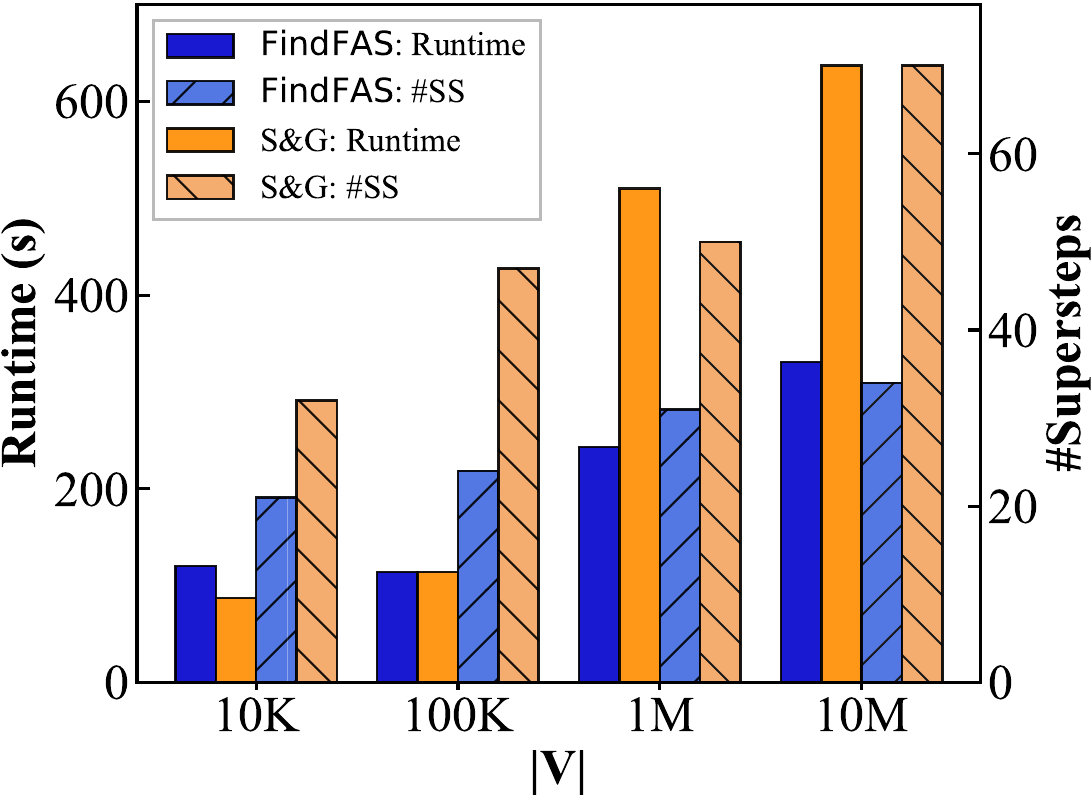} }
        \caption{Runtime and supersteps}
        \label{fig:exp_pregel_v}
    \end{subfigure}
    % \hspace{0.1in}
    \begin{subfigure}[t]{0.425\textwidth} 
        \centering
        \raisebox{1.5pt}{\includegraphics[width=0.9\linewidth]{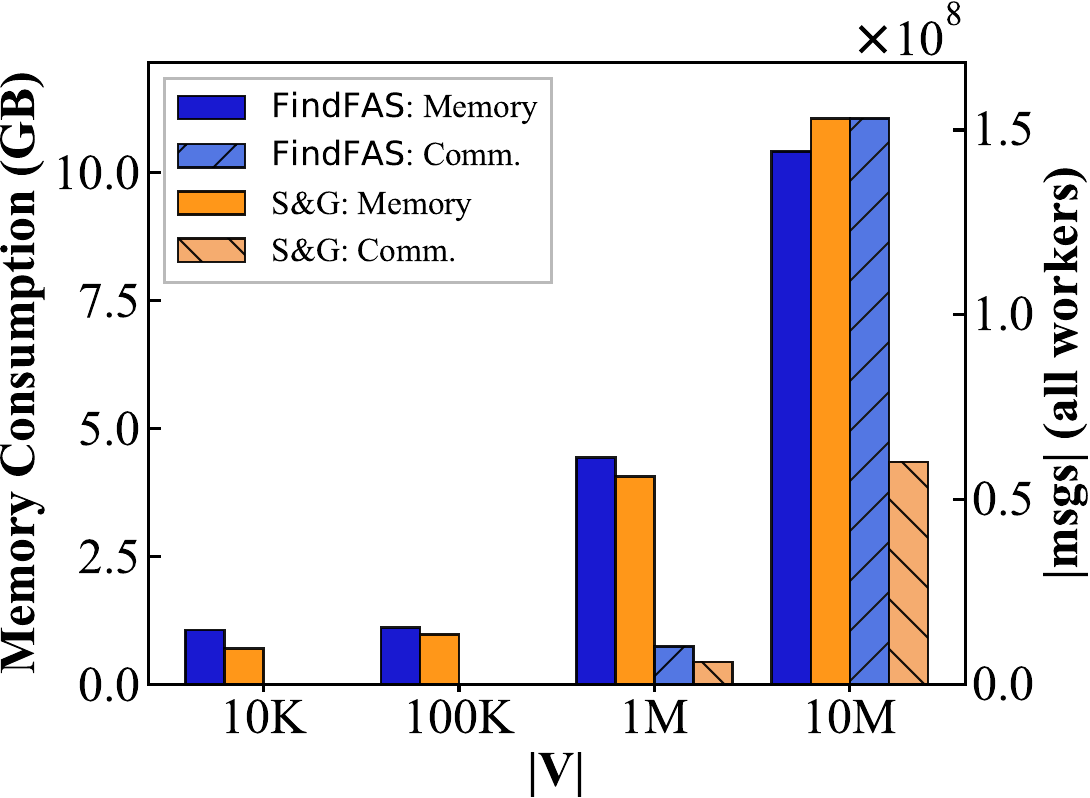}}
        \caption{Memory consumption and communicated messages}
        \label{fig:exp_fas_size}  
    \end{subfigure}
    \vspace{-0in}
    \caption{Evaluation of Pregel-based modules.}
    \vspace{-0.2in}
    \label{fig:pregel_evaluation}
\end{figure}

\partitle{Scalability with respect to $|V|$ and $|A|$} Figure~\ref{fig:overall_scalability}(a) demonstrates the scalability advantage of \func{CostApp}. 
While the baseline \func{MatrixApp}-D suffers from cubic complexity and runs out of memory (OOM) on graphs exceeding $10^4$ vertices, \func{CostApp} exhibits near-linear scalability, successfully processing graphs with $10^7$ vertices. Although \func{MatrixApp}-S is more memory-efficient (see \S\ref{sec:modular_evaluation}) than \func{MatrixApp}-D due to its exploitation of sparsity, it incurs prohibitive runtime (more than 2 days) for pipelines with more than $10^6$ vertices due to the inherently sequential nature of SuperLU. Similarly, Figure~\ref{fig:overall_scalability}(b) shows that the runtime of \func{CostApp} scales linearly with the number of arcs ($|A|$). This behavior is consistent with the communication complexity of our Pregel-based modules, where message passing dominates (\S\ref{sec:modular_evaluation}). For the arc-scaling experiment, we fix $|V|=10^6$. At this scale, \func{MatrixApp}-D runs out of memory, while \func{MatrixApp}-S exceeds the two-day time limit, as shown in Figure~\ref{fig:overall_scalability}(a). We therefore omit them from Figure~\ref{fig:overall_scalability}(b).
% todo 上一段的最后一句话 含义有点不明确

% While the baseline \func{MatrixApp}-D (and \func{MatrixApp}-S) suffers from cubic complexity and runs out of memory (OOM) on graphs exceeding $10^4$ (and $10^5$) vertices, \func{CostApp} exhibits near-linear scalability, successfully processing graphs with $10^7$ vertices. Similarly, Figure~\ref{fig:overall_scalability}(b) shows that the runtime scales linearly with the number of arcs ($|A|$). This behavior is consistent with the communication complexity of our Pregel-based modules, where message passing dominates (\S\ref{sec:modular_evaluation}). It also confirms that the distributed aggregation phase effectively handles massive graph data.

\partitle{Impact of FAS size $|A_{FAS}|$}
Figure~\ref{fig:overall_scalability}(c) isolates the impact of cycles by varying $|A_{FAS}|$ while keeping $|V|$ and $|A|$ constant. The results reveal that the total runtime grows with $|A_{FAS}|$, with a growth curve parallel to that of the \func{MatrixApp$^*$} component. The gap between the two curves represents the overhead of the distributed graph processing stages (\func{S\&G}). Since $|V|$ and $|A|$ are fixed, this distributed overhead remains constant, confirming that the marginal increase in runtime is driven solely by the matrix operations on the reduced graph. This validates that our hybrid design successfully confines the computationally expensive, cycle-dependent workload to a compact algebraic step, preventing the cubic complexity from dominating the entire pipelines.

% The results of exactness and scalability collaboratively demonstrate that \func{CostApp} uniquely provides the guaranteed exactness of a direct algebraic solution while matching the scalability required for massive graphs, overcoming the fundamental limitations of the basic solutions.

\subsection{Modular Evaluation}
\label{sec:modular_evaluation}
We profile the main components of \func{CostApp} to answer \textbf{RQ2}. The following experiments use the same default graph setting as in \S\ref{sec:exp_exactness_scalability}.

\partitle{Pregel-based modules} 
Figure \ref{fig:pregel_evaluation} presents the performance of the core Pregel-based modules of \func{CostApp}: \func{FindFAS} and \func{S\&G}. In Figure~\ref{fig:exp_pregel_v}, \#SS denotes the number of Pregel supersteps. In Figure~\ref{fig:exp_fas_size}, ``Memory'' is peak memory per worker and ``Comm.'' is the total number of messages across all workers.

As shown in Figure \ref{fig:exp_pregel_v}, both modules demonstrate near-linear scalability with respect to the number of vertices ($|V|$). Notably, the \func{S\&G} module requires nearly double the execution time of \func{FindFAS} when $|V| > 10^5$. 
Figure~\ref{fig:exp_fas_size} explains the gap. \func{FindFAS} sends more messages because labels propagate broadly, but each message contains only one scalar label. \func{S\&G} propagates key-value maps (\S\ref{sec:get_matrix}), so serialization, merging, and network payload increase with the number of source entries in each map. Its per-message computation and communication cost is therefore higher, so it takes longer despite sending fewer messages. Memory usage for both modules scales moderately, remaining within a manageable 12GB per worker even for graphs with $10^7$ vertices.

\partitle{Linear system solvers} 
Figure~\ref{fig:matrixapp_evaluation} shows that baselines \func{MatrixApp-D} and \func{MatrixApp-S} exhibit poor scalability. \func{MatrixApp-D} fails beyond $10^4$ vertices due to its $\mathcal{O}(|V|^2)$ space requirement triggering OOM errors (Figure~\ref{fig:matrixapp_mem}). While \func{MatrixApp-S} uses less memory ($<5$GB), its runtime becomes impractical ($>2$ days) for $|V| \ge 10^6$. In contrast, \func{MatrixApp$^*$} scales efficiently to $10^7$ vertices within 360s with a minimal memory footprint. This efficiency is achieved by shifting the computational bottleneck from the global vertex count $|V|$ to the significantly smaller FAS size $|A_{\text{FAS}}|$. By transforming the inversion of a massive $|V| \times |V|$ matrix into a tiny $|A_{\text{FAS}}| \times |A_{\text{FAS}}|$ matrix (\eg, $|A_{\text{FAS}}|=312 \ll |V|$ when $|V|=10^6$), \func{CostApp} successfully eliminates the intractable resource demands of algebraic methods, enabling exact costing for large-scale pipelines.

\begin{figure}[t!]
    \centering
    \begin{subfigure}[t]{0.4\textwidth} 
        \centering
        \includegraphics[width=0.9\linewidth]{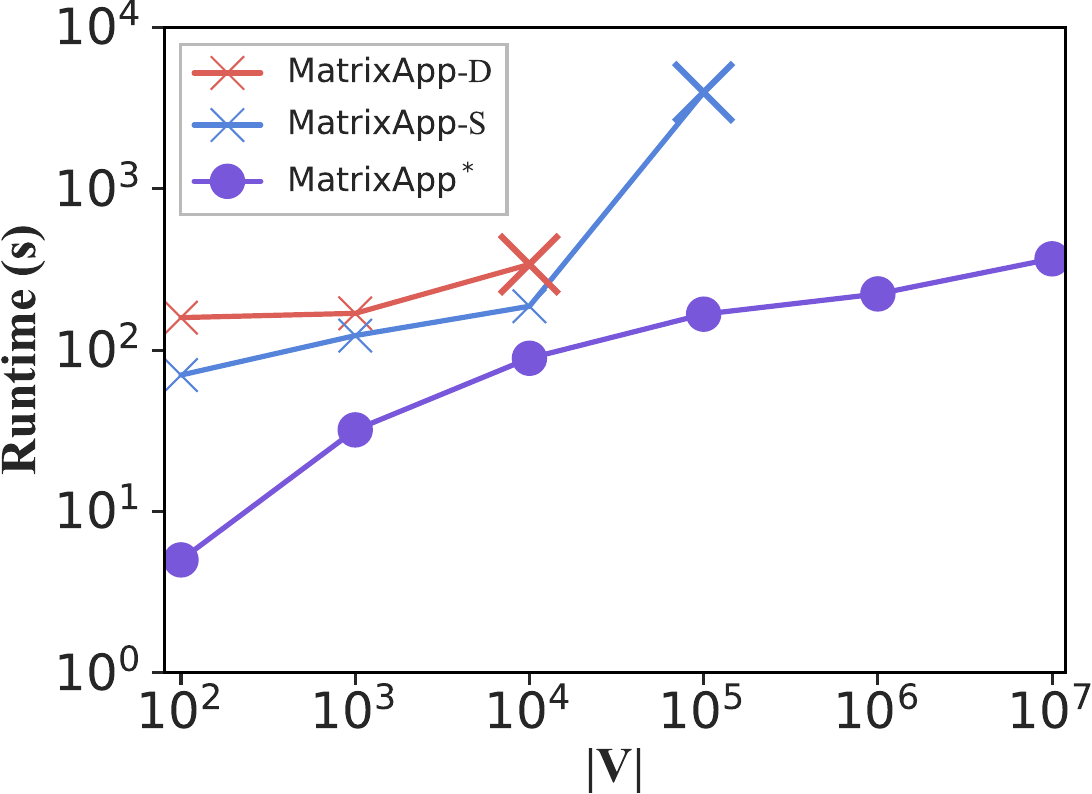}
        \caption{Runtime}
        \label{fig:matrixapp_runtime}
    \end{subfigure}
    \hspace{0.05in}
    \begin{subfigure}[t]{0.4\textwidth} 
        \centering
        \includegraphics[width=0.9\linewidth]{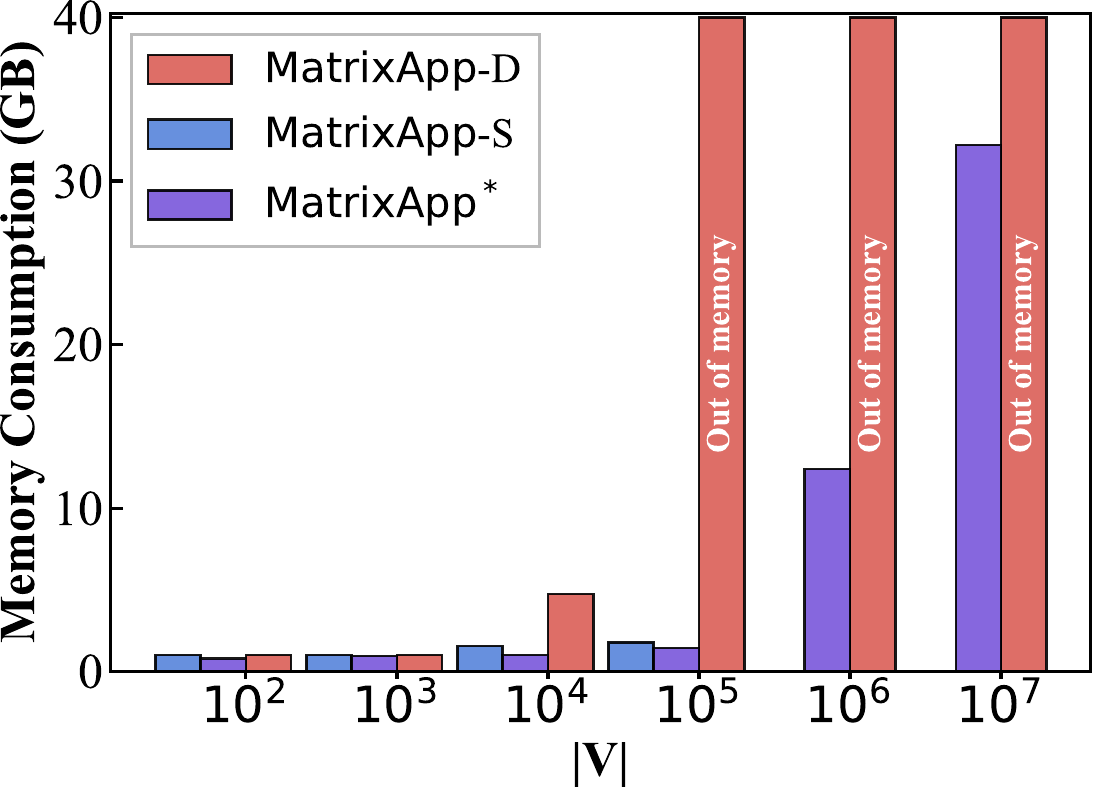}
        \caption{Memory consumption}
        \label{fig:matrixapp_mem}
    \end{subfigure}
    \vspace{-0.1in}
    \caption{Evaluation of \func{MatrixApp}-D/S and \func{MatrixApp$^*$}.}
    \vspace{-0.2in}
    \label{fig:matrixapp_evaluation}
\end{figure}

% \begin{figure}[t!]
%     \centering
%     \begin{minipage}[t]{0.23\textwidth}
%         \centering
%         \includegraphics[width=0.9\linewidth]{figs/matrixapp.pdf}
%         \caption{Performance comparison of \func{MatrixApp} and \func{MatrixApp$^*$}.}
%         \label{fig:matrixapp}
%     \end{minipage}
%     \hfill
%     \begin{minipage}[t]{0.23\textwidth}
%         \centering
%         \includegraphics[width=0.9\linewidth]{figs/matrixapp_mem.pdf}
%         \caption{Memory consumption of \func{MatrixApp} and \func{MatrixApp$^*$}.}
%         \label{fig:matrixapp_mem}
%     \end{minipage}
% \end{figure}

% \begin{figure}
% \centering
%     \includegraphics[width=0.5\linewidth]{figs/fas_size.pdf}
%     \caption{Size of FAS.}
%     \label{fig:fas_size}
% \end{figure}

% 这部分可以放到real-world这章，用来展示环路确实是稀疏的

% 说明，满足daily billing的需求
\subsection{Real-world Pipeline Evaluation}
\label{sec:exp_realworld}

To validate the performance of \func{CostApp} (\textbf{RQ3}) on practical pipeline graphs, we conducted experiments on four large-scale production data pipelines G1-G4 at Alibaba. 

% These graphs represent diverse and complex data pipeline workflows in practice. 

% They span several orders of magnitude in size and structural complexity, as detailed in Table \ref{tab:real_world_performance}, providing a robust testbed for our algorithm.

% G1:0.02%,  G2:0.03%, G3:0.03%, G4:0.03%
\partitle{FAS size in real-world pipelines}
As the size of FAS identified by \func{FindFAS} directly impacts the efficiency of the following steps, we first evaluate $|A_{FAS}|$ of G1---G4 as shown in Figure \ref{fig:fas_size}. Since \func{FindFAS} employs randomized label initialization, the size of the identified FAS ($|A_{FAS}|$) exhibits slight variance. We report the average of 10 runs. Crucially, we observe that $|A_{FAS}|$ is consistently lower than $0.05\%$ of $|A|$. This result confirms that the cycles in real-world data pipeline graphs are indeed \textit{sparse}, justifying the design of \func{CostApp} that leverages the sparsity of cycles. The small size of FAS is crucial for ensuring the efficiency of the \func{MatrixApp$^*$} step.

\partitle{Overall Performance and Breakdown}
We evaluate the performance of \func{CostApp} on G1-G4 and present the results in Figure~\ref{fig:real_eva}.

The production results demonstrate the practical efficiency of \func{CostApp} on industrial-scale graphs. On the largest graph, G1, end-to-end execution takes approximately 2250 seconds, which is within the strict time window for a daily billing cycle. Runtime decreases consistently from G1 to G4 as the graph size decreases.

The breakdown of execution time provides deep insight into why \func{CostApp} scales. 
The distributed graph processing stages (\func{FindFAS} and \func{S\&G}) dominate the workload, accounting for over 80\% of the total runtime on G1. In contrast, the \func{MatrixApp$^*$} step consumes a minor fraction ($<20\%$) of the time. Most work therefore remains in the \textit{near-linear} distributed components, while cubic matrix inversion is confined to a small reduced system.

Overall, the results show that \func{CostApp} preserves exactness and completes within one hour on all four of Alibaba's production graphs, each containing millions of vertices.

\subsection{Discussion}
\label{sec:exp_discussion}

\partitle{Comparison with SCC condensation}
An alternative approach to solving MCDP is to contract each strongly connected component (SCC), process the resulting DAG in topological order, and solve a local linear system for each non-trivial SCC. This approach is exact, but sparse feedback arcs do not imply small SCCs. For example, suppose the graph is one directed cycle through all $|V|$ vertices. Removing any one arc breaks the cycle, so $|A_{\mathrm{FAS}}|=1$. SCC condensation, however, places all $|V|$ vertices in one SCC. It must therefore solve a $|V| \times |V|$ local system, whereas \func{CostApp} solves a $1 \times 1$ residual system. Furthermore, multiple non-trivial SCCs (e.g. 1803 SCCs in G1) also require multiple local solves, while \func{CostApp} solves one residual system for all feedback dependencies.

The two approaches also organize computation differently. SCC processing must alternate between graph propagation and local solving along the condensation DAG because an SCC cannot be solved until its upstream contributions are available, incurring high runtime. \func{CostApp} instead completes scalable graph propagation and then solves one compact residual system, avoiding interleaved execution. The FAS-based reduction is therefore preferable when feedback dependencies are sparse but the SCCs they create may be large. Our experiment on G1 confirms this effect, with interleaved execution making the SCC approach $5.2\times$ slower than \func{CostApp} under the same production configuration.

\section{Related Work}

\begin{figure}[t!]
    \centering
    \begin{minipage}[t]{0.4\textwidth}
        \centering
        \includegraphics[width=0.92\linewidth]{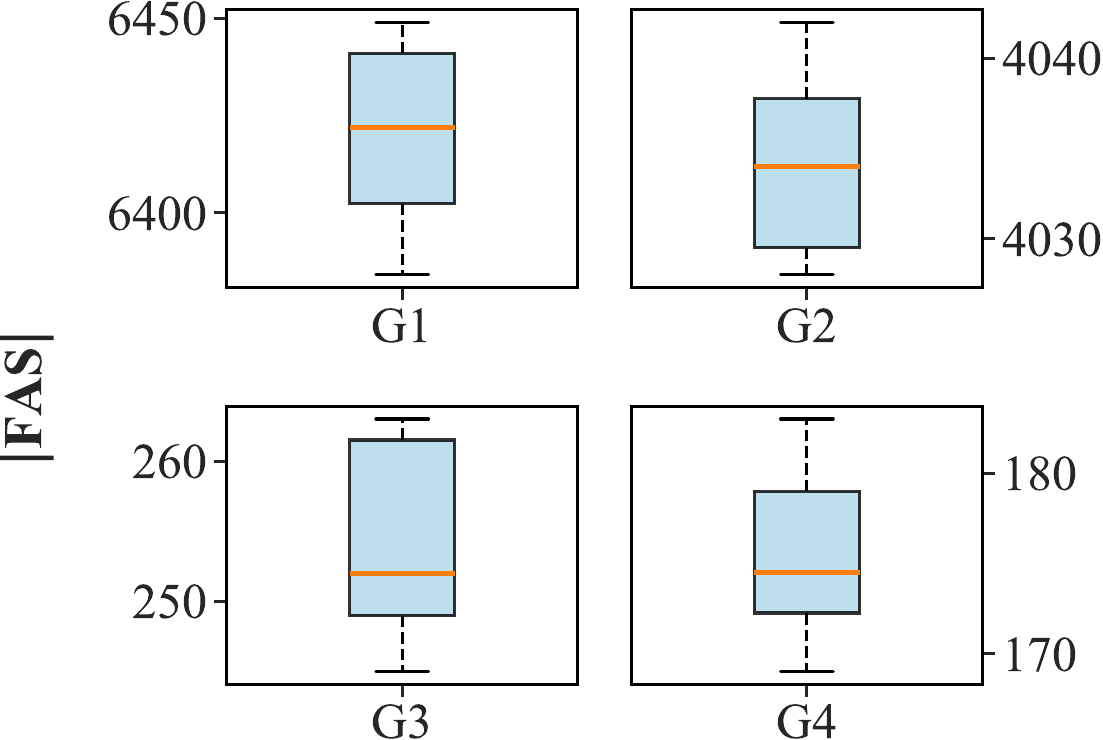}
        \vspace{-0.1in}
        \caption{Sizes of FAS on G1---G4.}
        \vspace{-0.2in}
        \label{fig:fas_size}
    \end{minipage}
    \hspace{0.3in}
    \begin{minipage}[t]{0.4\textwidth}
        \centering
        \includegraphics[width=0.92\linewidth]{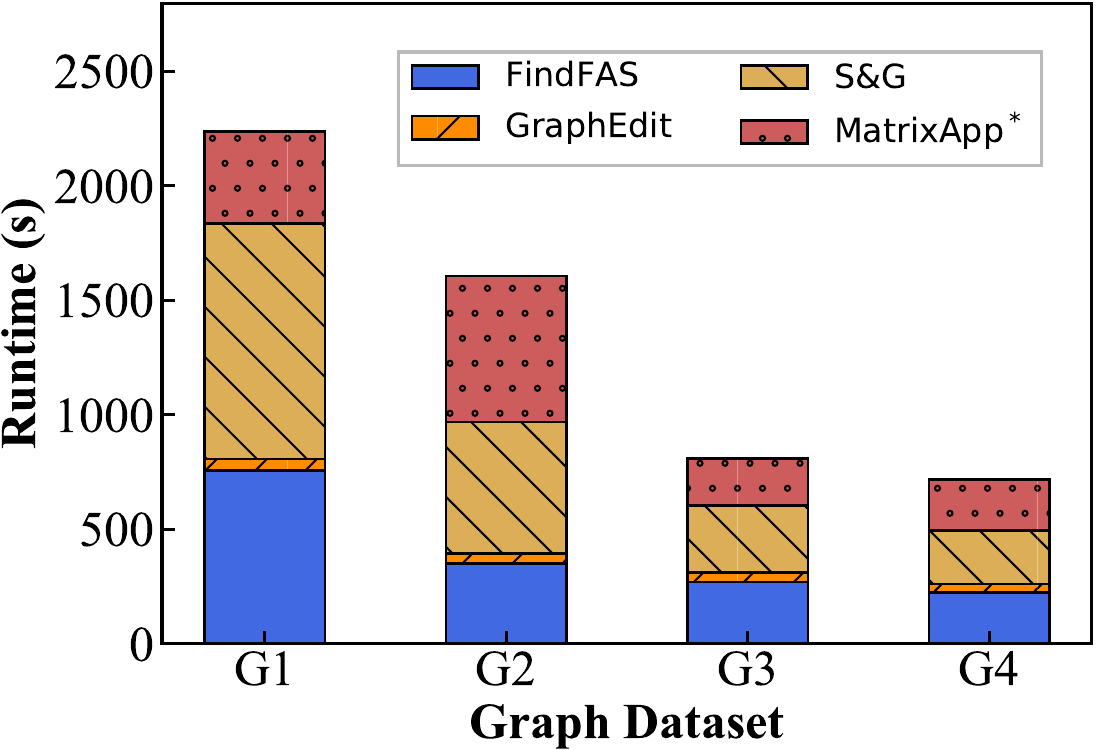}
        \vspace{-0.1in}
        \caption{Evaluation on G1---G4 and time breakdown.}
        \vspace{-0.2in}
    \label{fig:real_eva}
    \end{minipage}
\end{figure}

\subsection{Data Pipelines}
Data pipelines, particularly ETL (Extract-Transform-Load) workflows, have been widely studied as the backbone of data warehousing and analytics \cite{simitsis2023history,etl1,yang2015lenses,zhang2025streaming,khan2024overview,chai2022data,xie2016olap}. 
They enable the automated extraction of data from diverse sources, support complex transformations to ensure data quality, and facilitate loading into target stores for subsequent analysis or selling. Many big companies embrace data pipelines to build data manufacturing lines and provide business data pipeline platforms, 
\eg Meta \cite{sun2023presto}, Alibaba \cite{zhang2025streaming}, and Google \cite{google}. A typical data pipeline is executed in scheduled mode \cite{munappy2020data}, which minimizes the load on source systems by operating during off-peak hours. 
% hich improves efficiency by processing large volumes of data together, which minimizes the load on source systems by operating during off-peak hours. 
% At the same time, a real-time pipeline \cite{zhang2025streaming} aims to improve latency and provide more timely data results.

% This batch-oriented approach improves efficiency by processing large volumes of data together, minimizes the loApache Airflowad on source systems by operating during off-peak hours, and simplifies error handling and monitoring through structured, repeatable workflows. Consequently, batch processing remains a practical and cost-effective solution for large-scale data integration.

% Alkis \etal studies the logical optimization of ETL processes, modeling it as a state-space search problem \cite{simitsis2005optimizing}. 

% \subsection{Data-related Cost}
% Managing data-related costs is a central concern in modern cloud-centric computing. Early studies on cloud infrastructure emphasized the breakdown of data-related costs into server hardware, facility infrastructure, power, and networking, highlighting persistent challenges in resource underutilization and the need for system-level innovations to boost efficiency \cite{greenberg2008cost}. In recent years, the focus has shifted from static cost structures to dynamic and workload-aware cost management. This shift demands fine-grained cost management that aligns workload characteristics with the most cost-effective pricing models \cite{srivastava2024saving}. Notably, the operation cost in our work naturally include these various types of costs. 

\subsection{Cost of Digital Product in Economy}
The cost structure of digital products differs fundamentally from that of physical goods, due to major reductions in search, replication, transportation, tracking, and production costs \cite{pei2020survey}. Search and transportation are made efficient by the Internet, enabling easy discovery and access. Replication costs approach zero, allowing scale, customization, and new models such as bundling and subscriptions \cite{mackie2000pricing, shapiro1999information}. Low tracking costs facilitate personalized pricing, while raising concerns about privacy \cite{odlyzko2003privacy,fudenberg2012price}. Due to free-duplication nature of data, the production cost of data only contains fixed cost that is not related to the sharing amount. Thus, the unit production cost can approach zero through sharing as long as sufficient reuse and sales volume. These facilitate the development of innovative pricing strategies and the evolution of data marketplaces \cite{liu2021dealer,pei2023data,zhang2024survey,shuyue2025efficient,tong2018dynamic}. Different from prior works that focus on the cost structure and pricing strategies of digital products, our work aims to product costing in data pipelines, which is a prerequisite for pricing and orthogonal to cost structure.

\subsection{Value and Cost Allocation}
Prior work also studies how to choose an allocation rule. Shapley-value methods allocate a coalition's value according to each participant's marginal contribution and follow well-established fairness principles~\cite{shapley1953value,aadithya2010efficient}. In data valuation, they infer contribution from repeated utility evaluations, but exact computation generally requires exponentially many coalitions, so scalable methods rely on approximation or task-specific structure~\cite{ghorbani2019data,jia2019towards,wei2020efficient,kwon2022beta,pang2026asss,zhang2023efficient}. Our setting instead starts from measured resource costs and a business-selected allocation policy. A value-based rule, including one derived from Shapley values, may determine $w$. \func{CostApp} is complementary to these methods because it computes the exact pipeline-wide costs implied by those weights.

\section{Conclusion and Future Work}

In this paper, we addressed the critical challenge of product costing in large-scale data pipelines. By formalizing the MCDP problem to account for the non-rivalrous nature of data, we proposed \func{CostApp}, a hybrid algorithm that achieves both high scalability and result exactness. By strategically isolating cyclic dependencies, \func{CostApp} reduces the computational complexity of exact costing from cubic to near-linear relative to the graph size. Extensive experiments show \func{CostApp} scales to process a production pipeline with over 28 million entities within an hour while preserving exactness, supporting pipelines up to $10^2\times$ larger than the direct matrix-based baseline.

% Extensive experiments on real-world production pipelines at Alibaba\footnotemark[1] confirm that our approach delivers dramatic performance gains without compromising the exactness of the results.

% In this paper, we addressed the critical challenge of product costing in large-scale data pipelines. We formalized the MCDP problem to account for the non-rivalrous nature of data. We proposed \func{CostApp}, a hybrid algorithm that achieves both scalability and result exactness. By strategically isolating cyclic dependencies via a feedback arc set, \func{CostApp} reduces the computational complexity of exact costing from cubic to near-linear relative to the graph size. Extensive experiments on real-world production pipelines at Alibaba\footnotemark[1] demonstrate that our approach achieves both result exactness and high scalability.

\partitle{Future work} Our work opens several promising research directions. First, extending \func{CostApp} to support \textit{explainable product costing}. It means the exact cost breakdown of each data product is fully traceable to its upstream origins, and would provide vital insights for fine-grained pipeline optimization. Second, the core capability of \func{CostApp} in efficiently solving cyclic, large-scale sparse linear systems holds vast potential beyond data pipelines. It presents a promising pathway for accelerating numerical solvers in much broader domains, such as exact value function evaluations in Markov Decision Processes (MDPs) and solving the Leontief inverse in macroscopic input-output economic modeling \cite{leontief1986input,leontief1974sructure,ten2009input}. Detailed discussions on these extensions are provided in Appendix~\ref{app:future_work}.

\section*{Acknowledgement}
This work was supported in part by NSFC (U23A20306, 62472378, U25A20430, 62472449), the Zhejiang Provincial Natural Science Foundation for Distinguished Young Scholars (LR25F020001), and the Zhejiang Province Pioneer Plan (2024C01074, 2025C01084).

\bibliographystyle{ACM-Reference-Format}
\bibliography{mybibfile}

%%
%% If your work has an appendix, this is the place to put it.
\clearpage
\appendix

\section{Proof of Theorems and Lemmas}

\subsection{Proof of Theorem \ref{thm:cost_conservation}}
\label{appendix:proof_cost_conservation}
% We prove Theorem \ref{thm:cost_conservation} as follows.

\begin{proof}
We begin with the total direct cost of the system. By rearranging Eq. \ref{equ:ma}, we can express the direct cost of any single vertex $v$ as $\mathcal{C}_{dc}(v) = \mathcal{C}_{mc}(v) - \mathcal{C}_{ic}(v)$. Summing over all vertices in the graph $V(G)$:
\begin{equation}
\label{eq:proof_start}
\begin{aligned}
    \sum_{v \in V(G)} \mathcal{C}_{dc}(v) &= \sum_{v \in V(G)} (\mathcal{C}_{mc}(v) - \mathcal{C}_{ic}(v)) \\
    &= \sum_{v \in V(G)} \mathcal{C}_{mc}(v) - \sum_{v \in V(G)} \mathcal{C}_{ic}(v)
\end{aligned}
\end{equation}
Let us analyze the total input cost term, $\sum_{v \in V(G)} \mathcal{C}_{ic}(v)$. By the definition of input cost $\mathcal{C}_{ic}(v)$, we get:
$$
\sum_{v \in V(G)} \mathcal{C}_{ic}(v) = \sum_{v \in V(G)} \sum_{u \in U_G^1(v)} w(u,v) \cdot \mathcal{C}_{mc}(u)
$$
This double summation iterates over all directed arcs $\langle u,v\rangle \in A(G)$. We can change the order of summation to first sum over all potential source nodes $u \in V(G)$ and then over their direct downstream neighbors $v \in D_G^1(u)$:
$$
\sum_{v \in V(G)} \mathcal{C}_{ic}(v) = \sum_{u \in V(G)} \sum_{v \in D_G^1(u)} w(u,v) \cdot \mathcal{C}_{mc}(u)
$$
We can factor out $\mathcal{C}_{mc}(u)$ from the inner summation, as it is constant with respect to $v$:
$$
\sum_{v \in V(G)} \mathcal{C}_{ic}(v) = \sum_{u \in V(G)} \left( \mathcal{C}_{mc}(u) \cdot \sum_{v \in D_G^1(u)} w(u,v) \right)
$$
By the cost conservation property (Definition \ref{def:cost_conservation}), the inner sum $\sum_{v \in D_G^1(u)} w(u,v)$ equals 1 for any non-leaf node $u \in V(G) \setminus L(G)$. For any leaf node $p \in L(G)$, its set of downstream neighbors $D_G^1(p)$ is empty, so the inner sum is 0. Therefore, the expression simplifies to the sum of manufacturing costs over all non-leaf nodes:
$$
\sum_{v \in V(G)} \mathcal{C}_{ic}(v) = \sum_{u \in V(G) \setminus L(G)} \mathcal{C}_{mc}(u) \cdot 1 = \sum_{u \in V(G) \setminus L(G)} \mathcal{C}_{mc}(u)
$$
Substituting this result back into Eq. \ref{eq:proof_start}:
$$
\sum_{v \in V(G)} \mathcal{C}_{dc}(v) = \sum_{v \in V(G)} \mathcal{C}_{mc}(v) - \sum_{v \in V(G) \setminus L(G)} \mathcal{C}_{mc}(v)
$$
The right-hand side is the sum of manufacturing costs over all vertices minus the sum of manufacturing costs over all non-leaf vertices. This difference is exactly the sum of manufacturing costs over the leaf vertices, $L(G)$.
$$
\sum_{v \in V(G)} \mathcal{C}_{dc}(v) = \sum_{p \in L(G)} \mathcal{C}_{mc}(p)
$$
This completes the proof.
\end{proof}

\subsection{Proof of Theorem \ref{thm:simple_ca_cycle}}
\label{sec:proof_simple_ca_cycle}

\begin{proof}
Let $C^k(v)$ denote the cost held by vertex $v$ at the beginning of superstep $k$, with $C^1(v) = \mathcal{C}_{dc}(v)$. The algorithm terminates when $C^k(v) = 0$ for all non-leaf vertices $v \in V \setminus L(G)$. The proof proceeds in two parts: termination and exactness.

\partitle{1. Termination}
In a DAG, the cost propagation is unidirectional. A cost contribution from an initial $\mathcal{C}_{dc}(u)$ can only reach a vertex $v$ via a path from $u$ to $v$. After one superstep, this cost can reach vertices at a path distance of 1. After $k$ supersteps, it can reach vertices at a path distance of at most $k$.

Let $P = \max_{u,v \in V} \{\text{length of longest path from } u \text{ to } v\}$ be the longest path length in $G$. After $P$ supersteps, any cost originating from any vertex $u$ will have propagated along all possible paths of length up to $P$. Since no path in the graph is longer than $P$, all costs must have passed through all intermediate (non-leaf) vertices on their respective paths and accumulated at the leaf vertices.

Consequently, at the beginning of superstep $P+1$, no non-leaf vertex $v \in V \setminus L(G)$ can receive any message, as all upstream cost flows have already terminated at leaves. Since all non-leaf vertices also distributed their own costs in prior supersteps, $C_{P+1}(v) = 0$ for all $v \in V \setminus L(G)$. This satisfies the termination condition, proving the algorithm halts in at most $P$ supersteps.

\partitle{2. Exactness}
Let $C^{final}(p)$ be the final cost at a leaf $p \in L(G)$ upon termination. We must show $C^{final}(p) = \mathcal{C}_{mc}(p)$.

First, observe that the total cost is conserved. In any superstep, for any active non-leaf vertex $v$, the cost it sends out is
\[
\sum_{v' \in D_G^1(v)} w_{v \to v'} C^k(v) = C^k(v),
\]
because the outgoing weights satisfy
\[
\sum_{v' \in D_G^1(v)} w_{v \to v'} = 1.
\]
Thus, $\sum_{v \in V} C^{k+1}(v) = \sum_{v \in V} C^k(v)$. Upon termination,
\[ \sum_{p \in L(G)} C^{final}(p) = \sum_{v \in V} C^1(v) = \sum_{v \in V} \mathcal{C}_{dc}(v), \]
satisfying cost conservation.

Let $\mathcal{C}^*(v)$ be the total cost that flows through or terminates at vertex $v$ throughout the algorithm's execution. This is the sum of its direct cost and all costs received from upstream neighbors over all supersteps.
\begin{equation}
    \mathcal{C}^*(v) = \mathcal{C}_{dc}(v) + \sum_{u \in U_G^1(v)} (\text{total cost received from } u)
\end{equation}
The total cost received by $v$ from an upstream neighbor $u$ is the proportion $w(u,v)$ of the total cost that flowed through $u$, which is precisely $w(u,v) \cdot \mathcal{C}^*(u)$. Substituting this gives:
\begin{equation} \label{eq:proof_correctness}
    \mathcal{C}^*(v) = \mathcal{C}_{dc}(v) + \sum_{u \in U_G^1(v)} w(u,v) \cdot \mathcal{C}^*(u)
\end{equation}

% 一定有Unique solution有两种理解方式。
% 1. DAG一定有拓扑排序，按拓扑排序分配cost一定可计算出唯一确定值。
% 2. DAG的权重矩阵W是上三角矩阵，I-W^T是下三角矩阵，且对角线元素均为正数（乘积不为0），因此I-W^T可逆。从而， 对于线性系统(I-W^T)x=b来说，x有唯一解。
% 拓展：当不是DAG时，W不再是三角矩阵，此时是否(I-W^T)是否可逆取决于权重（谱半径）
This system of linear equations defining $\mathcal{C}^*(v)$ for all $v \in V$ is identical to the one defining the manufacturing cost $\mathcal{C}_{mc}(v)$ in Equations \ref{equ:ma}. Since $G$ is a DAG, this system has a unique solution. Therefore, $\mathcal{C}^*(v) = \mathcal{C}_{mc}(v)$ for all $v \in V$.

For a leaf node $p \in L(G)$, it only accumulates cost and never distributes it. Its final cost $C^{final}(p)$ is the total cost that has flowed into it, which is $\mathcal{C}^*(p)$. Thus:
\begin{equation}
    C^{final}(p) = \mathcal{C}^*(p) = \mathcal{C}_{mc}(p)
\end{equation}
This confirms that the algorithm exactly computes the manufacturing cost for all data products.
\end{proof}

\subsection{Proof of Theorem \ref{thm:matrixapp_exact}}
\label{appendix:proof_matrixapp_exact}

% The following is the proof of Theorem \ref{thm:matrixapp_exact}.
\begin{proof}
By the definition of the MCDP problem, the manufacturing cost of any entity $v_i$ must comply to the recursive relation in Eq. \ref{equ:ma}: $\mathcal{C}_{mc}(v_i) = \mathcal{C}_{dc}(v_i) + \sum_{u \in U_G^1(v_i)} w(u,v_i) \cdot \mathcal{C}_{mc}(u)$. Based on the analysis in Appendix \ref{sec:matrix_form_derivation}, this element-wise definition is equivalent to the linear system formulated in Eq. \ref{equ:matrix_form}, \ie $(\mathbf{I} - \mathbf{W}^\mathsf{T})\mathbf{c} = \mathbf{d}$. Based on the analysis in Appendix \ref{appendix:proof_matrix_inver}, the spectral structure of the allocation matrix $\mathbf{W}$ ensures that $(\mathbf{I} - \mathbf{W}^\mathsf{T})$ is non-singular and invertible. Consequently, a unique algebraic solution $\mathbf{c}^* = (\mathbf{I} - \mathbf{W}^\mathsf{T})^{-1} \mathbf{d}$ strictly exists. Since \func{MatrixApp} directly implements this exact closed-form inversion and extracts the subset of answers for the terminal data products $p \in L(G)$, it inherently captures the true MC of DPs. Thus, it exactly solves the MCDP problem.
\end{proof}

\subsection{Proof of Matrix Invertibility}
\label{appendix:proof_matrix_inver}

% 利用行和 ≤1 + 存在叶子行 <1，先得出 ρ(W)≤1；
% 再用“必须能走到叶子”的假设排除 ρ(W)=1 的情况；
% 最后用谱半径 <1 推出 I–Wᵀ 可逆。

We prove that $(\mathbf{I} - \mathbf{W}^\mathsf{T})$ is invertible by showing that the spectral radius of $\mathbf{W}$ satisfies $\rho(\mathbf{W}) < 1$. Since a matrix and its transpose share the same eigenvalues, this suffices to guarantee that all eigenvalues of $(\mathbf{I} - \mathbf{W}^\mathsf{T})$ are strictly positive, and thus the matrix is nonsingular.

Recall that $\mathbf{W}$ is the cost allocation matrix defined in \S\ref{sec:matrix_app}. By construction:
\begin{enumerate}
    \item $\mathbf{W}$ is non-negative: $\mathbf{W}[i,j] \ge 0$ for all $i,j$.
    \item For every non-leaf vertex $u \in V(G) \setminus L(G)$, the outgoing weights satisfy the conservation constraint
    \[
      \sum_{v \in D_G^1(u)} w(u,v) = 1,
    \]
   so the corresponding row of $\mathbf{W}$ sums to~1.
   \item For every leaf vertex $p \in L(G)$, we have $D_G^1(p) = \emptyset$, hence the entire row of $\mathbf{W}$ is zero.
\end{enumerate}
Thus every row sum of $\mathbf{W}$ is at most~1, and at least one row sum (any leaf row) is strictly less than~1.

\paragraph{Step 1: $\rho(\mathbf{W}) \le 1$.}
Let $\lambda$ be an eigenvalue of $\mathbf{W}$ with eigenvector $\mathbf{x} \ne \mathbf{0}$, so $\mathbf{W}\mathbf{x} = \lambda \mathbf{x}$. Let $k$ be an index such that
\[
  |x_k| = \max_i |x_i| > 0.
\]
Consider the $k$-th coordinate of the eigen-equation:
\[
  \lambda x_k = \sum_{j=1}^n \mathbf{W}[k,j] x_j.
\]
Taking absolute values and using the triangle inequality,
\begin{align*}
  |\lambda|\,|x_k|
  &= \left|\sum_{j=1}^n \mathbf{W}[k,j] x_j\right| \\
  &\le \sum_{j=1}^n \mathbf{W}[k,j] |x_j| \\
  &\le \left(\sum_{j=1}^n \mathbf{W}[k,j]\right) |x_k| \\
  &\le |x_k|.
\end{align*}
Since $|x_k|>0$, we obtain $|\lambda| \le 1$. Hence $\rho(\mathbf{W}) \le 1$.

\paragraph{Step 2: $\rho(\mathbf{W}) < 1$.}
We now show that $|\lambda| = 1$ is impossible. Suppose, for contradiction, that there exists an eigenvalue $\lambda$ with $|\lambda|=1$. For the chain of inequalities above to hold as equalities, we must simultaneously have:

\begin{enumerate}
  \item The row sum of row $k$ equals~1, i.e.,
  \[
    \sum_{j=1}^n \mathbf{W}[k,j] = 1.
  \]
      Thus vertex $k$ is not a leaf (since leaf rows are all zeros).
  \item For every $j$ with $\mathbf{W}[k,j] > 0$, we must have $|x_j| = |x_k|$. Otherwise the second inequality would be strict.
\end{enumerate}

Condition~(2) implies that every direct downstream neighbor of $k$ with a positive weight also has the same maximal modulus $|x_j| = |x_k|$. Repeating this argument along any directed path starting from $k$, we conclude that all vertices reachable from $k$ with positive path weight also satisfy $|x_i| = |x_k|$.

In a valid data pipeline, every non-leaf vertex has at least one path to some leaf data product. Therefore, there exists a leaf vertex $p \in L(G)$ reachable from $k$ via a directed path with positive total weight. By the propagation argument above, this implies
\[
  |x_p| = |x_k| > 0.
\]

However, since $p$ is a leaf, its outgoing weights are all zero, so the $p$-th row of $\mathbf{W}$ is the zero vector. The eigen-equation at coordinate $p$ is
\[
  \lambda x_p = \sum_{j=1}^n \mathbf{W}[p,j] x_j = 0.
\]
With $|\lambda| = 1$, this forces $x_p = 0$, contradicting $|x_p| = |x_k| > 0$.

Therefore no eigenvalue of $\mathbf{W}$ can satisfy $|\lambda| = 1$, and thus
\[
  \rho(\mathbf{W}) < 1.
\]

\paragraph{Step 3: Invertibility of $(\mathbf{I} - \mathbf{W}^\mathsf{T})$.}
Since $\rho(\mathbf{W}) < 1$ and $\mathbf{W}$ and $\mathbf{W}^\mathsf{T}$ share the same eigenvalues, every eigenvalue of $\mathbf{W}^\mathsf{T}$ also has modulus strictly less than~1. Hence every eigenvalue of $(\mathbf{I} - \mathbf{W}^\mathsf{T})$ is strictly positive, so $(\mathbf{I} - \mathbf{W}^\mathsf{T})$ is nonsingular and invertible. This guarantees that the linear system
\[
  \mathbf{c} = \mathbf{d} + \mathbf{W}^\mathsf{T}\mathbf{c}
\]
has a unique solution $\mathbf{c} = (\mathbf{I} - \mathbf{W}^\mathsf{T})^{-1}\mathbf{d}$.

\subsection{Proof of Lemma \ref{lem:FAS}}
\label{appendix:proof_fas}

The proof of Lemma~\ref{lem:FAS} is provided below.
\begin{proof}
% 中文直观解释：考虑环上的最小标签顶点 x。假设 Label(x)=[1,2]，沿环从 x 到达某个顶点 u 时，Label(u) 以 [1,2] 为前缀，例如 [1,2,4]。u 沿下一条弧向 v 提议的标签只会在其后追加一个局部后缀，例如 [1,2,4,3]，因此该提议仍以 [1,2] 为前缀。v 最终保留的标签不会大于这个提议，同时由于 x 是环中标签最小的顶点，它也不会小于 [1,2]。按照 Definition~\ref{def:lexicographic_order} 的字典序，位于 [1,2] 与其扩展 [1,2,4,3] 之间的标签必须仍以 [1,2] 为前缀。因此，这个前缀关系可以沿整个环逐步保持。设 y 是环上回到 x 之前的顶点，则 Label(x) 是 Label(y) 的真前缀，所以闭合弧 \langle y,x\rangle 会被选入 A_{\mathrm{FAS}}。对任意有向环都能找到这样一条弧，因此删除 A_{\mathrm{FAS}} 后不再存在有向环。
It suffices to show that every directed cycle contains an arc $\langle u,v\rangle$ for which $Label(v)$ is a prefix of $Label(u)$. After \func{Label} stabilizes, each vertex stores the smallest label proposed along its incoming paths. The labels of distinct vertices are distinct because each root has a unique initial label and each outgoing arc has a unique local suffix. Consider any directed cycle. Since the cycle is finite and the labels are totally ordered by Definition~\ref{def:lexicographic_order}, it contains a vertex $x$ with the smallest final label. We follow the cycle from $x$ and show inductively that $Label(x)$ is a prefix of every label encountered before returning to $x$. Suppose this holds at vertex $u$. The proposal sent along an arc $\langle u,v\rangle$ extends $Label(u)$, so it also has $Label(x)$ as a prefix. Because $v$ keeps the smallest proposal, $Label(v)$ is no greater than this extension. By the choice of $x$, it is also no smaller than $Label(x)$. Under Definition~\ref{def:lexicographic_order}, any label between a label and one of its extensions must have the former as a prefix. Otherwise, it would either end before the former, making it smaller, or its first differing component would make it smaller than the former or larger than the extension. Thus, $Label(x)$ is a prefix of $Label(v)$. Let $y$ be the predecessor of $x$ on the cycle. Since $x$ and $y$ have distinct labels, $Label(x)$ is a proper prefix of $Label(y)$, so $\langle y,x\rangle\in A_{\text{FAS}}$. Hence every directed cycle contains an arc removed from $G'=(V,A\setminus A_{\text{FAS}})$, and $G'$ is a DAG.
\end{proof}

\subsection{Proof of Theorem \ref{thm:graph_edit}}
\label{appendix:proof_graph_edit}

\begin{proof}
Let $\mathbf{c}^*$ and $\mathbf{c}_1^*$ be the unique manufacturing cost vectors that solve the linear systems for graphs $G$ and $G_1$, respectively. We need to show that for any original vertex $v \in V(G)$, its manufacturing cost is the same in both solutions, i.e., $\mathbf{c}^*(v) = \mathbf{c}_1^*(v)$.

The \func{GraphEdit} procedure modifies the graph locally for each feedback arc $\langle u,v\rangle \in A_{\text{FAS}}$. We next analyze the effect of a single such transformation. The cost equations for all vertices not involved in any feedback arc remain unchanged. We only need to examine the equations for vertices $u$, $v$, and the new auxiliary vertices $v_{sk}$ and $v_{sr}$.

In the original graph $G$, the manufacturing cost for vertex $v$ includes a term from $u$:
\begin{equation} \label{eq:proof_original_v}
\mathbf{c}^*(v) = \mathcal{C}_{dc}(v) + w(u,v) \cdot \mathbf{c}^*(u) + \sum_{z \in U_G^1(v) \setminus \{u\}} w(z,v) \cdot \mathbf{c}^*(z)
\end{equation}

In the transformed graph $G_1$, the initial cost state is denoted by $C_1$, and the arc $\langle u,v\rangle$ is replaced by the path $u \to v_{sk} \to v_{sr} \to v$. The new cost equations are:
\begin{align}
\mathbf{c}_1^*(v_{sk}) &= C_1(v_{sk}) + w(u,v_{sk}) \cdot \mathbf{c}_1^*(u) \label{eq:proof_vsk} \\
\mathbf{c}_1^*(v_{sr}) &= C_1(v_{sr}) + w(v_{sk},v_{sr}) \cdot \mathbf{c}_1^*(v_{sk}) \label{eq:proof_vsr} \\
\mathbf{c}_1^*(v) &= C_1(v) + w(v_{sr},v) \cdot \mathbf{c}_1^*(v_{sr}) \notag \\
&\quad + \sum_{z \in U_{G_1}^1(v) \setminus \{v_{sr}\}} w(z,v) \cdot \mathbf{c}_1^*(z) \label{eq:proof_new_v}
\end{align}

By construction in Algorithm \ref{alg:graphedit}, we have $C_1(v) = \mathcal{C}_{dc}(v)$ for any original vertex $v \in V(G)$, and the new vertices have zero initial cost, i.e., $C_1(v_{sk}) = C_1(v_{sr}) = 0$. The weights are assigned as $w(u,v_{sk}) = w(u,v)$, $w(v_{sk},v_{sr}) = 1$, and $w(v_{sr},v) = 1$. The upstream neighbors of $v$ remain unchanged except that $u$ is replaced by $v_{sr}$.

Substituting Eq. \ref{eq:proof_vsk} into Eq. \ref{eq:proof_vsr}, we get:
$$
\mathbf{c}_1^*(v_{sr}) = 1 \cdot \mathbf{c}_1^*(v_{sk}) = w(u,v) \cdot \mathbf{c}_1^*(u)
$$
Now, substituting this result for $\mathbf{c}_1^*(v_{sr})$ into Eq. \ref{eq:proof_new_v}:
$$
\mathbf{c}_1^*(v) = \mathcal{C}_{dc}(v) + 1 \cdot (w(u,v) \cdot \mathbf{c}_1^*(u)) + \sum_{z \in U_G^1(v) \setminus \{u\}} w(z,v) \cdot \mathbf{c}_1^*(z)
$$

Assuming $\mathbf{c}^*(z) = \mathbf{c}_1^*(z)$ for all upstream neighbors, this recursive dependency reduces correctly to an equation identical to the original cost equation for $v$ (Eq. \ref{eq:proof_original_v}). 

Since the fundamental linear structure and aggregated cost formulations remain equivalent for every original vertex $v \in V(G)$ in the transformed system, the unique solution to the system must also be identical. Thus, the manufacturing costs for all data products $p \in L(G) \subset V(G)$ are preserved.
\end{proof}

\subsection{Proof of Theorem \ref{thm:aggregation_cost_preservation}}
\label{appendix:proof_aggregation_cost_preservation}

\begin{proof}
    Denote $V_1 = V(G) \setminus V_{\text{term}}$ and $V_2 = V_{\text{term}}$. Let $\mathbf{d}_1 = (\mathbf{d}_{11}, \mathbf{d}_{12})^{\top}$ denote the cost vector corresponding to $C_1$, where $\mathbf{d}_{11}$ and $\mathbf{d}_{12}$ represent the costs associated with the nodes in $V_1$ and $V_2$, respectively. Similarly, let $\mathbf{d}_2 = (\mathbf{d}_{21}, \mathbf{d}_{22})^{\top} = (\mathbf{0}, \mathbf{d}_{22})^{\top}$ denote the cost vector corresponding to $C_{DAG}$. The corresponding weight matrix is written as
    \[ \mathbf{W} = \begin{pmatrix}
            \mathbf{W}_{11} & \mathbf{W}_{12} \\ \mathbf{W}_{21} & \mathbf{W}_{22}
        \end{pmatrix}. \]
    Let $P$ denote the length of the longest path in $G_{DAG}$. Then, according to the process of \func{RollUpApp}, we have
    \begin{align*}
        \begin{pmatrix}
            \mathbf{0} \\ \mathbf{d}_{22}
        \end{pmatrix} &= \left[\begin{pmatrix}
            \mathbf{W}_{11} & \mathbf{W}_{12} \\ \mathbf{0} & \mathbf{I}
        \end{pmatrix}^{\top}\right]^P \begin{pmatrix}
            \mathbf{d}_{11} \\ \mathbf{d}_{12}
        \end{pmatrix} \\
        &= \begin{pmatrix}
            (\mathbf{W}_{11}^{\top})^P \mathbf{d}_{11} \\ \mathbf{W}_{12}^{\top}\left(\mathbf{I} + \mathbf{W}_{11}^{\top} + \cdots + (\mathbf{W}_{11}^{\top})^{P - 1}\right) \mathbf{d}_{11} + \mathbf{d}_{12}
        \end{pmatrix}.
    \end{align*}
    Let $\mathbf{c}_1 = (\mathbf{c}_{11}, \mathbf{c}_{12})^{\top}$ be the manufacturing cost vector of $C_1$, and let $\mathbf{c}_2 = (\mathbf{c}_{21}, \mathbf{c}_{22})^{\top}$ be the manufacturing cost vector of $C_{DAG}$. Then the linear systems for $(G_1, C_1)$ and $(G_1, C_{DAG})$ are
    \begin{align*}
        & \begin{pmatrix}
            \mathbf{I} - \mathbf{W}_{11}^{\top} & -\mathbf{W}_{21}^{\top} \\
            -\mathbf{W}_{12}^{\top} & \mathbf{I} - \mathbf{W}_{21}^{\top}
        \end{pmatrix} \begin{pmatrix}
            \mathbf{c}_{11} \\ \mathbf{c}_{12}
        \end{pmatrix} = \begin{pmatrix}
            \mathbf{d}_{11} \\ \mathbf{d}_{12}
        \end{pmatrix}, \\
        & \begin{pmatrix}
            \mathbf{I} - \mathbf{W}_{11}^{\top} & -\mathbf{W}_{21}^{\top} \\
            -\mathbf{W}_{12}^{\top} & \mathbf{I} - \mathbf{W}_{21}^{\top}
        \end{pmatrix} \begin{pmatrix}
            \mathbf{c}_{21} \\ \mathbf{c}_{22}
        \end{pmatrix} = \begin{pmatrix}
            \mathbf{0} \\ \mathbf{d}_{22}
        \end{pmatrix}.
    \end{align*}
    Note that
    \[ (\mathbf{I} - \mathbf{W}_{11}^{\top})(\mathbf{I} + \mathbf{W}_{11}^{\top} + \cdots + (\mathbf{W}_{11}^{\top})^{P - 1}) = \mathbf{I} - (\mathbf{W}_{11}^{\top})^P = \mathbf{I}. \]
    Therefore,
    \[ \mathbf{c}_{21} = \mathbf{c}_{11} - \left(\mathbf{I} + \mathbf{W}_{11}^{\top} + \cdots + (\mathbf{W}_{11}^{\top})^{P - 1}\right) \mathbf{d}_{11}, \quad \mathbf{c}_{22} = \mathbf{c}_{12}. \]
    is a solution to the linear system corresponding to $(G_1, C_{DAG})$. Since the solution to this linear system is unique, this must be its unique solution. The equality $\mathbf{c}_{22} = \mathbf{c}_{12}$ therefore implies the conclusion of Theorem \ref{thm:aggregation_cost_preservation}.
\end{proof}

\subsection{Proof of Theorem \ref{thm:costapp_correctness}}
\label{app:proof_costapp_correctness}

% We provide the detailedproof of Theorem \ref{thm:costapp_correctness} below.

\begin{proof}
Let $\mathbf{c}^*$ denote the exact manufacturing cost (MC) vector for all data products $p \in L(G)$ in the original data pipeline defined by the initial state $(G, \mathcal{C}_{dc})$. By Theorem~\ref{thm:matrixapp_exact}, $\mathbf{c}^*$ represents the exact ground truth that satisfies the global linear system formulation. We demonstrate that the sequence of transformations applied in \func{CostApp} strictly preserves this exact result.

First, in Phase 1, \func{CostApp} isolates cyclic dependencies via the $\func{GraphEdit}$ operation, yielding a modified graph $G_1$ and a corresponding cost state $C_1$. According to Theorem~\ref{thm:graph_edit}, removing feedback arcs and introducing auxiliary vertices does not alter the cost flows toward the leaves. Consequently, the MC of any data product in the new system defined on $(G_1, C_1)$ remains identically $\mathbf{c}^*$.

Second, by executing \func{RollUpApp} on the acyclic subgraph $G_{DAG}$, \func{CostApp} iteratively propagates costs until they are completely aggregated onto the terminal vertices $V_{\text{term}}$, resulting in the cost state $C_{DAG}$. By Theorem~\ref{thm:aggregation_cost_preservation}, propagating costs along an acyclic subgraph strictly conserves the ultimate allocation to final DPs. Thus, the MC of DPs in the system $(G_1, C_{DAG})$ is completely equivalent to that in $(G_1, C_1)$, continuing to equal $\mathbf{c}^*$.

Finally, in Phase 2, the aggregated costs $C_{DAG}$ on terminal nodes are extracted as the reduced direct cost vector $\mathbf{d}_{\text{term}}$. The residual cyclic dependencies are captured by the reduced cost allocation matrix $\mathbf{W}_{\text{term}}$. The cost redistribution over $V_{\text{term}}$ is thus formulated as the reduced linear system $\mathbf{c}_{\text{term}} = \mathbf{d}_{\text{term}} + \mathbf{W}_{\text{term}}^T \mathbf{c}_{\text{term}}$. Since Theorem~\ref{thm:matrixapp_exact} guarantees that solving such a linear formulation yields exact results, executing \func{MatrixApp$^*$} on this reduced system computes the exact MCs for the system $(G_1, C_{DAG})$. As $L(G) \subseteq V_{\text{term}}$, extracting the entries corresponding to $L(G)$ precisely recovers $\mathbf{c}^*$, which concludes the proof.
\end{proof}

\section{Matrix Form Derivation}
\label{sec:matrix_form_derivation}

To derive the matrix form of MC from the scalar equations, we start with the definition of MC for a single vertex $v_i$:

$$ \mathcal{C}_{mc}(v_i) = \mathcal{C}_{dc}(v_i) + \sum_{v_j \in U_G^1(v_i)} w(v_j, v_i) \cdot \mathcal{C}_{mc}(v_j) $$
This equation holds for all $n$ vertices in the graph $V=\{v_1, \dots, v_n\}$. We can express this system of $n$ linear equations in matrix form as follows:

Let $\mathbf{c}$ be the $n \times 1$ column vector of manufacturing costs, where the $i$-th element is $\mathbf{c}_i = \mathcal{C}_{mc}(v_i)$. Let $\mathbf{d}$ be the $n \times 1$ column vector of direct costs, where the $i$-th element is $\mathbf{d}_i = \mathcal{C}_{dc}(v_i)$. Let $\mathbf{W}$ be the $n \times n$ weighted adjacency matrix, where $\mathbf{W}[i,j] = w(v_i, v_j)$ if an arc exists from $v_i$ to $v_j$, and 0 otherwise.

The summation term represents the total input cost for vertex $v_i$. Let's analyze the matrix-vector product $\mathbf{W}^\mathsf{T} \mathbf{c}$. The $i$-th element of this product vector is computed as:

$$ (\mathbf{W}^\mathsf{T} \mathbf{c})_i = \sum_{j=1}^{n} (\mathbf{W}^\mathsf{T})_{ij} \cdot \mathbf{c}_j = \sum_{j=1}^{n} \mathbf{W}[j,i] \cdot \mathbf{c}_j = \sum_{j=1}^{n} w(v_j, v_i) \cdot \mathcal{C}_{mc}(v_j)$$

% \sum_{j=1}^{n} w(v_j, v_i)对应矩阵W中的一列值，\mathcal{C}_{mc}(v_j)对应一个成本向量。矩阵中应该是行向量与成本向量相乘，所以W应该是转置。直观理解： W_i表示i节点分到下游的所有比例，W^T_i表示i节点被上游分到的所有比例
Since $w(v_j, v_i)$ is non-zero only when $v_j$ is a direct predecessor of $v_i$ (i.e., $v_j \in U_G^1(v_i)$), this sum is exactly equivalent to the input cost term in the original scalar equation.

We can now rewrite the scalar equation for each vertex $v_i$ using this vector notation:

$$ \mathbf{c}_i = \mathbf{d}_i + (\mathbf{W}^\mathsf{T} \mathbf{c})_i $$
Since this relationship holds for every element $i=1, \dots, n$, we can express it for the entire system of equations in vector form:
$$
\begin{pmatrix} \mathbf{c}_1 \\ \mathbf{c}_2 \\ \vdots \\ \mathbf{c}_n \end{pmatrix}
=
\begin{pmatrix} \mathbf{d}_1 \\ \mathbf{d}_2 \\ \vdots \\ \mathbf{d}_n \end{pmatrix}
+
\begin{pmatrix} (\mathbf{W}^\mathsf{T} \mathbf{c})_1 \\ (\mathbf{W}^\mathsf{T} \mathbf{c})_2 \\ \vdots \\ (\mathbf{W}^\mathsf{T} \mathbf{c})_n \end{pmatrix}
$$
This simplifies to the final matrix equation:
$$ \mathbf{c} = \mathbf{d} + \mathbf{W}^\mathsf{T} \mathbf{c} $$

\section{The \func{label} Algorithm}
\label{appendix:label}

The \func{Label} algorithm is shown in Alg.~\ref{alg:label}. Roots receive unique integer labels, while every other vertex starts with $\top$, a value larger than any valid label. A vertex adopts a candidate only when it is smaller than its current label. After an update, it appends an outgoing-edge suffix and sends the new labels downstream. The process stops when no label changes.

\begin{algorithm}[htb]\small
    \SetKwBlock{emptyOp}{ }{end}
    \SetKwBlock{compute}{\func{compute}(msgs)}{end}
    \caption{\func{Label} (Pregel-style pseudocode)}
    \label{alg:label}
    \KwIn{Graph $G=(V, A)$.}
    \KwOut{The final label $L(v)$ for each vertex $v \in V$.}

    \tcc{\darkBlue{Initialization: Assign unique integer labels to roots}}
    \textbf{foreach} root vertex $v_i \in R(G)$ in parallel: $L(v_i) \gets i$\;
    {\color{purple}\textbf{foreach} non-root vertex $v \in V \setminus R(G)$: $L(v) \gets \top$ \tcp*{Larger than any valid label}}
    {\color{purple}\textbf{foreach} root $v_i$ and its $j$-th downstream neighbor $v_j$: send $i+``.j"$ to $v_j$}\;

    \While{any vertex is active}{
        \textbf{foreach} active vertex $v$ \textbf{do} in parallel \emptyOp {
            \compute {
                $label\_updated \gets \text{false}$\;
                $L_{best\_incoming} \gets L(v)$\;
                
                \ForEach{message containing label $L_{msg}$}{
                    \If{$L_{msg} <_{lo} L_{best\_incoming}$}{
                        $L_{best\_incoming} \gets L_{msg}$\;
                    }
                }
                
                \If{$L_{best\_incoming} <_{lo} L(v)$}{
                    $L(v) \gets L_{best\_incoming}$\;
                    $label\_updated \gets \text{true}$\;
                }

                \If{$label\_updated$}{
                    \tcc{\darkBlue{Propagate new labels to downstream neighbors}}
                    \ForEach{$j$-th neighbor $v_j \in D_G^1(v)$}{
                        send message with label $L(v) + ``.j"$ to $v_j$\;
                    }
                }
                \func{voteToHalt}()\; \tcp{\darkBlue{Reactivated upon receiving a message}}
            }
        }
    }
    \Return final labels $L(v)$ for all $v \in V$\;
\end{algorithm}

\section{A Formulation of \func{GraphEdit}}
\label{appendix:graphedit}

Alg.~\ref{alg:graphedit} presents the pseudocode for the \func{GraphEdit} algorithm. This algorithm takes as input a directed graph $G$ and a feedback arc set $A_{\text{FAS}}$. For each arc in $A_{\text{FAS}}$, it removes the arc from $G$ and introduces two new auxiliary vertices: a sink vertex and a source vertex. The original arc is replaced by three new arcs: one from the original source to the new sink, one from the new source to the original target, and one connecting the new sink to the new source. The weights of these new arcs are assigned as specified in the algorithm.

\begin{algorithm}[t]\small
    \caption{\func{GraphEdit}}
    \label{alg:graphedit}
    \KwIn{Directed graph $G = (V, A)$, direct costs $\mathcal{C}_{dc}(v)$ for all $v \in V$, feedback arc set $A_{\text{FAS}} \subseteq A$}
    \KwOut{Edited graph $G_1 = (V_1, A_1)$, new cost states $C_1(v)$ for all $v \in V_1$, set of new sink vertices $V_{sk}$, set of new source vertices $V_{sr}$}

    $G_1 \gets G$, $C_1 \gets \mathcal{C}_{dc}$, $V_{sk} \gets \emptyset$,$V_{sr} \gets \emptyset$\;
    \ForEach{arc $f = \langle u,v\rangle \in A_{\text{FAS}}$}{
        Remove arc $f$ from $G_1$\;
        Create new vertex $v_{sk}$ and add to $V_1$ and $V_{sk}$\;
        Create new vertex $v_{sr}$ and add to $V_1$ and $V_{sr}$\;
        Add new arc $\langle u,v_{sk}\rangle$ with weight $w(u,v)$ to $A_1$\;
        Add new arc $\langle v_{sk},v_{sr}\rangle$ with weight $1$ to $A_1$\;
        Add new arc $\langle v_{sr},v\rangle$ with weight $1$ to $A_1$\;
        $C_1(v_{sk}) \gets 0$\;
        $C_1(v_{sr}) \gets 0$\;
    }
    \Return $(G_1, C_1, V_{sk}, V_{sr})$ \;
\end{algorithm}

\section{The \func{RollUpApp$^*$} Algorithm}
\label{appendix:simpleapp_star}

We present the \func{RollUpApp$^*$} algorithm in Alg. \ref{alg:simpleapp*-pregel}, a Pregel-style pseudocode variant of \func{RollUpApp} (Alg. \ref{alg:simpleapp-pregel}). This variant utilizes a stateful cost map $C(v)$ at each vertex $v$ to accumulate costs for multiple keys simultaneously. During each superstep, active vertices aggregate incoming cost maps from upstream neighbors into their local cost map and then propagate weighted versions of this aggregated map to their downstream neighbors. This approach allows for efficient handling of multiple cost dimensions in a single pass through the graph.

\begin{algorithm}[t]\small
    \SetKwBlock{CAOp}{ }{end}
    \SetKwBlock{compute}{\func{compute}(msgs)}{end}
    \caption{\func{RollUpApp$^*$} (Pregel-style pseudocode)}
    \label{alg:simpleapp*-pregel}
    \KwIn{Graph $G=(V, A, w)$, stateful cost map $C(v)$}
    \KwOut{$C^{out}(v)$}

    Each vertex $v \in V$ sets $C^{out}(v) \gets C(v)$ \tcp*{\darkBlue{Initialization}}

    \While{any non-leaf vertex $v$ is active}{
        
        \textbf{foreach} active vertex $v$ \textbf{do} in parallel \CAOp {
            \tcc{\darkBlue{\func{CR} operation in Pregel style}}
            \compute {

                \textbf{foreach} $m \in msgs$ \textbf{do}  \CAOp {
                    \textbf{foreach} $key \in m.\text{getKeys}()$ \textbf{do}  \CAOp {
                        update $C^{out}(v)[key] \gets C^{out}(v)[key] + m.\text{getValue}(key)$\;
                    }
                }
                sends $w(v, v') \cdot C^{out}(v)$ to each $v' \in D_G^1(v)$  \tcp*{\darkBlue{Send weighted map to downstream neighbors}}
                
                updates $C^{out}(v) \gets \emptyset$ \;
            
                \func{voteToHalt}()\;

                \tcp{\darkBlue{Reactivated upon receiving a message}}
            }
        }
    }
    \Return $C^{out}(v)$
\end{algorithm}

% Once the vector $\mathbf{c}_{sk}$ is known, the final manufacturing costs of the data products, $\mathbf{c}_{dp}$, can be computed directly by substituting the result into Eq. \ref{eq:dp_cost_reordered}. This is a straightforward matrix-vector multiplication with no further matrix inversion required:
% $$
% \mathbf{c}_{dp} = \mathbf{o}_{dp} + \mathbf{W}_{sk,dp}^T \mathbf{c}_{sk}
% $$
% This two-step procedure—solving a smaller system for sink nodes and then directly calculating DP costs—is computationally more efficient than solving the full $(|V_{sk}| + |L(G)|) \times (|V_{sk}| + |L(G)|)$ system directly. The complexity of the dominant matrix inversion step is reduced from $O((|V_{sk}| + |L(G)|)^3)$ to $O(|V_{sk}|^3)$, followed by a less expensive matrix multiplication. This optimization is particularly effective when the number of data products is large relative to the number of feedback arcs.

\section{Extended Future Work}
\label{app:future_work}

We discuss some promising directions to inspire future work.

% \noindent\textit{(1) Explainable product costing in data pipelines.}
\paragraph{(1) Explainable product costing in data pipelines.}
For fine-grained cost optimization, enterprises require a detailed understanding of how each upstream data entity contributes to the cost of a final data product. This can be achieved by maintaining a map that tracks the propagation of monetary costs from each component through the pipeline. This method yields a detailed breakdown of each data product, attributing its total cost to its constituent components and enabling explainability. This idea will incur additional storage and communication overhead. We leave the concrete design and optimization of solutions to this explainability as future work.

% \noindent\textit{(2) Exploration of potential applications of \func{CostApp}.} 
\paragraph{(2) Exploration of potential applications of \func{CostApp}.} 
Although motivated by data pipelines, \func{CostApp}'s core capability, which is solving large-scale sparse linear systems with cyclic structures, has the potential to apply to broader problems. One potential application is solving Bellman equations in Markov Decision Processes (MDPs) with sparse state transition matrixs. In an MDP, the value function $\mathbf{v}$ satisfies the Bellman equation $\mathbf{v} = \mathbf{R} + \gamma \mathbf{P} \mathbf{v}$, where $\mathbf{R}$ is the reward vector, $\mathbf{P}$ is the state transition matrix, and $\gamma$ is the discount factor. Rearranging this equation yields $\mathbf{v} = (\mathbf{I} - \gamma \mathbf{P})^{-1} \cdot \mathbf{R}$, which aligns with the form in \S\ref{sec:matrix_app}. A common method for evaluating the value function is through value iteration \cite{sutton2018reinforcement}, which can be proved to be equivalent to the cost iteration of \func{RollUpApp} for $L(G)$. The experimental results in \S\ref{sec:experiments} suggest that \func{CostApp} obtains exact solutions faster than \func{RollUpApp} achieves solutions of acceptable accuracy, indicating its potential to accelerate value function computation in MDPs with sparse transitions and large state spaces. Another potential application is input-output economic modeling \cite{leontief1986input,leontief1974sructure,ten2009input}, where industries are interdependent with cyclic supply chains. In this model, $\mathbf{x}$ denotes the total production required and can be obtained by resolving a Leontief inverse: $\mathbf{x} = (\mathbf{I} - \mathbf{A})^{-1} \cdot \mathbf{d}$, where $\mathbf{d}$ is a demand vector and $\mathbf{A}$ is an input-output matrix, which is naturally large-scale and sparse. Thus, \func{CostApp} can be directly applied to efficiently compute $\mathbf{x}$ on large-scale economic networks without approximations. These examples demonstrate that \func{CostApp} can inspire the optimization of numerical solvers for large-scale sparse linear systems, making it valuable future work to explore more potential applications.

\end{document}